\DeclareMathAlphabet{\pazocal}{OMS}{zplm}{m}{n}
\newcommand\ackname{Acknowledgements}
  \newenvironment{acknowledgements}{%
      \titlepage
      \null\vfil
      \@beginparpenalty\@lowpenalty
      \begin{center}%
        \bfseries \ackname
        \@endparpenalty\@M
      \end{center}}%
     {\par\vfil\null\endtitlepage}
\newtheorem{theorem}{Theorem}[section]
\newtheorem{lemma}[theorem]{Lemma}
\newtheorem{corollary}[theorem]{Corollary}
\newtheorem{remark}[theorem]{Remark}
\newtheorem{Def}[theorem]{Definition}
\newenvironment{proof}[1][Proof]{\begin{trivlist}
\item[\hskip \labelsep {\bfseries #1}]}{\end{trivlist}}
\newenvironment{notation}[1][Notation]{\begin{trivlist}
\item[\hskip \labelsep {\bfseries #1}]}{\end{trivlist}}
\newcommand{\halmos}{\quad\hfill\mbox{$\Box$}}
\newcommand*{\IM}{\mathbb{M}}
\newcommand*{\IF}{\mathbb{F}}
\newcommand*{\IE}{\mathbb{E}}
\newcommand*{\IQ}{\mathbb{Q}}
\newcommand*{\IR}{\mathbb{R}}
\newcommand*{\IN}{\mathbb{N}}
\newcommand{\F}{\mathcal{F}}
\newcommand{\MC}{\pazocal{MC}}
\newcommand{\E}{\pazocal{E}}
\numberwithin{equation}{section}
\begin{document}

%\begin{titlepage}
\begin{center}%$\mbox{}$\\
%\vspace*{1.5cm}
{\Large {\bf Portfolio Optimization in Affine Models with Markov Switching}}\\
%\vspace*{1cm}
\singlespacing
{\large Marcos Escobar\footnote{Department of Mathematics, Ryerson University, 350 Victoria St., Toronto ON M5B 2K3 Canada}, Daniela Neykova\footnote{Chair of Mathematical Finance, Technische Universit\"{a}t M\"{u}nchen, Parkring 11, 85748 Garching-Hochbr\"{u}ck, Germany. Email: daniela.neykova@tum.de. (Corresponding author)}, Rudi Zagst\footnote{Chair of Mathematical Finance, Technische Universit\"{a}t M\"{u}nchen, Parkring 11, 85748 Garching-Hochbr\"{u}ck, Germany}}\\
\today
\end{center}
{\bf Abstract}\\
We consider a stochastic factor financial model where the asset price process and the process for the stochastic factor depend on an observable Markov chain and exhibit an affine structure. We are faced with a finite time investment horizon and derive optimal dynamic investment strategies that maximize the investor's expected utility from terminal wealth. To this aim we apply Merton's approach, as we are dealing with an incomplete market. Based on the semimartingale characterization of Markov chains we first derive the HJB equations, which in our case correspond to a system of coupled non-linear PDEs. Exploiting the affine structure of the model, we derive simple expressions for the solution in the case with no leverage, i.e. no correlation between the Brownian motions driving the asset price and the stochastic factor. In the presence of leverage we propose a separable ansatz, which leads to explicit solutions in this case as well. General verification results are also proved. The results are illustrated for the special case of a Markov modulated Heston model.
%\end{titlepage}
%\newpage
%\pagenumbering{arabic}
%----------------------------------------------------------------------
%----------------------------------------------------------------------
\section{Introduction}
\label{SecIntr}
In this paper we derive optimal investment strategies by maximizing the expected utility from terminal wealth under a very flexible model, influenced simultaneously by some continuous stochastic factor and a Markov chain.\\
 The utility maximization problem was first stated and solved in continuous time for the Black-Scholes model in \cite{Merton1969} and \cite{Merton1971}. Since then many authors have further developed this theory for more sophisticated and realistic market models. One important extension is the stochastic modeling of the asset returns volatility, as the constant Black-Scholes volatility cannot reflect important empirical observations, such as asymmetric fat-tailed stock return distributions and volatility clustering (see e.g. \cite{Rubinstein1985}, \cite{Engle2001} for detailed empirical studies). Some of the most popular examples of stochastic volatility models include the affine models by Stein\&Stein and Heston, presented in \cite{Stein1991} and \cite{Heston1993}, respectively. Optimal investment rules for a continuous time model with an additional stochastic factor that follows a diffusion process, are derived, e.g., in \cite{Zaripop2001}. \cite{Kraft2005} proves rigorously their validity in the Heston model.\\
Although the incorporation of stochastic volatility makes the asset price modeling more realistic, it does not capture long-term macroeconomic developments. Such fundamental factors can be described by Markov chains, where each state of the Markov chain describes a different market situation, e.g. a crisis or a booming economy. A Markov switching autoregressive model was first introduced and empirically motivated in \cite{Hamilton1989}. A Markov switching Black-Scholes model has been applied for utility maximization e.g. in \cite{Bauerle2004} and \cite{Sotomayor2009}, where the corresponding Hamilton-Jacobi-Bellman (HJB) Equations are stated and optimal controls are derived.\\
Our contribution is the combination of both sources of randomness - a continuous stochastic factor and a Markov chain - in the context of portfolio optimization. To assure the analytical tractability of our model, we assume an affine structure. Such models have been considered recently for derivative pricing. \cite{Elliott2007} derive pricing formulas for volatility swaps and \cite{Mitra2010} and \cite{Papanicolaou2013} apply perturbation methods for option pricing under different Markov switching affine model. Furthermore, empirical confirmations for the relevance of this class of models can be found in \cite{Choi2013}, \cite{Durham2013}.\\
As far as we know, we are the first to derive the optimal portfolio strategy and the corresponding value function for models exhibiting this rich stochastic structure. Our derivations are based on the semimartingale characterization of Markov chains, which is a powerful tool, although not so popular in the literature on Markov chains. It allows us to state the corresponding HJB equations, which in this case result in a system of coupled PDEs. We manage to solve it up to an expectation over the Markov chain, which is easily and quickly calculated. We even allow for instantaneous correlation between the stochastic factor and the asset price process, which is in accordance with empirical observations (see \cite{Engle2001}). As a byproduct of our computations we obtain a version of the famous Feynman Kac Theorem that holds for Markov modulated stochastic processes and can be used to solve systems of coupled PDEs in other applications. Furthermore, we prove a very useful verification result, that reduces the case with Markov switching to the one with deterministic coefficients and is thus very easy to check. Finally we contribute by an analysis of the results from an economic point of view.\\
The remaining of this paper is organized as follows. Section \ref{SecMod} formulates precisely the model we are working with and the optimization problem. Afterwards, Section \ref{SecHJB} gives an overview over the methodology applied for solving it. The obtained solutions are presented in Section \ref{SecSol}. Numerical implementations for the example of the Markov modulated Heston model and analysis of the results can be found in Section \ref{SecHeston}. Section \ref{SecConcl} concludes.

%----------------------------------------------------------------------
%----------------------------------------------------------------------
\section{Model and optimization problem}
\label{SecMod}
We consider a continuous-time financial market with a riskless investment opportunity (bank account) and one risky asset. The corresponding price processes are denoted by $\{P_0(t)\}_{t\in[0,T]}$ and $\{P_1(t)\}_{t\in[0,T]}$. The dynamics of these price processes are influenced by a stochastic factor $X$ and an observable continuous-time Markov chain $\MC$. For standard definitions concerning Markov chains and intensity matrices see e.g. \cite{Yin1998}, p.16ff. We assume that the processes follow a special model we call a Markov modulated affine model, which is more precisely defined below:
%----------------------------------
\begin{Def}[Markov modulated affine model (MMAF)]$\mbox{}$\\
The model is stated on a filtered probability space $(\Omega,\F,\IQ,\IF)$, where $\IQ$ denotes the real world measure. Let $\MC$ be a stationary continuous-time homogeneous Markov chain with finite state space $\E=\{e_1,\ldots,e_l\}$ and intensity matrix $Q=\{q_{ij}\}_{i,j=1,\ldots,l}\in\IR^{l\times l}$, where $e_i$ denotes the $i$-th unit vector in $\IR^l$. Further, denote by $W_P$ and $W_X$ two correlated one-dimensional Brownian motions, independent of the Markov chain. A financial market model is called a Markov modulated affine model if the bank account and the traded asset develop according to the following dynamics : 
\begin{align}
\begin{aligned}
	&\text{d}  P_0(t)=P_0(t)r{ \big(\MC(t)\big)}\text{d}t\\
	&\text{d} P_1(t)=P_1(t)\Big[r\big(\MC(t)\big)+\underbrace{\gamma\big(X(t),\MC(t)\big)\sigma_P\big(X(t),\MC(t)\big)}_{=:\lambda\left(X(t),\MC(t)\right)}\text{d} t\\
&+\sigma_P{ \big(X(t),\MC(t)\big)}\text{d} W_P(t)\Big]\\
	&\text{d}  X(t)=\mu_X(X(t),\MC(t))\text{d}t +\sigma_X(X(t),\MC(t))\text{d} W_X(t)\\
	&\text{d}\langle W_P, W_X\rangle(t)=\rho\text{d}t,
	\end{aligned}
	\label{GeneralModel}
	\end{align}
	with initial values $P_0(0)=p_0$, $P_1(0)=p_1$, $ X(0)=x_0$ and the following specifications called (MMAF):
%\begin{numcases}{(MMAF):=}
%  \gamma^2(x,e)&$=\Gamma^{(1)}(e)+\Gamma^{(2)}(e)x$\label{A1}\\
%	\mu_X(x,e)&$=\mu_X^{(1)}(e)+\mu_X^{(2)}(e)x$\label{A2}\\
%	\sigma_X^2(x,e)&$=\Sigma_X^{(1)}(e)+\Sigma_X^{(2)}(e)x$\label{A3}\\
%	\rho\gamma(x,e)\sigma_X(x,e)&$=\zeta^{(1)}(e)+\zeta^{(2)}(e)x,$\label{A4}
%\end{numcases}
	\begin{align}
	\begin{aligned}
	\gamma^2(x,e_i)&=\Gamma^{(1)}(e_i)+\Gamma^{(2)}(e_i)x\\
	\mu_X(x,e_i)&=\mu_X^{(1)}(e_i)+\mu_X^{(2)}(e_i)x\\
	\sigma_X^2(x,e_i)&=\Sigma_X^{(1)}(e_i)+\Sigma_X^{(2)}(e_i)x\\
	\rho\gamma(x,e_i)\sigma_X(x,e_i)&=\zeta^{(1)}(e_i)+\zeta^{(2)}(e_i)x,
%(AF):=\left\{\begin{array}{cc}
%\gamma^2(x,e)&=\Gamma^{(1)}(e)+\Gamma^{(2)}(e)x\\
%	\mu_X(x,e)&=\mu_X^{(1)}(e)+\mu_X^{(2)}(e)x\\
%	\sigma_X^2(x,e)&=\Sigma_X^{(1)}(e)+\Sigma_X^{(2)}(e)x\\
%	\rho\gamma(x,e)\sigma_X(x,e)&=\zeta^{(1)}(e)+\zeta^{(2)}(e)x,\end{array}\right.
	\end{aligned}
	\tag{MMAF}
	\label{MMAF}
	\end{align}
for all $(x,e_i)\in D_X\times\E$, where $D_X\subseteq\IR$ denotes the definition set of process $X$ and $\mu_X^{(j)}$,  $\zeta^{(j)}:\E\rightarrow\IR$; $\Gamma^{(j)}$, $\Sigma_X^{(j)}:\E\rightarrow\IR_{\geq0}$, for $j=1,2$, $r$, and $\sigma_P:D_X\times \E\rightarrow\IR$ are deterministic functions. Observe that $\lambda\big(X(t),\MC(t)\big)$ stands for the excess return of the risky asset and $\gamma\big(X(t),\MC(t)\big)$ corresponds to the market price of risk.\\
Furthermore, we denote the filtration generated by the Markov chain by $\IF^{\MC}=\{\F^{\MC}(t)\}_{t\in[0,T]}$.\\
Note that this definition and all results presented in the sequel can be easily extended to additional time-dependence of the parameters.%We denote the associated probability space by $\big(\Omega, \IP, \IF, \{\F_t\}_{t}\big)$.
\end{Def}
%--------------------------------
\begin{remark}
Note that the last assumption in (\ref{MMAF}) is trivially fulfilled for $\rho=0$, and for $\rho\not=0$ it implies that:
\begin{align}
\sigma_X(x,e_i)=b(e_i)\gamma(x,e_i)\Rightarrow \gamma(x,e_i)=\frac{1}{b(e_i)}\sigma_X(x,e_i),
%\Sigma_X^{(1)}(e)+\Sigma_X^{(2)}(e)x=b^2(e)\Big(\Gamma^{(1)}(e)+\Gamma^{(2)}(e)x\Big),
\label{A4new}
\end{align}
for some deterministic function $b:\E\rightarrow\IR$. So, for the case with correlation the market price of risk should be a multiple of the volatility of the stochastic factor.
\end{remark}
%--------------------------------
\begin{remark}
From now on throughout the whole paper we denote the set of indices for the states of $\MC$ by $E=\{1,2,\ldots,l\}$ and define process $MC=\{MC(t)\}_{t\geq0}$ by $MC(t):=\sum_{i=1}^{l}i1_{\{\MC(t)=e_i\}}$.
%The jump times of the Markov chains are denoted by $\{\tau(n)\}_{n\in\IN_0}$ and $T_i(t)$ stays for the occupation time of state $e_i$, $i\in\{1,\ldots,l\}$, untill time $t$. 
\label{RemMCNotation}
\end{remark}
%----------------------------------
Observe that the drift and the diffusion term for the stochastic factor, as well as the squared market price of risk (the sharp ratio) are affine in $X$. This property explains the name of the model. We will see in Section \ref{SecHJB} that as a consequence of this, all terms in the corresponding HJB equation have affine structure, which allows for an exponentially affine ansatz for its solution. One important example for this class of models can be obtained by adding Markov switching (MS) to the famous Heston model. This example will be considered in detail in Section \ref{SecHeston}.\\
Note that the two additional sources of randomness - the Markov chain and the stochastic factor - make the market model incomplete. That is why the martingale approach for utility maximization is not applicable and we will rely on the dynamic programming approach.\\
%----------------------------------
We assume that the investor can observe not only the asset prices but also the value of the stochastic factor $X$ and the state of the Markov chain $\MC$, and makes her investment decision based on all this information. Her strategy is stated in terms of $\pi={\pi(t)}_{t\in[0,T]}$, the relative portion of wealth invested in the risky asset. As we consider only self-financing trading strategies, the SDE for the wealth process $V^{\pi}=\{V^{\pi}(t)\}_{t\in[0,T]}$ generated by strategy $\pi$ (when investment starts at time $0$ with initial wealth $v_0$) is given by:
\begin{align}
\begin{aligned}
\text{d}V^{\pi}(t)=&\underbrace{V^{\pi}(t)\Big[r\big(\MC(t)\big)+\lambda\big(X(t),\MC(t)\big)\pi(t)\Big]}_{=:\mu_V\big(V^{\pi}(t),X(t),\MC(t),\pi(t)\big)}\text{d}t\\
	&+\underbrace{V^{\pi}(t)\pi(t)\sigma_P\big(X(t),\MC(t)\big)}_{=:\sigma_V\big(V^{\pi}(t),X(t),\MC(t),\pi(t)\big)}\text{d}W_P(t)\big]\\
	V^{\pi}(0)=&v_0.
\end{aligned}
\label{SDEV}
	\end{align}
Recalling the exponential form of the solution to this SDE, we can conclude that $V^{\pi}(t)>0$, for all $t\in[0,T]$.\\
%----------------------------------
The risk preferences of the investor are characterized by the power utility function:
$$U(v)=\frac{v^{\delta}}{\delta}, \delta< 1,\delta\not=0.$$
Observe that the Arrow-Pratt measure of relative risk aversion if given by $$\frac{vU''}{U'}=1-\delta,$$
so the parameter $\delta$ describes the risk aversion of the investor: $\delta\rightarrow 1$ describes a risk-neutral investor and the smaller $\delta$, the more risk-averse the investor. This is confirmed by Figure \ref{fig:utility}, which shows the utility function for different parameter values. It can be seen that the higher $\delta$, the bigger the weight of high wealth levels and the smaller the weight of low wealth values. This means that an investor with a $\delta$ close to 1 is more willing to take risks compared to an investor with a lower $\delta$. This relationship can be recognized even more clearly for negative values for $\delta$, where a very high negative weight is assigned to wealth levels close to zero whereas very high wealth levels improve only marginally the utility of the investor. That is why utility functions with a negative parameter $\delta$ are used to describe very risk averse investors.\\
\begin{figure}[h]
    \centering
    \includegraphics[trim=2cm 6.9cm 2cm 7cm,clip,scale=0.6]{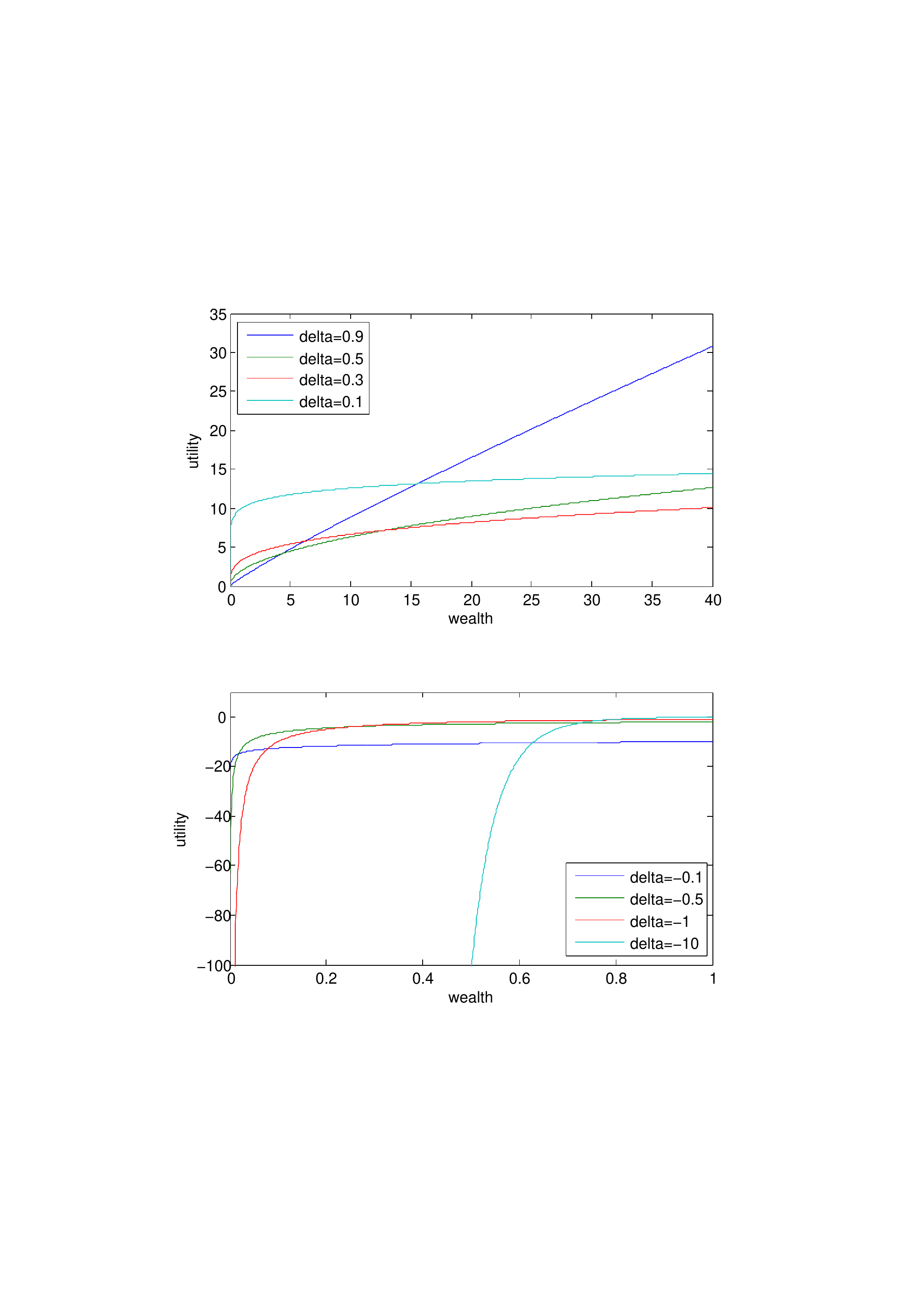}
    \caption{Power utility function $U(v)=\frac{v^{\delta}}{\delta}$ for different parameters $\delta\leq1$.}
    \label{fig:utility}
\end{figure}
The investor aims at maximizing her expected utility from terminal wealth. We state this formally in the following optimization problem for all $(t,v,x,e_i)\in[0,T]\times\IR_{\geq0}\times D_X\times\E$:
\begin{align}
	\begin{aligned}
J^{(t,v,x,e_i)}(\pi)&:=\IE_{\IQ}\Big[U(V^{\pi}(T))|V^{\pi}(t)=v,X(t)=x,\MC(t)=e_i\Big]\\
	\Phi(t,v,x,e_i)&:=\max_{\pi\in\Lambda(t,v)}J^{(t,v,x,e_i)}(\pi),
	\label{optProblem}
	\end{aligned}
	\end{align}
	where the maximal expected utility $\Phi$ is called the value function and $\Lambda(t,v)$ denotes the set of all admissible trading strategies, starting at time $t$ with initial wealth $v$:
	\begin{align*}
	\Lambda(t,v):=\Big\{\pi\Big| \pi(s)\in\IR,V^{\pi}(t)=v,&V^{\pi}(s)\geq0,\forall s\in[t,T],\IE_{\IQ}\Big[U\big(V^{\pi}(T)\big)^{-}|\F_t\Big]<\infty\Big\}.
	\end{align*}
Strictly speaking, one should write $\Lambda(t,v,x,e_i)$, however we omit the remaining arguments for better readability. Observe that in our case the positivity of the wealth process is fulfilled for all self-financing trading strategies. The last condition in the definition of $\Lambda(t,v)$ requires that the negative part of the terminal utility is integrable, excluding strategies that might lead with a positive probability to infinite negative utility. However, if one can show that the value function is finite, then this condition is trivially fulfilled for the corresponding optimal portfolio, as the power utility is either positive or negative on its whole definition set. So, the negative part is either zero or equals the finite value function. Hence, in the considered case, we do not need to check additionally the admissibility of the optimal trading strategies.\\
%-----------
Observe that now our optimization problem has not only one state variable, as in the classical Merton's optimization problem, but three, corresponding to the  three distinguishable sources of randomness in our model.\\	
Before presenting the solution approach to this optimization problem we refer the reader to Appendix \ref{SecPrel}, where we summarize some general results about Markov chains and Markov modulated It\^o diffusions, which we will need later on.
%----------------------------------------------------------------------
%----------------------------------------------------------------------
Now we are equipped with all we need to continue with the solution of Problem (\ref{optProblem}).
%----------------------------------------------------------------------
%----------------------------------------------------------------------
\section{HJB approach}
\label{SecHJB}
As we are dealing with an incomplete market, we apply the Hamilton-Jacobi-Bellman (HJB) approach for portfolio optimization. It is briefly described in what follows. The method is based on the so-called Bellman's principle, which states that an optimal control is characterized by the property that whatever the initial action is, the remaining decisions are optimal given the outcome of the first one. This is formally expressed in the following equation:
\begin{align}
\begin{aligned}
\Phi(t,v,x,e_i)=\max_{\pi\in \Lambda(t,v)}\IE_{\IQ}\Big[&\Phi\big(t+h,V^{\pi}(t+h),X(t+h),\MC(t+h)\big)\\
&\Big|V^{\pi}(t)=v,X(t)=x,\MC(t)=e_i\Big].
\end{aligned}
\label{BellmanPrinciple}
\end{align}
Applying the  It\^o's formula for Markov modulated  It\^o diffusions to function $\Phi\big(t+h,V^{\pi}(t+h),X(t+h),\MC(t+h)\big)$, heuristically exchanging the expectation and the integration, and taking the limit for $h\rightarrow0$ in Equation (\ref{BellmanPrinciple}) leads to the following system of coupled HJB equations for the value function $\Phi$, for all $(t,v,x)\in [0,T]\times \IR_{\geq0}\times D_X$: 
\begin{align}
\begin{aligned}
\max_{\pi\in\IR}\{\pazocal{L}(e_i,\pi)\Phi(t,v,{x},e_i)\}&=-\sum_{j=1}^{l}q_{ij}\Phi(t,v,{x},e_j),\forall i\in E\\
	\Phi(T,v,{x},e_i)&=U(v)=\frac{v^{\delta}}{\delta},\forall i\in E,
	\label{HJB}
	\end{aligned}
	\end{align}
where the differential operator $\pazocal{L}(e_i,\pi)$ is given for each $e_i\in \E$ by:
	\begin{align*}
\pazocal{L}(e_i,\pi)\Phi(t,v,{x},e_i):=&{ \Phi_t}+{ \mu_V}(v,x,e_i,\pi){ \Phi_v}+{ \mu_{X}}(x,e_i){ \Phi_{x}} +{ \frac{1}{2}\sigma^2_V}(v,x,e_i,\pi){ \Phi_{vv}}\\
&+{ \frac{1}{2}\sigma_{X}^2}(x,e_i){ \Phi_{xx}}+{ \rho\sigma_{V}}(v,x,e_i,\pi){ \sigma_{X}}(x,e_i){ \Phi_{vx}}.
	\end{align*}
Note that from now on we adopt the following notation for partial derivatives: $F_a=\frac{\partial}{\partial a}F$, $F_{aa}=\frac{\partial^2}{\partial a\partial a}F$, $F_{ab}=\frac{\partial^2}{\partial a\partial b}F$, for any function $F(a,b,\ldots)$.\\
Observe that instead of dealing with only one PDE as in the case with classical  It\^o diffusions, we need to solve a system of coupled partial differential equations.\\
First of all we state the first-order condition for an interior optimum:
\begin{align*}
\frac{\partial}{\partial \pi}\big\{\pazocal{L}(e_i,\pi)\Phi(t,v,{x},e_i)\big\}=&v\lambda(x,e_i)\Phi_v+v^2\pi\sigma^2_P(x,e_i)\Phi_{vv}+\rho v\sigma_x(x,e_i)\sigma_P(x,e_i)\Phi_{vx}=0,
	\end{align*}
where we have inserted the definition of $\mu_V$ and $\sigma_V$ from Equation (\ref{SDEV}). So, the following candidate for the optimal solution can be derived:
	\begin{align}
	\begin{aligned}
	\bar\pi(t)&=\bar{\pi}\big(t,X(t),\MC(t)\big)=-\frac{\lambda\Phi_v+\rho\sigma_X\sigma_P\Phi_{vx}}{V^{\bar\pi}(t)\sigma^2_P \Phi_{vv}}\Big|_{\big(t,X(t),\MC(t)\big)}, \forall t\in[0,T].
	\end{aligned}
	\end{align}
Then, we state the following ansatz for the value function, in accordance with the applied utility function:
	\begin{align}
	\Phi(t,v,x,e_i)=\frac{v^{\delta}}{\delta}f(t,x,e_i), \forall (t,v,x,e_i)\in [0,T]\times \IR_{\geq0}\times D_X\times \E,
	\label{ansatz}
	\end{align}
	for some function $f(t,x,e_i): [0,T]\times D_X\times \E\rightarrow \IR$. So, the candidate for the optimal strategy takes the following form:
	\begin{align}
\bar\pi(t)&=\Big\{\underbrace{\frac{1}{1-\delta}\frac{\lambda}{\sigma^2_P}}_{\text{mean-variance portfolio}}+\underbrace{\frac{1}{1-\delta}\rho\frac{\sigma_X}{\sigma_P}\frac{f_x}{f}}_{\text{hedging term}}\Big\}\Big|_{\big(t,X(t),\MC(t)\big)}\label{optStrat1}\\
%\bar\pi(t)&=\underbrace{\frac{1}{1-\delta}\frac{\lambda\big(X(t),\MC(t)\big)}{\sigma^2_P\big(X(t),\MC(t)\big)}}_{\text{mean-variance portfolio}}+\underbrace{\frac{1}{1-\delta}\rho\frac{\sigma_X\big(X(t),\MC(t)\big)}{\sigma_P\big(X(t),\MC(t)\big)}\frac{f_x\big(t,X(t),\MC(t)\big)}{f\big(t,X(t),\MC(t)\big)}}_{\text{hedging term}}\\
&=\frac{1}{1-\delta}\frac{1}{\sigma_P}\Big\{\gamma+\rho\sigma_X\frac{f_x}{f}\Big\}\Big|_{\big(t,X(t),\MC(t)\big)}.
	\label{optStrat2}
		\end{align}
The first part of $\bar{\pi}$ exhibits the same structure as the optimal portfolio in the Black-Scholes model without a stochastic factor and corresponds (up to a constant factor) to the solution of a classical mean-variance problem, that is why we call it the mean-variance portfolio. The second term can be interpreted as a hedge against the additional risk coming from the stochastic factor. Note that it disappears if $\rho=0$, because in this case the stochastic factor cannot be hedged with the traded asset. The representation of $\bar\pi$ in (\ref{optStrat2}) allows for an intuitive interpretation of the optimal portfolio. One can recognize that the higher the volatility term of the risky asset, the lower the absolute value of its portfolio weight. This means that increasing risk reduces the optimal exposure to the asset. On the other side, increasing market price of risk influences positively the investment in $P_1$, as the asset becomes more attractive to the investor. For the interpretation of the second summand in (\ref{optStrat2}) additional knowledge about function $f$ is necessary. A detailed analysis in the case of the Markov modulated Heston model can be found in Section \ref{SubHestonNumRho}. Furthermore, observe that $\bar{\pi}$ does not depend on the wealth process $V^{\bar{\pi}}$.\\ 
Inserting Equations (\ref{ansatz}) and (\ref{optStrat2}) in the HJB equation (\ref{HJB}) leads to the following reduced system of PDEs for the unknown functions $f(t,x,e_i)$, for all $(t,x)\in [0,T]\times D_X$:
		\begin{align}
\begin{aligned}
	&{f_t}(t,x,{ e_i})+{ f}(t,x,{ e_i})\delta \Big\{r(e_i)+ \frac{1}{2}\frac{1}{1-\delta}\gamma^2(x,e_i)\Big\}+{ f_x}(t,x,{ e_i})\Big\{\mu_X(x,e_i)\\ &+\frac{\delta}{1-\delta}\rho\sigma_X(x,e_i)\gamma(x,e_i)\Big\}+\frac{1}{2}{ f_{xx}}(t,x,{ e_i})\sigma_X^2(x,e_i)\\
	&+\frac{1}{2}\frac{{ f^2_x}(t,x,{ e_i})}{{ f}(t,x,{ e_i})}\frac{\delta}{1-\delta}\rho^2\sigma_X^2(x,e_i)={ -\sum_{j=1}^{l}q_{ij}f(t,x,{ e_j})}, f(T,x,{ e_i})=1,\forall i\in E.
\end{aligned}
\label{PDEfgeneral}
	\end{align}
%-----------
Solving this system analytically is challenging and the coupling of the equations complicates numerical procedures. However, in the next section we present a list of special models where by conveniently placing the dependence on the Markov chain one can obtain simple expressions that can be easily computed numerically.\\
Observe that only finding a solution of the HJB equation is not enough to show the optimality of the resulting strategy. A verification theorem needs to be proved in order to show that all technical assumptions are indeed satisfied. The verification results are also contained in what follows.
%----------------------------------------------------------------------
%----------------------------------------------------------------------
\section{Explicit solutions}
\label{SecSol}
For the case of no leverage (i.e. $\rho=0$), we provide the solution in the general Model (\ref{GeneralModel}) up to a very simple expectation only over the probability measure of the Markov chain, which can be computed very efficiently (see Theorem \ref{ThExplSolNoRho}). For a special case we even derive an explicit solution up to a deterministic Riccati ODE (see Theorem \ref{ThExplSolNoRhoSpecCase}). For the case where the Brownian motions exhibit instantaneous correlation, closed-form solutions are derived for a relevant choice of the parameters (see Theorem \ref{ThVerifRho}). What is more, in Theorem \ref{ThVerifNoRho} we state a verification result that is very easy to validate.\\
Before proceeding with these solutions we need to solve the optimization problem in an auxiliary model, where the parameters switch according to a deterministic step function. The precise definition of the model and the solution to the problem, as well as the needed verification result are given in Section \ref{SubsTimedep}.
%In Section \ref{secExamp} we illustrate the results on the Markov modulated Heston model
%Given a solution $\Phi$ of the HJB equation, the following theorem provides a set of sufficient conditions that assure that $\Pi$ is indeed the value function for the optimization problem.
%----------------------------------------------------------------------
%----------------------------------------------------------------------
\subsection{Time-dependent affine models as an intermediate step}
\label{SubsTimedep}
We will start with the results for the general time-dependent model (Section \ref{SubsubsGenTime}) and then we will present a special case, where the verification result can be shown in a very convenient way (Section \ref{SubsubsSpecTime}).
%------------------------------------
\subsubsection{General time-dependent affine model}
\label{SubsubsGenTime}
Consider the following stochastic factor model:
\begin{align}
\begin{aligned}
	&\text{d}  P^m_0(t)=P^m_0(t)r{ \big(m(t)\big)}\text{d}t\\
	&\text{d} P^m_1(t)=P^m_1(t)\Big[r\big(m(t)\big)+\underbrace{\gamma\big(X^m(t),m(t)\big)\sigma_P\big(X^m(t),m(t)\big)}_{=\lambda(X^m(t),m(t))}\text{d} t\\
&+\sigma_P{ \big(X^m(t),m(t)\big)}\text{d} W_P(t)\Big]\\
	&\text{d}  X^m(t)=\mu_X(X^m(t),m(t))\text{d}t +\sigma_X(X^m(t),m(t))\text{d} W_X(t)\\
	&\text{d}\langle W_P, W_X\rangle(t)=\rho\text{d}t,
	\end{aligned}
	\label{TimedepModel}
	\end{align}
with initial values $P^m_0(0)=p_0$, $P^m_1(0)=p_1$ and $X^m(0)=x_0$. Note that $r$, $\gamma$, $\lambda$, $\sigma_P$, $\mu_X$ and $\sigma_X$ are the same deterministic function as in the definition of Model (\ref{GeneralModel}) and $m$ is a deterministic piecewise constant function with values in $\E=\{e_1,\ldots,e_l\}$. Let us denote the set of all these functions by $\IM$. More precisely, $m$ is given by:
	\begin{align}
	m(t):=\begin{cases}
	m_0 & t\in [0,t_1)\\
	m_1 & t\in [t_1,t_2)\\
	\vdots\\
	m_K & t\in [t_{K},T),
	\end{cases}
	\label{m}
		\end{align}
	where $t_j$, $j=1,\ldots,K$ with $t_0:=0<t_1<t_2<\ldots<t_K\leq T=:t_{K+1}$ denote the jump times of $m$, $K$ is the number of jumps till time $T$ and $m_j\in\E$, $j=1,\ldots,K$ are the corresponding states. We can interpret this model as Model (\ref{GeneralModel}) conditioned on an arbitrary but fixed path of the Markov chain. That is why we call it the corresponding time-dependent model to Model (\ref{GeneralModel}) or the time-dependent model induced by $m$ and vice versa, Model (\ref{GeneralModel}) is the Markov switching model corresponding to Model (\ref{TimedepModel}).\\
As before, $P^m_0$ denotes the price of the riskless investment, $P^m_1$ is the price process of the risky asset and $V^{m,\pi}$ stays for the wealth process corresponding to the self-financing strategy $\pi$. We consider again a power utility function $U(v)=\frac{v^{\delta}}{\delta}$ and the following optimization problem:
\begin{align}
	\begin{aligned}
J^{(t,v,x,m)}(\pi)&:=\IE_{\IQ}\Big[U\big(V^{m,\pi}(T)\big)\Big|V^{m,\pi}(t)=v,X^m(t)=x\Big]\\
	\Phi^m(t,v,x)&:=\max_{\pi\in\Lambda^m(t,v)}J^{(t,v,x,m)}(\pi),
	\end{aligned}
	\end{align}
	where
	\begin{align*}
	\Lambda^m(t,v):=\Big\{\pi\Big|\pi(s)\in\IR\wedge V^{m,\pi}(s)\geq0,\forall s\in[t,T]\wedge\IE_{\IQ}\Big[U\big(V^{m,\pi}(T)\big)^{-}\Big|\F_t\Big]<\infty\Big\}.
	\end{align*}
	The HJB equation in this case takes the following form:
\begin{align}
\begin{aligned}
\max_{\pi\in\IR}\{\pazocal{L}(m(t),\pi)\Phi^m(t,v,{x})\}&=0, 
	\Phi^m(T,v,{x})=\frac{v^{\delta}}{\delta},
	\end{aligned}
	\label{HJBTimeDep}
	\end{align}
where the differential operator $\pazocal{L}$ is defined for all $t\in[0,T]$ via:
	\begin{align*}
&\pazocal{L}(m(t),\pi)\Phi^m(t,v,{x}):={ \Phi^m_t}+{ \mu_V}(v,x,m(t),\pi){ \Phi^m_v}+{ \mu_{X}}(x,m(t)){ \Phi^m_{x}} \\
&+{ \frac{1}{2}\sigma^2_V}(v,x,m(t),\pi){ \Phi^m_{vv}}+{ \frac{1}{2}\sigma_{X}^2}(x,m(t)){ \Phi^m_{xx}}+{ \rho\sigma_{V}}(v,x,m(t),\pi){ \sigma_{X}}(x,m(t)){ \Phi^m_{vx}}.
	\end{align*}
%--------------------------
Analogously as before we use the first-order condition for an interior maximum $\frac{\partial}{\partial \pi}\big\{\pazocal{L}(m(t),\pi)\Phi^m(t,v,{x})\big\}=0$ to obtain a candidate for the optimal investment strategy:
	\begin{align}
	\begin{aligned}
	\bar{\pi}^{m}(t)&=-\frac{\lambda\Phi^m_v+\rho\sigma_X\sigma_P\Phi^m_{vx}}{V^{m,\pi}(t)\sigma^2_P \Phi^m_{vv}}\Big|_{\big(t,X^m(t),m(t)\big)}.
	\end{aligned}
	\end{align}
By the ansatz: 
\begin{align}
{ \Phi^m(t,v,x)=\frac{v^{\delta}}{\delta}f^m(t,x)},
\label{AnsatzTimedep}
\end{align}
we simplify $\bar{\pi}^{m}$ to:
	\begin{align}	
\bar{\pi}^{m}(t)&=\frac{1}{1-\delta}\Big\{\frac{\lambda}{\sigma^2_P}+\rho\frac{\sigma_X}{\sigma_P}\frac{f^m_x}{f^m}\Big\}\Big|_{\big(t,X^m(t),m(t)\big)}=\frac{1}{1-\delta}\frac{1}{\sigma_P}\Big\{\gamma+\rho\sigma_X\frac{f^m_x}{f^m}\Big\}\Big|_{\big(t,X^m(t),m(t)\big)}.
	\label{PiTimeDep}
		\end{align}
Substitution of Equations (\ref{AnsatzTimedep}) and (\ref{PiTimeDep}) in the HJB equation (\ref{HJBTimeDep}) leads to the following PDE in $t$ and $x$ for function $f^m$:
\begin{align}
\begin{aligned}
	&f^m_t(t,x)+f^m(t,x)\underbrace{\delta \Big\{r(m(t))+ \frac{1}{2}\frac{1}{1-\delta}\gamma^2(x,m(t))\Big\}}_{=:g(x,m(t))}\\ &+f^m_x(t,x)\underbrace{\Big\{\mu_X(x,m(t))+\frac{\delta}{1-\delta}\rho\sigma_X(x,m(t))\gamma(x,m(t))\Big\}}_{=:\tilde{\mu}_X(x,m(t))}\\
	&+\frac{1}{2}{ f^m_{xx}}(t,x)\sigma_X^2(x,m(t))+\frac{1}{2}\frac{{ \big(f^m_x}(t,x)\big)^2}{{f^m}(t,x)}\frac{\delta}{1-\delta}\rho^2\sigma_X^2(x,m(t))=0,f^m(T,x)=1.
	\end{aligned}
	\label{PDETimeDepfm}
	\end{align}
 Note that in this case we obtain one PDE and not a system of PDEs as in the case with Markov switching. As in \cite{Zaripop2001} we apply the following transformation to get rid of the nonlinear term: 
 \begin{align*}
 h^m(t,x):=\big(f^m(t,x)\big)^{\frac{1}{\vartheta}},
 \end{align*}
 for $\vartheta=\frac{1-\delta}{1-\delta+\delta\rho^2}$. Then $h^m$ should solve the PDE below: \begin{align}
\begin{aligned}
&h^m_t(t,x)+h^m(t,x)\frac{1}{\vartheta}g(x,m(t))+h^m_x(t,x)\tilde{\mu}_X(x,m(t))+\frac{1}{2}{h^m_{xx}}(t,x)\sigma_X^2(x,m(t))=0,\\
	&h^m(T,x)=1.
\end{aligned}
\label{PDEhTimedep}
	\end{align}
Observe that $\vartheta=1$ for $\rho=0$. Under some integrability conditions we can apply Corollary \ref{CorFK} to obtain the following probabilistic representation for the solution of the PDE from above:
	\begin{align}
	\begin{aligned}
h^m(t,x)=&\IE\Big[\exp\Big\{\int_t^T\frac{1}{\vartheta}g\big(\tilde{X}^m(s),m(s)\big)\Big\}\Big|\tilde{X}^m(t)=x\Big]\\
=&\exp\Big\{\int_t^T\frac{1}{\vartheta}\delta \Big\{r(m(s))+ \frac{1}{2}\frac{1}{1-\delta}\Gamma^{(1)}(m(s))\Big\}\text{d}s\Big\}\\
&\IE\Big[\exp\Big\{\int_t^T \frac{1}{\vartheta} \frac{1}{2}\frac{\delta}{1-\delta}\Gamma^{(2)}(m(s))\tilde X^m(s)\text{d}s\Big\}\Big|\tilde{X}^m(t)=x\Big],
\end{aligned}
\label{ProbReprTimeDeph}
	\end{align}
	where the dynamics of process $\tilde{X}^m$ are given for $t\in[0,T]$ by the following SDE:
	\begin{align*}
	\text{d}\tilde{X}^m(t)=\tilde{\mu}_X\big(\tilde{X}^m(t),m(t)\big)\text{d}t+\sigma_X\big(\tilde{X}(t), m(t)\big)\text{d}W_{{X}}(s).
	\end{align*}
Observe that when the expectation in Representation (\ref{ProbReprTimeDeph}) is known explicitly in a closed form, we do not need to check the assumptions of Corollary \ref{CorFK}, as we can just plug in the solution and verify that it is indeed the solution to the HJB equation. In what follows we will characterize this solution up to a Riccati ODE.\\
First recall the affine definition of our model, which implies:
	\begin{align*}
	\tilde{\mu}_X (x,e )&=\mu_X^{(1)} (e )+\frac{\delta}{1-\delta}\zeta^{(1)} (e ) + \{\mu_X^{(2)} (e )+\frac{\delta}{1-\delta}\zeta^{(2)} (e ) \}x.
	\end{align*}
This affine structure allows for an affine ansatz for function $h^m$. More precisely, we assume that:
\begin{align*}
h^m(t,x)=\exp\Big\{\int_t^T\frac{1}{\vartheta}\delta \Big\{r(m(s))+ \frac{1}{2}\frac{1}{1-\delta}\Gamma^{(1)}(m(s))\Big\}\text{d}s\Big\}\exp\{A^m(t)+B^m(t)x\}.
\end{align*}
Inserting this ansatz in Equation (\ref{PDEhTimedep}) and equating the coefficients in front of $x^0$ and $x^1$ leads to the following two ODEs for $A^m$ and $B^m$:
\begin{align}
&B^m_t(t)+\frac{1}{2}\Sigma_X^{(2)}\big(m(t)\big)\big(B^m(t)\big)^2+\big\{\frac{\delta}{1-\delta}\zeta^{(2)}\big(m(t)\big)+\mu_X^{(2)}\big(m(t)\big)\big\}B^m(t)\notag\\
&+\frac{1}{2}\frac{1}{\vartheta}\frac{\delta}{1-\delta}\Gamma^{(2)}\big(m(t)\big)=0,B^m(T)=0
\label{Riccati}\\
&A^m_t(t)+\frac{1}{2}\Sigma_X^{(1)}\big(m(t)\big)\big(B^m(t)\big)^2+\big\{\frac{\delta}{1-\delta}\zeta^{(1)}\big(m(t)\big)+\mu_X^{(1)}\big(m(t)\big)\big\}B^m(t)=0,\notag\\
& A^m(T)=0. 
\label{Integr}
\end{align}
So in order to obtain the solution of the HJB equation we only need to solve the Riccati Equation (\ref{Riccati}) and do the integration in Equation (\ref{Integr}):
\begin{align*}
A^m(t)&=\int_t^T\frac{1}{2}\Sigma_X^{(1)}\big(m(s)\big)\big(B^m(s)\big)^2+\big\{\frac{\delta}{1-\delta}\zeta^{(1)}\big(m(s)\big)+\mu_X^{(1)}\big(m(s)\big)\big\}B^m(s)\text{d}s.
\end{align*}
Observe that the parameters in these equations are time-dependent but piecewise constant. So we are looking for continuous, piecewise continuously differentiable solutions for $A^m$ and $B^m$. If we know the solution for constant parameters and inhomogeneous terminal condition a recursive solution method with finitely many steps is possible. A result on the solvability of Equation (\ref{Riccati}) and the probabilistic representation of its solution is given in Proposition 5.1. in \cite{Kraft2005} and summarized in Lemma \ref{jointCharF} in Section \ref{SecHeston}, where we apply it to construct stepwise a solution for the example of the Heston model.\\
The next theorem summarizes what we just derived.
%----------------------------------
\begin{theorem}[HJB solution in the time-dependent affine model]$\mbox{}$\\
Assume that Equation (\ref{Riccati}) admits a unique continuous, piecewise continuously differentiable solution $B^m$. Then the solution of the HJB equation (\ref{HJBTimeDep}) is given by:
\begin{align}
&\Phi^{m}(t,v,x)=\frac{v^{\delta}}{\delta}\IE\Big[\exp\Big\{\int_t^T\frac{1}{\vartheta}g\big(\tilde{X}^m(s),m(s)\big)\text{d}s\Big\}\Big|\tilde{X}^m(t)=x\Big]^{\vartheta}\notag\\
&=\frac{v^{\delta}}{\delta}\exp\Big\{\int_t^T\delta \Big\{r(m(s))+ \frac{1}{2}\frac{1}{1-\delta}\Gamma^{(1)}(m(s))\Big\}\text{d}s\Big\}\exp\{{\vartheta} A^m(t)+{\vartheta} B^m(t)x\}\\
&=\frac{v^{\delta}}{\delta}\exp\Big\{\int_t^T\delta r\big(m(s)\big)+\frac{1}{2}\frac{\delta}{1-\delta}\Gamma^{(1)}\big(m(s)\big)+\frac{1}{2}\vartheta\Sigma_X^{(1)}\big(m(s)\big)\big(B^m(s)\big)^2\notag\\
&+\vartheta\big\{\frac{\delta}{1-\delta}\zeta^{(1)}\big(m(s)\big)+\mu_X^{(1)}\big(m(s)\big)\big\}B^m(s)\text{d}s\Big\}\exp\{\vartheta B^m(t)x\}.
\label{EqSolTimeD}
\end{align}
The corresponding candidate for the optimal portfolio has the following form (see Equation (\ref{PiTimeDep})):
\begin{align}
\begin{aligned}
\bar{\pi}^{m}(t)&=\frac{1}{1-\delta}\Big\{\frac{\lambda}{\sigma^2_P}+\rho\frac{\sigma_X}{\sigma_P}\vartheta B^m\Big\}\Big|_{(t,X^m(t),m(t))}=\frac{1}{1-\delta}\frac{1}{\sigma_P}\Big\{\gamma+\rho\sigma_X\vartheta B^m\Big\}\Big|_{(t,X^m(t),m(t))}.
%\bar{\pi}^{m}(t)&=\frac{1}{1-\delta}\Big\{\frac{\lambda(X^m(t),m(t))}{\sigma^2_P(X^m(t),m(t))}+\rho\frac{\sigma_X(X^m(t),m(t))}{\sigma_P(X^m(t),m(t))}\vartheta B^m(t)\Big\}\\
%&=\frac{1}{1-\delta}\frac{1}{\sigma_P(X^m(t),m(t))}\Big\{\gamma(X^m(t),m(t))+\rho\sigma_X(X^m(t),m(t))\vartheta B^m(t)\Big\}.
%\\&=\frac{1}{1-\delta}\frac{\sqrt{\Gamma^{(1)}+\Gamma^{(2)}X^m(t)}+\rho\sqrt{\Sigma_X^{(1)}+\Sigma_X^{(2)}X^m(t)}\vartheta B^m(t)}{\sigma_P}.
\end{aligned}
\label{OptimStratTimeDepGen}
\end{align}
Trivially, $\Phi^{m}\in\mathcal{C}^{1,2,2}\big([t_n,t_{n+1})\times\IR_{\geq0}\times D_X\big)$ for all $n\in\{0,1,2,\ldots,K\}$ and it is continuous on the whole interval $[0,T]$.
\label{ThSolTimeD}
\end{theorem}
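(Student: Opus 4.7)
The plan is to verify the proposed formula by substitution, following the constructive derivation already carried out between Equations~(\ref{HJBTimeDep}) and (\ref{Integr}). I would first fix a function $m\in\IM$ with jump times $0=t_0<t_1<\dots<t_K<t_{K+1}=T$ and work on each interval $[t_n,t_{n+1})$ separately, where the coefficients are constant. On such an interval the HJB equation (\ref{HJBTimeDep}) is a single, classical PDE in $(t,v,x)$ to which the separation ansatz $\Phi^m(t,v,x)=\frac{v^\delta}{\delta}f^m(t,x)$ of Equation~(\ref{AnsatzTimedep}) applies; plugging in the candidate $\bar\pi^m$ obtained from the first-order condition reduces the problem to Equation~(\ref{PDETimeDepfm}) for $f^m$, and the substitution $h^m=(f^m)^{1/\vartheta}$ removes the nonlinear $f_x^2/f$ term and yields the linear PDE~(\ref{PDEhTimedep}).

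Next, I would insert the exponential-affine ansatz for $h^m$ and match the coefficients of $x^0$ and $x^1$. Matching the $x^1$-coefficient produces exactly the Riccati ODE~(\ref{Riccati}), which by assumption admits the unique continuous, piecewise continuously differentiable solution $B^m$; matching the $x^0$-coefficient then gives the linear ODE~(\ref{Integr}), solvable by direct integration to yield $A^m$. Undoing the transformation, i.e.\ setting $f^m=h^{m,\vartheta}$ and $\Phi^m=\frac{v^\delta}{\delta}f^m$, produces precisely the expression in Equation~(\ref{EqSolTimeD}). By construction this function solves the PDE on each open subinterval $(t_n,t_{n+1})$, which immediately yields the regularity claim $\Phi^m\in\mathcal{C}^{1,2,2}([t_n,t_{n+1})\times\IR_{\geq 0}\times D_X)$.

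For the global continuity on $[0,T]$, I would use that $A^m$ and $B^m$ are continuous at every jump $t_n$ (piecewise continuously differentiable solutions are by definition continuous through the knots), and that the outer integral $\int_t^T \delta r(m(s))+\dots \,\td s$ is continuous in $t$ as a function of a bounded integrand. Thus the whole exponential in~(\ref{EqSolTimeD}) is continuous across each $t_n$, and so $\Phi^m$ glues into a continuous function on $[0,T]$. The optimal candidate portfolio (\ref{OptimStratTimeDepGen}) then follows by reading off $f^m_x/f^m=\vartheta B^m$ from the affine ansatz and substituting into Equation~(\ref{PiTimeDep}).

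The main obstacle is conceptual rather than computational: one must check that the formally-derived $\Phi^m$ genuinely satisfies the HJB equation~(\ref{HJBTimeDep}) in the classical sense on each subinterval, which comes down to a careful term-by-term verification using the (MMAF) affine identities $\gamma^2(x,e_i)=\Gamma^{(1)}(e_i)+\Gamma^{(2)}(e_i)x$, $\sigma_X^2(x,e_i)=\Sigma_X^{(1)}(e_i)+\Sigma_X^{(2)}(e_i)x$ and $\rho\gamma\sigma_X=\zeta^{(1)}+\zeta^{(2)}x$, so that all $x$-dependent terms collect into the two ODE constraints above. Note that this theorem is an existence/characterisation result for the HJB equation only; optimality of $\bar\pi^m$ among all $\pi\in\Lambda^m(t,v)$ is deferred to the verification theorem announced after Theorem~\ref{ThSolTimeD}.
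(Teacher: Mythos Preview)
Your proposal is correct and mirrors the paper's own argument: the theorem is explicitly presented as a summary of the derivation between Equations~(\ref{HJBTimeDep}) and~(\ref{Integr}), and your outline reproduces exactly those steps (separation ansatz, first-order condition, the $h^m=(f^m)^{1/\vartheta}$ linearisation, affine ansatz leading to~(\ref{Riccati})--(\ref{Integr}), then direct substitution to verify). The paper likewise emphasises that once the closed-form expression is in hand one bypasses the Feynman--Kac hypotheses by plugging in, and that verification of optimality is postponed to Theorem~\ref{ThVerifTimeDGen}, precisely as you note.
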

%----------------------------------
What is still left to be done is to show that the solution of the HJB equation is indeed the value function of our optimization problem. The next theorem summarizes a sufficient set of conditions for this. It holds not only for affine models but for general models with a stochastic factor and time-dependent piecewise constant parameters.
%----------------------------------
\begin{theorem}[Verification result in the time-dependent affine model]$\mbox{}$\\
Consider a real-valued function $\Phi^m(t,v,x):[0,T]\times \IR_{\geq0}\times D_X\rightarrow \IR$ and assume that:
\begin{enumerate}
\renewcommand{\labelenumi}{\roman{enumi})}
\item $\Phi^m\in\mathcal{C}^{1,2,2}\big([t_i,t_{i+1})\times \IR_{\geq0}\times D_X\big)$ for all $i=0,\ldots,k$,
\item $\Phi^m\in\mathcal{C}\big([0,T]\times \IR_{\geq0}\times D_X\big)$, 
\item $\Phi^m$ satisfies the following PDE, defined piecewise for all $i=1,\ldots,K$:
\begin{align*}
\max_{\pi\in\IR}\{\pazocal{L}(m_i,\pi)\Phi^m(t,v,{x})\}&=0,\text{ for all }(t,v,x)\in[t_i,t_{i+1})\times \IR_{\geq0}\times D_X\\
	\Phi^m(T,v,{x})&=\frac{v^{\delta}}{\delta},
\end{align*}
where the maximum on the left hand side is obtained at $$\bar{\pi}^m=-\frac{\lambda\Phi^m_v+\rho\sigma_X\sigma_P\Phi^m_{vx}}{V^{m,\pi}(t)\sigma^2_P \Phi^m_{vv}}\Big|_{\big(t,X^m(t),m(t)\big)}.$$
\end{enumerate}
If $\Phi^m$ is positive then:
\begin{align*}
\IE\big[U\big(V^{m,\pi}(T)\big)\big|V^{m,\pi}(t)=v,X^m(t)=x\big]\leq \Phi^m(t,v,x),
\end{align*}
for all $(t,v,x)\in[0,T]\times \IR_{\geq0}\times D_X$ and all admissible portfolio strategies $\pi$.\\
Now consider again a not necessarily positive function $\Phi^m$ satisfying Conditions i), ii) and iii). Assume additionally that:
\begin{itemize}
\item[iv)] $\big\{\Phi^m\big(t,V^{m,\bar{\pi}^m}(t),X^m(t)\big)\big\}_{t\in[0,T]}$ is a martingale.
\end{itemize}
Then:
\begin{align*}
\IE\big[U\big(V^{m,\bar{\pi}^m}(T)\big)\big|V^{m,\bar{\pi}^m}(t)=v,X^m(t)=x\big]= \Phi^m(t,v,x),
\end{align*}
and 
\begin{align*}
\IE\big[U\big(V^{m,\pi}(T)\big)\big|V^{m,\pi}(t)=v,X^m(t)=x\big]\leq \Phi^m(t,v,x),
\end{align*}
for all $(t,v,x)\in[0,T]\times \IR_{\geq0}\times D_X$ and all admissible portfolio strategies $\pi$.
\label{ThVerifTimeDGen}
\end{theorem}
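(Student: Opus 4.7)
The plan is the standard HJB verification argument, adapted to the piecewise-smooth setting. Fix $(t,v,x)\in[0,T]\times\IR_{\geq0}\times D_X$ and an admissible strategy $\pi\in\Lambda^m(t,v)$; for brevity write $V(s):=V^{m,\pi}(s)$ and $X(s):=X^m(s)$.

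First I would apply It\^o's formula to $\Phi^m(s,V(s),X(s))$. Because $m$ is piecewise constant with jump times $0<t_1<\ldots<t_K<T$, condition (i) provides the required smoothness only on each slab $[t_i,t_{i+1})\times\IR_{\geq0}\times D_X$, so It\^o must be applied separately on each slab and the pieces spliced together using the global continuity of $\Phi^m$ from (ii) (together with pathwise continuity of $V$ and $X$). This yields, for every $s\in[t,T]$,
\begin{align*}
\Phi^m(s,V(s),X(s)) = \Phi^m(t,v,x) + \int_t^s \pazocal{L}(m(u),\pi(u))\Phi^m(u,V(u),X(u))\,\mathrm{d}u + M^{\pi}(s),
\end{align*}
where $M^{\pi}$ is a local martingale given by the stochastic integrals against $W_P$ and $W_X$.

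By condition (iii), $\pazocal{L}(m(s),\pi)\Phi^m\leq 0$ for every $\pi\in\IR$, with equality when $\pi=\bar\pi^m$. Hence $\Phi^m(\,\cdot\,,V(\,\cdot\,),X(\,\cdot\,))$ is a local supermartingale under every admissible $\pi$ and a local martingale under $\bar\pi^m$. For the equality assertion this is immediate: condition (iv) upgrades the local martingale along $\bar\pi^m$ to a true martingale, so taking expectations at $s=T$ and using the terminal condition $\Phi^m(T,\cdot,\cdot)=U(\cdot)$ yields $\IE[U(V^{m,\bar\pi^m}(T))|V^{m,\bar\pi^m}(t)=v,X^m(t)=x]=\Phi^m(t,v,x)$.

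For the inequality along a general admissible $\pi$, I would localize $M^{\pi}$ by stopping times $\tau_n\uparrow T$ making the stopped local martingale a true martingale; taking expectations gives $\IE[\Phi^m(T\wedge\tau_n,V(T\wedge\tau_n),X(T\wedge\tau_n))]\leq \Phi^m(t,v,x)$. When $\Phi^m$ is positive, Fatou's lemma applied to the a.s.\ convergent non-negative family on the left yields $\IE[U(V(T))]\leq\Phi^m(t,v,x)$, which is the first assertion of the theorem. The main obstacle is the passage to the limit when $\Phi^m$ is not required to be positive (typically the case $\delta<0$, where $\Phi^m\leq 0$, so that a direct Fatou argument would run in the wrong direction). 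To close this gap I would establish uniform integrability of the stopped family, which upgrades the local supermartingale to a genuine supermartingale and recovers the inequality. Uniform integrability would in turn follow from the admissibility assumption $\IE[U(V(T))^-|\F_t]<\infty$ combined with moment bounds on the affine factor $X^m$ coming from the explicit representation \eqref{EqSolTimeD} for $\Phi^m$; this is the delicate step and is where most of the analytical work concentrates, the remainder of the proof being a routine application of It\^o's formula and the HJB equation.
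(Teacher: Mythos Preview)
Your argument follows essentially the same route as the paper's: piecewise It\^o on the slabs $[t_i,t_{i+1})$, splicing via the global continuity of $\Phi^m$, and then using $\pazocal{L}(m(s),\pi)\Phi^m\le 0$ from the HJB equation. The only cosmetic difference in the positive case is that the paper does not localize and invoke Fatou directly on $\Phi^m(\cdot,V,X)$. Instead it writes
\[
\Phi^m\big(\tau,V(\tau),X(\tau)\big)\ \le\ \Phi^m(t,v,x)+\text{(stochastic integrals)}\ =:\ Y^m(\tau),
\]
notes that $Y^m(\tau)\ge \Phi^m(\tau,V(\tau),X(\tau))\ge 0$, and then uses the fact that a non-negative local martingale is a supermartingale to conclude $\IE[\Phi^m(T,V(T),X(T))]\le \IE[Y^m(T)]\le Y^m(t)=\Phi^m(t,v,x)$. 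This is of course the same mechanism as your localization-plus-Fatou step, just packaged differently.

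Where your proposal diverges is in the second part. You correctly flag that passing from local supermartingale to genuine supermartingale is problematic when $\Phi^m$ need not be positive, and you sketch a uniform-integrability argument to close the gap for the inequality over all admissible $\pi$. The paper, however, does \emph{not} carry out any such analysis: under condition~iv) it only establishes the \emph{equality} $\IE[U(V^{m,\bar\pi^m}(T))\mid\ldots]=\Phi^m(t,v,x)$ directly from the assumed martingale property, and then declares the proof complete. In other words, the ``delicate step'' you anticipate is simply absent from the paper's proof; your proposal already goes further than the paper on this point, so you need not look for additional machinery there.
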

The proof is given in Appendix \ref{AppSecSol}.\\
%----------------------------------------------------------------------
So, the conditions in Theorem \ref{ThVerifTimeDGen} need to be checked for the parameters of the special model of interest. This might be quite laborious. That is why we present in the following section a special case of the general time-dependent model, for which the verification result can be shown easily.
%----------------------------------
%------------------------------------
\subsubsection{Special time-dependent affine model with $\Sigma_X^{(1)}=\Gamma^{(1)}=\zeta^{(1)}=0$}
\label{SubsubsSpecTime}
For this section we additionally assume for Model (\ref{TimedepModel}) that:
\begin{align*}
\Sigma_X^{(1)}=\Gamma^{(1)}=\zeta^{(1)}=0.
\end{align*}
Then the verification result follows easily from one general statement about exponentials of affine processes, given in Corollary 3.4 from \cite{Kallsen2010}. For convenience, this result is summarized in Lemma \ref{LemmaKallsen} in Appendix \ref{AppSecSol}. Now we apply it for the verification result. 

%----------------------------------
\begin{theorem}[Verification result for $\Sigma_X^{(1)}=\Gamma^{(1)}=\zeta^{(1)}=0$]$\mbox{}$\\
Let the assumptions of Theorem \ref{ThSolTimeD} hold true and $\Sigma_X^{(1)}=\Gamma^{(1)}=\zeta^{(1)}=0$. Then the optimal portfolio strategy is as  given in Equation (\ref{OptimStratTimeDepGen}) and $\Phi^m$ as given by Equation (\ref{EqSolTimeD}) is the corresponding value function.
%\begin{align}
%\begin{aligned}
%\bar{\pi}^{m}(t)&=\frac{1}{1-\delta}\Big\{\frac{\lambda}{\sigma^2_P}+\rho\frac{\sigma_X}{\sigma_P}\vartheta B^m\Big\}\Big|_{(t,X^m(t),m(t))}=\frac{1}{1-\delta}\frac{1}{\sigma_P}\Big\{\gamma+\rho\sigma_X\vartheta B^m\Big\}\Big|_{(t,X^m(t),m(t))},
%%\bar{\pi}^{m}(t)&=\frac{1}{1-\delta}\Big\{\frac{\lambda(X^m(t),m(t))}{\sigma^2_P(X^m(t),m(t))}+\rho\frac{\sigma_X(X^m(t),m(t))}{\sigma_P(X^m(t),m(t))}\vartheta B^m(t)\Big\},
%\end{aligned}
%\label{OptimStratTimeDepSprec}
%\end{align}

\label{ThVerifTimeD}
\end{theorem}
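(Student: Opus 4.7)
The plan is to apply the general verification Theorem \ref{ThVerifTimeDGen} to the candidate $\Phi^m$ given in Equation (\ref{EqSolTimeD}). Conditions i), ii) and iii) there are immediate consequences of Theorem \ref{ThSolTimeD}: by construction $\Phi^m$ is $\mathcal{C}^{1,2,2}$ on each constancy interval $[t_n,t_{n+1})$ of $m$, globally continuous on $[0,T]\times\IR_{\geq 0}\times D_X$, and solves the HJB equation (\ref{HJBTimeDep}) with pointwise maximizer $\bar{\pi}^m$. The only real work lies in condition iv), namely that
\[
M(t) := \Phi^m\big(t, V^{m,\bar{\pi}^m}(t), X^m(t)\big)
\]
is a true martingale and not merely a local martingale.

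Under the hypotheses $\Sigma_X^{(1)}=\Gamma^{(1)}=\zeta^{(1)}=0$, Equation (\ref{EqSolTimeD}) collapses to
\[
\Phi^m(t,v,x) = \frac{v^{\delta}}{\delta}\exp\left\{\int_t^T\delta r(m(s))\,ds\right\}\exp\{\vartheta B^m(t)\,x\}.
\]
Combining the SDE (\ref{SDEV}) for $V^{m,\bar{\pi}^m}$ with the optimal strategy (\ref{OptimStratTimeDepGen}) and the affine specification (\ref{MMAF}), one checks that $\log V^{m,\bar{\pi}^m}$ satisfies an SDE whose drift and diffusion coefficients are affine in $X^m$. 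Thus $(\log V^{m,\bar{\pi}^m}, X^m)$ is a time-inhomogeneous affine process, and $M(t)$ takes the form of a deterministic prefactor multiplying the exponential of a time-dependent affine function of this joint process, precisely the object handled by Lemma \ref{LemmaKallsen}.

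The next step is to apply Lemma \ref{LemmaKallsen} separately on each constancy interval $[t_n,t_{n+1})$ of $m$, where the parameters are constant. The relevant Riccati coefficient is $\vartheta B^m$, which by the hypothesis of Theorem \ref{ThSolTimeD} exists, is continuous on $[0,T]$ and piecewise $\mathcal{C}^1$, and satisfies (\ref{Riccati}) on each subinterval with terminal condition inherited from the next subinterval. The PDE (\ref{HJBTimeDep}) together with the optimality of $\bar{\pi}^m$ guarantees that the drift of $M$ vanishes, so on each subinterval $M$ is a local martingale of the form to which Lemma \ref{LemmaKallsen} directly applies, yielding the true martingale property. Continuity of $M$ at the switching times $t_n$ together with the tower property of conditional expectation then glues the subinterval martingale properties into a global martingale property of $M$ on $[0,T]$.

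The main obstacle I expect is the verification step just described: one must ensure that $B^m$ stays in the interior of the admissibility region of Lemma \ref{LemmaKallsen} throughout each subinterval, so that the relevant exponential moments of $X^m$ remain finite. This is exactly where the simplifying assumptions $\Sigma_X^{(1)}=\Gamma^{(1)}=\zeta^{(1)}=0$ are crucial: they kill the constant-in-$x$ parts of the integrand in (\ref{ProbReprTimeDeph}) that would otherwise demand a separate moment bound, and reduce $M$ to a pure exponential-affine functional of $X^m$ (up to a deterministic factor). Once condition iv) is established, Theorem \ref{ThVerifTimeDGen} immediately yields both $\Phi^m(t,v,x)=J^{(t,v,x,m)}(\bar{\pi}^m)$ and the optimality of $\bar{\pi}^m$ among admissible strategies, which is the content of the theorem.
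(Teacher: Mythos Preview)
Your proposal is correct and follows essentially the same route as the paper: reduce via Theorem \ref{ThVerifTimeDGen} to the martingale property of $M(t)$, then invoke Lemma \ref{LemmaKallsen} on the affine pair $(X^m,L^m)$ with $\exp\{L^m(t)\}\propto M(t)$; the paper's proof simply makes explicit the computation you anticipate, showing that the drift-plus-half-variance condition (\ref{condLocMart}) for the $L^m$-component reduces exactly to Equations (\ref{Riccati}) and (\ref{Integr}). One minor point: Lemma \ref{LemmaKallsen} as stated carries no ``admissibility region'' for $B^m$ --- the true-martingale conclusion follows from the structural conditions i)--iii) and the cancellation (\ref{condLocMart}) alone, so the concern in your last paragraph is unnecessary.
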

%----------------------------------
The proof is given in Appendix \ref{AppSecSol}
%----------------------------------
\begin{remark}
An alternative verification Theorem for the Heston model is presented in \cite{Kraft2005}. It relies on the specific form of the solution in this example. However, it is very technical and requires some restrictions on the model parameters.
\end{remark}
To summarize, when we consider a special example for a time-dependent model with $\Sigma_X^{(1)}=\Gamma^{(1)}=\zeta^{(1)}=0$, we just need to find continuous, piecewise differentiable functions $A^m$ and $B^m$ that solve Equations (\ref{Riccati}) and (\ref{Integr}), respectively. Then we can apply the just proved theorems and derive the optimal solution. In Section \ref{SecHeston} we will do this for a Heston-type model.\\ 
%----------------------------------
After deriving all necessary results for the time-dependent model, we can proceed with the Markov modulated models. We will consider separately the cases with and without correlation between the Brownian motions driving the stock price and the stochastic factor, as the presence of correlation complicates significantly the derivations.
%----------------------------------------------------------------------
%----------------------------------------------------------------------
\subsection{Markov modulated affine models with independent Brownian motions}
\label{SubsNoRho}
%----------------------------------------------------------------------
%----------------------------------------------------------------------
In this section we set $\rho=0$ in Model (\ref{GeneralModel}). Observe that although the Brownian motions do not exhibit instantaneous correlation, the two processes are additionally correlated by the joint Markov chain. After considering the general case in Section \ref{SubsunNoRhoGeneral}, we will present in Section \ref{SubsunNoRhoSepa} a special separable case, where the solution can be further simplified.
%--------------------------------------
\subsubsection{General Markov-modulated affine model (MMAF) with $\rho=0$}
\label{SubsunNoRhoGeneral}
%--------------------------------------
In what follows we first make a guess for the HJB solution based on Corollary \ref{CorFK} and then show that it is indeed the value function to our optimization problem.\\
Observe that the system of PDEs (\ref{PDEfgeneral}) for function $f$ simplifies to:
\begin{align}
\begin{aligned}
	&{f_t}(t,x,{ e_i})+{ f}(t,x,{ e_i})\underbrace{\delta \Big\{r(e_i)+ \frac{1}{2}\frac{1}{1-\delta}\gamma^2(x,e_i)\Big\}}_{=g(x,e_i)}+{ f_x}(t,x,{ e_i})\mu_X(x,e_i)\\ 
	&+\frac{1}{2}{ f_{xx}}(t,x,{ e_i})\sigma_X^2(x,e_i)={ -\sum_{j=1}^{l}q_{ij}f(t,x,{ e_j})}, f(T,x,{ e_i})=1, \forall\ i\in E.
\end{aligned}
\label{PDEfNoRho}
	\end{align}
%First, we will derive the solution of this PDE up to an expectation with respect to the Markov chain. Then, we will present a special case, where function $f(t,x,e)$ is separable in $x$ and $e$, which allows for an even simpler explicit solution.\\Let us start with the general solution.
So, if the conditions of Corollary \ref{CorFK} are fulfilled for process $X$ and function $g(x,e_i)$ then the solution to System (\ref{PDEfNoRho}) is given by:
\begin{align*}
f(t,x,e_i)=&\IE\Big[\exp\Big\{\int_t^T\delta\Big(r\big(\MC(s)\big)+ \frac{1}{2}\frac{1}{1-\delta}\gamma^2\big(X(s),\MC(s)\big)\Big)\text{d}s\Big\}\\
&\Big|X(t)=x,\MC(t)=e_i\Big].
\end{align*}
Further, to ease the readability of the derivations below, we introduce the following notation for any $n$-dimensional process $Z$: $Z_{t,z}=(Z^1_{t,z},\ldots,Z^n_{t,z})^{\prime}$ denotes the process $Z$ started at time $t$ in point $z=(z_1,\ldots,z_n)^{\prime}$. Now rewrite the previous equation in the new notation and transform it by the tower rule for conditional expectations as follows:
\begin{align*}
&f(t,x,e_i)=\IE\Big[\exp\Big\{\int_t^T\delta\Big(r\big(\MC_{t,e_i}(s)\big)+ \frac{1}{2}\frac{1}{1-\delta}\gamma^2\big(X_{t,x,e_i}(s),\MC_{t,e_i}(s)\big)\Big)\text{d}s\Big\}\Big]\\
&=\IE\Big[\IE\Big[\exp\Big\{\int_t^T\delta\Big(r\big(\MC_{t,e_i}(s)\big)+ \frac{1}{2}\frac{1}{1-\delta}\gamma^2\big(X_{t,x,e_i}(s),\MC_{t,e_i}(s)\big)\Big)\text{d}s\Big\}|\F_T^{\MC}\Big]\Big]\\
&=\IE\Big[f^{\MC_{t,e_i}}(t,x)\Big]=\IE\Big[f^{\MC}(t,x)\Big|\MC(t)=e_i\Big],
\end{align*}
for all $(t,x,e_i)\in[0,T]\times D_X\times\E$, where $f^m$ denotes for all $m\in\IM$ the solution of System (\ref{PDETimeDepfm}) in the time-dependent model induced by $m$, thus $f^{\MC}(t,x)$ is an $\F_T^{\MC}$-measurable random variable. Observe that here we have used the independence of $\MC$ and $W_X$.
%----------------------------------
So the candidate for the value function in the considered model has the following form:
\begin{align*}
\Phi(t,v,x,e_i)&=\frac{v^{\delta}}{\delta}f(t,x,e_i)=\frac{v^{\delta}}{\delta}\IE\Big[f^{\MC_{t,e_i}}(t,x)\Big]=\IE\Big[\frac{v^{\delta}}{\delta}f^{\MC_{t,e_i}}(t,x)\Big]\\
&=\IE[\Phi^{\MC_{t,e_i}}(t,v,x)]=\IE[\Phi^{\MC}(t,v,x)|\MC(t)=e_i],
\end{align*}
where for each $m\in\IM$, $\Phi^m$ denotes the value function in the time-dependent model. This insight inspired us to show a general verification result, which reduces the case with Markov switching to the case with deterministic time-dependent model parameters and thus can be easily checked. One important advantage of this verification result is that we do not need the statement that our candidate value function solves the HJB equations but we show directly that it is indeed the value function of our problem. The next theorem provides this result.
%----------------------------------
\begin{theorem}[Verification result with independent Brownian motions]$\mbox{}$\\
Assume that for all $m\in\IM$ the value function in the time-dependent model induced by $m$ is given by $\Phi^m(t,v,x):[0,T]\times\IR_{\geq 0}\times D_X\rightarrow \IR$ and the optimal investment strategy $\bar{\pi}^m$ depends on the current level of $m$ and the current values of the stochastic processes, but not on the whole path of $m$ on $[0,T]$, i.e. it is given by
\begin{align*}
\bar{\pi}^m(t)=p(t,m(t),V^{m,\bar{\pi}^m}(t),X^m(t)),
\end{align*}
for some function $p:[0,T]\times\E\times\IR_{\geq0}\times D_X\rightarrow\IR$. Then, the value function in the corresponding Markov modulated model is given by:
\begin{align*}
\Phi(t,v,x,e_i)=\IE[\Phi^{\MC}(t,v,x)|\MC(t)=e_i],
\end{align*}
and the optimal investment strategy is $\bar{\pi}(t)=p(t,\MC(t),V^{\bar{\pi}}(t),X(t))$.
\label{ThVerifNoRho}
\end{theorem}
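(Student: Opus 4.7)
The plan is to exploit the independence of the Markov chain $\MC$ from the Brownian motions $(W_P, W_X)$ in Model (\ref{GeneralModel}) in order to reduce the optimization problem under Markov switching to a family of optimization problems in the time-dependent models induced by the possible sample paths of $\MC$, each of which is solved by Theorem \ref{ThSolTimeD}. Conditioning on the $\sigma$-algebra $\F^{\MC}(T)$ generated by the full chain path turns the SDE for $(V^{\pi}, X)$ into the SDE of the time-dependent model corresponding to the realized path. Thus, by the tower property, taking an expectation of the conditional value function over the distribution of $\MC$ should recover the Markov-switching value function, without ever needing to verify the HJB equation directly.

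For the upper bound, I would fix an arbitrary admissible $\pi \in \Lambda(t,v)$ and, for each realization $m$ of $\MC$ on $[t,T]$, define the frozen strategy $\pi^m$ obtained by restricting $\pi$ to the event $\{\MC = m\}$; by the independence of $\MC$ from $(W_P, W_X)$, $\pi^m$ can be identified (for almost every $m$ under the law of $\MC$) with a process adapted to the Brownian filtration, hence admissible in the time-dependent model induced by $m$. The tower rule together with this identification yields
\begin{align*}
J^{(t,v,x,e_i)}(\pi) = \IE\big[ J^{(t,v,x,\MC)}(\pi^{\MC}) \,\big|\, \MC(t) = e_i \big] \leq \IE\big[ \Phi^{\MC}(t,v,x) \,\big|\, \MC(t) = e_i \big],
\end{align*}
where the inequality uses the optimality of $\Phi^m$ in the time-dependent model.

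For the reverse inequality, the candidate $\bar{\pi}(s) := p(s, \MC(s), V^{\bar{\pi}}(s), X(s))$ is tailored so that, conditional on $\{\MC = m\}$, the pair $(V^{\bar{\pi}}, X)$ satisfies exactly the same SDE as $(V^{m,\bar{\pi}^m}, X^m)$; this relies crucially on the assumption that $\bar{\pi}^m$ depends on $m$ only through the current state $m(t)$, so that replacing $m(\cdot)$ by $\MC(\cdot)$ preserves the feedback structure. Consequently the conditional expected utility equals $\Phi^m(t,v,x)$ on $\{\MC = m\}$, and averaging over the chain path produces equality with the upper bound, which simultaneously identifies $\Phi$ and certifies optimality of $\bar{\pi}$.

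The main obstacle will be the rigorous treatment of the pathwise conditioning: (i) the frozen strategy $\pi^m$ must be constructed in a jointly measurable way so that $m \mapsto J^{(t,v,x,m)}(\pi^m)$ and $m \mapsto \Phi^m(t,v,x)$ are measurable and integrable against the law of $\MC$; (ii) the admissibility condition $\IE\bigl[U(V^{\pi}(T))^{-}\bigm|\F_t\bigr] < \infty$ has to descend to $\IE\bigl[U(V^{m,\pi^m}(T))^{-}\bigm|\F_t\bigr] < \infty$ for almost every realization $m$; and (iii) one has to ensure that the implicit SDE defining $V^{\bar{\pi}}$ admits a unique strong solution, which should follow from the well-posedness in the time-dependent case combined with a regular-conditional-distribution argument based on the independence of $\MC$ and $(W_P, W_X)$.
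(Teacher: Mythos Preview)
Your proposal is correct and follows essentially the same route as the paper: condition on the full path of $\MC$ via the tower rule, identify the conditioned dynamics with the time-dependent model induced by that path, use optimality of $\Phi^m$ for the upper bound on any admissible $\pi$, and use the feedback form $\bar\pi^m(t)=p(t,m(t),\ldots)$ to show that $\bar\pi$ attains the bound. The paper's proof is terser and does not spell out the measurability and admissibility-descent issues you list under (i)--(iii), treating them as implicit consequences of the independence of $\MC$ from $(W_P,W_X)$; your caution there is warranted but does not reflect a difference in strategy.
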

%--------------
\begin{proof}
Let $\bar\pi^m$ and $\bar\pi$ be as defined in the theorem above. First note the following equality for an arbitrary but fix $m\in\IM$:
$$\big(X^{m}_{t,x},V^{m,\bar{\pi}^{m}}_{t,v,x},\bar{\pi}^{m}_{t,v,x}\big)\equiv\big(X_{t,x,m(t)},V^{\bar{\pi}}_{t,v,x,m(t)},\bar{\pi}_{t,v,x,m(t)}\big)\big|\{\MC(s)=m(s),\forall s\in[t,T]\}.$$
This implies that:
$$\big(X^{\MC}_{t,x,e_i},V^{\MC,\bar{\pi}^{\MC}}_{t,v,x,e_i},\bar{\pi}^{\MC}_{t,v,x,e_i}\big)\equiv\big(X_{t,x,e_i},V^{\bar{\pi}}_{t,v,x,e_i},\bar{\pi}_{t,v,x,e_i}\big).$$
Pay attention that this is true because the optimal strategy $\bar{\pi}^m$ in the time-dependent model depends only on the current value of the stepwise function $m$ and not on its whole path on $[t,T]$ and because of the independence between the Brownian motions and the Markov chain. By applying this observation and the tower rule for conditional expectations we obtain:
\begin{align*}
\Phi(t,v,x,e_i)&=\IE\big[\Phi^{\MC}(t,v,x)\big|\MC(t)=e_i\big]=\IE\big[\Phi^{\MC_{t,e_i}}(t,v,x)\big]\\ 
&=\IE\Big[\IE\big[\Phi^{\MC_{t,e_i}}(t,v,x)\big|\F_T^{\MC_{t,e_i}}\big]\Big]=\IE\Big[\IE\big[\frac{(V_{t,v,x,e_i}^{\MC,\bar{\pi}^{\MC}}(T))^{\delta}}{\delta}\big|\F_T^{\MC_{t,e_i}}\big]\Big]\\
&=\IE\Big[\IE\big[\frac{\big(V_{t,v,x,e_i}^{\bar{\pi}}(T)\big)^{\delta}}{\delta}\big|\mathcal{F}^{\MC_{t,e_i}}_T\big]\Big]=\IE\Big[\frac{\big(V_{t,v,x,e_i}^{\bar{\pi}}(T)\big)^{\delta}}{\delta}\Big].
\end{align*}
So, function $\Phi$ expresses indeed the expected utility corresponding to strategy $\bar{\pi}$. \\
To obtain the optimality of $\bar{\pi}$ consider an arbitrary admissible $\pi\in\Lambda(t,v,x,e_i)$ and note that $\pi|\{\MC(s)=m(s),\forall s\in[t,T]\}\in \Lambda^{m}(t,v,x),\forall m\in\IM$, thus $\pi\in\Lambda^{\MC_{t,e_i}}(t,v,x)$. This means that conditional on the whole path of the Markov chain $\pi$ is admissible for the time-dependent model, where function $m$ corresponds to the path of the Markov chain. Thus, using the optimality of $\bar{\pi}^{\MC}$, one can compute
\begin{align*}
\Phi(t,v,x,e_i)&=\IE\Big[\IE\big[\frac{\big(V_{t,v,x,e_i}^{\MC,\bar{\pi}^{\MC}}(T)\big)^{\delta}}{\delta}\big|\F_{T}^{\MC_{t,e_i}}\big]\Big]\\
&\geq \IE\Big[\IE\big[\frac{\big(V_{t,v,x,e_i}^{\MC,\pi}(T)\big)^{\delta}}{\delta}\big|\F_{T}^{\MC_{t,e_i}}\big]\Big]=\IE\Big[\frac{\big(V_{t,v,x,e_i}^{\pi}(T)\big)^{\delta}}{\delta}\Big].
\end{align*}
$\mbox{}$\halmos
\end{proof}
%----------------------------------
\begin{remark}
Observe that for the proof above the assumption $\rho=0$ is not needed. Thus, the statement holds also for the general model with $\rho\not=0$.
\label{RemarkVerifRho}
\end{remark}
%------------------------------------
We apply this theorem to the solution for the time-dependent model obtained in Section \ref{SubsTimedep} and provide the result in the following theorem.
%----------------------------------
\begin{theorem}[Solutions with independent Brownian motions]$\mbox{}$\\
Consider Model (\ref{GeneralModel}) and set $\rho=0$. Assume that the value function in the corresponding time-dependent model is given by:
\begin{align*} 
%\Phi^{m}(t,v,x)&=\frac{v^{\delta}}{\delta}\IE\Big[\exp\Big\{\int_t^T\delta\Big(r^m(m(s))+ \frac{1}{2}\frac{1}{1-\delta}\frac{\lambda^2(X^m(s),m(s)))}{\sigma^2_P(X^m(s),m(s))}\Big)\text{d}s\Big\}\Big|X^m(t)=x\Big]\\
\Phi^{m}(t,v,x)&=\frac{v^{\delta}}{\delta}\IE\Big[\exp\Big\{\int_t^Tg(X^m(s),m(s))\text{d}s\Big\}\Big|X^m(t)=x\Big]=:\frac{v^{\delta}}{\delta}f^m(t,x)
\end{align*}
and the optimal investment strategy by:
\begin{align*}
\bar{\pi}^m(t)=\frac{1}{1-\delta}\frac{\lambda}{\sigma_P^2}\Big|_{(X^m(t),m(t))}=\frac{1}{1-\delta}\frac{\gamma}{\sigma_P}\Big|_{(X^m(t),m(t))}.
%\bar{\pi}^m(t)=\frac{1}{1-\delta}\frac{\lambda\big(X^m(t),m(t)\big)}{\sigma_P^2\big(X^m(t),m(t)\big)}=\frac{1}{1-\delta}\frac{\gamma\big(X^m(t),m(t)\big)}{\sigma_P\big(X^m(t),m(t)\big)}.
\end{align*}
Then, the value function in the Markov modulated model has the following form:
\begin{align} 
\begin{aligned}
\Phi(t,v,x,e_i)=&\IE[\Phi^{\MC}(t,v,x)|\MC(t)=e_i]\\
%=&\frac{v^{\delta}}{\delta}\IE\Big[\exp\Big\{\int_t^T\delta\Big(r(s,\MC(s))+ \frac{1}{2}\frac{1}{1-\delta}\frac{\lambda^2(s,X(s),\MC(s)))}{\sigma^2_P(s,X(s),\MC(s))}\Big)\text{d}s\Big\}\\
=&\frac{v^{\delta}}{\delta}\IE\Big[\exp\Big\{\int_t^Tg(X(s),\MC(s))\text{d}s\Big\}\Big|X(t)=x,\MC(t)=e_i\Big]\\
&=:\frac{v^{\delta}}{\delta}f(t,x,e_i),
\label{PhiNoRhoProb}
\end{aligned}
\end{align}
and the optimal portfolio strategy is 
\begin{align*}
\bar{\pi}(t)=\underbrace{\frac{1}{1-\delta}\frac{\lambda}{\sigma_P^2}\Big|_{(X(t),\MC(t))}}_{\text{mean-variance portfolio}}=\frac{1}{1-\delta}\frac{\gamma}{\sigma_P}\Big|_{(X(t),\MC(t))}.
%\bar{\pi}(t)\frac{1}{1-\delta}\frac{\lambda\big(X(t),\MC(t)\big)}{\sigma_P^2\big(X(t),\MC(t)\big)}=\frac{1}{1-\delta}\frac{\gamma\big(X(t),\MC(t)\big)}{\sigma_P\big(X(t),\MC(t)\big)}.
\end{align*}
Note that as mentioned before, in the case of $\rho=0$ the optimal strategy consists only on the mean-variance portfolio.
\label{ThExplSolNoRho}
\end{theorem}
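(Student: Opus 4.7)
The plan is to view this theorem as essentially a direct corollary of the earlier verification result Theorem \ref{ThVerifNoRho}, combined with a tower-rule manipulation to translate the abstract representation $\IE[\Phi^{\MC}(t,v,x)\mid \MC(t)=e_i]$ into the closed probabilistic form stated on the right-hand side of (\ref{PhiNoRhoProb}).

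\textbf{Step 1: Check the hypothesis of Theorem \ref{ThVerifNoRho}.} The given time-dependent-model optimal strategy is
$$\bar{\pi}^{m}(t)=\frac{1}{1-\delta}\frac{\gamma(X^m(t),m(t))}{\sigma_P(X^m(t),m(t))},$$
which has exactly the form $\bar{\pi}^m(t)=p\bigl(t,m(t),V^{m,\bar{\pi}^m}(t),X^m(t)\bigr)$ with $p(t,e_i,v,x):=\frac{1}{1-\delta}\gamma(x,e_i)/\sigma_P(x,e_i)$ (in fact independent of $t$ and $v$). So the only nontrivial hypothesis of Theorem \ref{ThVerifNoRho} is satisfied. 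Applying that theorem immediately yields $\Phi(t,v,x,e_i)=\IE[\Phi^{\MC}(t,v,x)\mid\MC(t)=e_i]$ and $\bar{\pi}(t)=p(t,\MC(t),V^{\bar{\pi}}(t),X(t))=\frac{1}{1-\delta}\gamma(X(t),\MC(t))/\sigma_P(X(t),\MC(t))$, which is the advertised optimal strategy.

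\textbf{Step 2: Convert the nested expectation to the joint expectation.} Plugging in the hypothesized form of $\Phi^m$ gives
$$\Phi(t,v,x,e_i)=\frac{v^\delta}{\delta}\,\IE\Bigl[\IE\bigl[\exp\bigl\{\textstyle\int_t^T g(X^{m}(s),m(s))\,\td s\bigr\}\,\big|\,X^m(t)=x\bigr]\Big|_{m=\MC_{t,e_i}}\Bigr].$$
The key observation, which I would state as a small sublemma, is that since $W_X\perp\MC$, conditioning on the whole path of $\MC_{t,e_i}$ makes the stochastic-factor SDE drive $X_{t,x,e_i}$ with \emph{exactly} the same (now deterministic, piecewise-constant-in-state) coefficient functions that define $X^{m}_{t,x}$ when $m$ is chosen equal to the realized path of $\MC_{t,e_i}$. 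In the notation used in the proof of Theorem \ref{ThVerifNoRho}, $X^{\MC}_{t,x,e_i}\equiv X_{t,x,e_i}\mid\{\MC(s)=\MC_{t,e_i}(s),\,\forall s\in[t,T]\}$. Applying the tower rule with the $\sigma$-algebra $\F_T^{\MC_{t,e_i}}$ then collapses the double expectation into
$$\Phi(t,v,x,e_i)=\frac{v^\delta}{\delta}\,\IE\Bigl[\exp\bigl\{\textstyle\int_t^T g(X(s),\MC(s))\,\td s\bigr\}\,\Big|\,X(t)=x,\MC(t)=e_i\Bigr],$$
which is (\ref{PhiNoRhoProb}).

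\textbf{Main obstacle.} Both steps are essentially accounting; the substantive content has already been packaged into Theorem \ref{ThVerifNoRho}. The only point that deserves care is the sublemma of Step 2, namely the distributional identification of $X^{\MC}_{t,x,e_i}$ with $X_{t,x,e_i}$ conditional on the path of $\MC_{t,e_i}$. This relies crucially on the independence of $W_X$ and $\MC$ (which also explains why the argument does not need the positive-correlation machinery used later for $\rho\neq 0$) and on the fact that conditioning on $\F_T^{\MC_{t,e_i}}$ renders the Markov-modulated coefficients measurable and piecewise constant in time, so that the resulting conditional law of $X$ is that of the time-dependent process $X^m$ for $m$ equal to the realized sample path. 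Once this identification is made the proof is finished.
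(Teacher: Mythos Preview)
Your proposal is correct and follows essentially the same approach as the paper's proof, which consists of the single sentence ``Verify that the optimal portfolio in the time-dependent model does not depend on the whole path of function $m$ and apply Theorem \ref{ThVerifNoRho}.'' Your Step 2 (the tower-rule identification of the nested expectation with the joint expectation) is something the paper had already carried out in the discussion immediately preceding Theorem \ref{ThVerifNoRho}, so it does not reappear in the formal proof; otherwise the arguments coincide.
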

%--------------------------------------------------------------------------
\begin{proof}
Verify that the optimal portfolio in the time-dependent model does not depend on the whole path of function $m$ and apply Theorem \ref{ThVerifNoRho}.\\
$\mbox{}$\halmos
\end{proof}
%------------------------------------
After deriving the general solution we are now interested in simplifying the probabilistic representation from Theorem \ref{ThExplSolNoRho}. To this aim, we consider a special model presented in what follows.
%--------------------------------------
\subsubsection{Separable Markov-modulated affine model (SMMAF) with $\rho=0$}
\label{SubsunNoRhoSepa}
%--------------------------------------
 In this section we assume separability of function $f(t,x,e_i)$ in $x$ and $e_i$, i.e. we consider the following ansatz:
\begin{align*}
f(t,x,e_i)=\bar\xi(t,e_i)\exp\big\{B(t)x\big\},
\end{align*}
for some functions $\bar\xi:[0,T]\times\E\rightarrow \IR$ and $B:[0,T]\rightarrow \IR$. In order to be able to separate the PDE for $f$, we need to assume that the coefficients that appear in front of the stochastic factor $X$ do not depend on the Markov chain. More precisely, the following structure for the model parameters, called (SMMAF), is considered:
\begin{align}
\begin{aligned}
\gamma^2(x,e_i)&=\Gamma^{(1)}(e_i)+\bar\Gamma^{(2)}x\\
\mu_X(x,e_i)&=\mu_X^{(1)}(e_i)+\bar\mu_X^{(2)}x\\
\sigma^2_X(x,e_i)&=\Sigma_X^{(1)}(e_i)+\bar\Sigma_X^{(2)}x
\end{aligned}
\tag{SMMAF}
\label{SMMAF}
\end{align}
%\begin{numcases}{(SMMAF):=}
%\gamma^2(x,e_i)&$=\Gamma^{(1)}(e_i)+\bar\Gamma^{(2)}x$\\
%\mu_X(x,e_i)&$=\mu_X^{(1)}(e_i)+\bar\mu_X^{(2)}x$\\
%\sigma^2_X(x,e_i)&$=\Sigma_X^{(1)}(e_i)+\bar\Sigma_X^{(2)}$x
%\label{ParamCase2NoRho}
%\end{numcases}
where $\bar\Gamma^{(2)},\bar\Sigma_X^{(2)}\in\IR_{\geq0}$, $\bar\mu_X^{(2)}\in\IR$ are some constants. The probabilistic representation from Theorem \ref{ThExplSolNoRho} still cannot be simplified, as process $X$ depends on the Markov chain. However direct substitution of the ansatz and the parameter specifications in PDE (\ref{PDEfNoRho}) turns to be useful. It leads to the following system of PDEs for function $\bar\xi(t,e)$, for all $t\in[0,T]$:
\begin{align}
\begin{aligned}
&\bar\xi_t(t,e_i)+\bar\xi(t,e_i)\underbrace{\big[\delta r(e_i)+\frac{1}{2}\frac{\delta}{1-\delta}\Gamma^{(1)}(e_i)+B(t)\mu_X^{(1)}(e_i)+\frac{1}{2}B^2(t)\Sigma_X^{(1)}(e_i)\big]}_{=:w(t,e_i)}\\
&=-\sum_{j=1}^{l}q_{ij}\bar\xi(t,e_j),\ \ \bar\xi(T,e_i)=1,\forall i\in E,
\end{aligned}
\label{PDExi}
\end{align}
and the following ODE for $B(t)$:
\begin{align}
B_t(t)+\frac{1}{2}B^2(t)\bar\Sigma_X^{(2)}+B(t)\bar\mu_X^{(2)}+\frac{1}{2}\frac{\delta}{1-\delta}\bar\Gamma^{(2)}=0,B(T)=0\label{RiccatiNoRho}.
\end{align}
Equation (\ref{RiccatiNoRho}) can be solved by Lemma \ref{jointCharF}.\\
In general, the solution for System (\ref{PDExi}) cannot be derived in a closed form, however, one can approximate it by the so-called Magnus exponential series, for details see \cite{Magnus1954}. Applying a numerical solving scheme is also possible. Alternatively, Corollary \ref{CorFKLS} provides the following probabilistic representation for  function $\bar\xi(t,e)$:
\begin{align}
\bar\xi(t,e_i)=\IE\Big[\exp\Big\{\int_{t}^{T}w(s,\MC(s))\text{d}s\Big\}\Big|\MC(t)=e_i\Big],\forall (t,e_i)\in[0,T]\times\E,
\label{XiProb}
\end{align}
as $w(t,e)$ and $\frac{\partial}{\partial t}w(t,e)$ are continuous in $t$.
%Theorem \ref{ExtLaplTr} provides further:
%\begin{align}
%\bar\xi(t,e_i)=\left\langle\exp\big\{\int_{t}^{T}Q^T+diag\big\{W(s)\big\}\text{d}s\big\}e_i, \left(\begin{array}{c}1\\1\\\vdots\\1\end{array}\right)\right\rangle, \forall e_i\in\E,
%\label{xiNoRho}
%\end{align}
%where $W(s)=\big(w(s,e_1),\ldots,w(s,e_l)\big)^{\prime}$. A direct substitution in Equation \ref{PDExi} confirms that $\bar\xi$ is indeed its solution.\\
%SO, one can obtain function $\bar\xi$ either by its probabilistic Representation (\ref{XiProb}) and a Monte Carlo Simulation or by solving numerically the system of coupled PDEs (\ref{PDEx}).\\
%-----------------------------------------------------------------------------------------
%-----------------------------------------------------------------------------------------
In the following theorem we summarize what we have just derived:
%---------------------------------------------------
%--------------------------------------------------------------------------
\begin{theorem}[Solution and verification in the separable case with $\rho=0$]$\mbox{}$\\
Let the model specifications be given by \ref{SMMAF} and assume that Equation (\ref{RiccatiNoRho}) admits a unique differentiable solution $B$. Then the solution of the corresponding HJB equation is given by: 
\begin{align}
\begin{aligned}
\Phi(t,v,x,e_i)=&\frac{v^{\delta}}{\delta}\IE\Big[\exp\Big\{\int_{t}^{T}w(s,\MC(s))\text{d}s\Big\}\Big|\MC(t)=e_i\Big]\exp\{B(t)x\}\\
=&\frac{v^{\delta}}{\delta}\bar\xi(t,e_i)\exp\{B(t)x\}, \forall (t,v,x,e_i)\in[0,T]\times\IR_{\geq0}\times D_X\times \E,
\end{aligned}
\label{PhiNoRho}
\end{align}
where $w(t,e_i)=\delta r(e_i)+\frac{1}{2}\frac{\delta}{1-\delta}\Gamma^{(1)}(e_i)+B(t)\mu_X^{(1)}(e_i)+\frac{1}{2}B^2(t)\Sigma_X^{(1)}(e_i)$, for all $(t,e_i)\in[0,T]\times\E$. Note that $\Phi\in\mathcal{C}^{1,2,2}$ for all $e_i\in\E$.\\
If $\bar\Sigma_X^{(2)}\not=0$, $\bar\mu_X^{(2)}<0$, $\frac{\delta}{1-\delta}\bar\Gamma^{(2)}<\frac{\big(\bar\mu_X^{(2)}\big)^2}{\bar\Sigma_X^{(2)}}$ and $0\leq\frac{a-\bar\mu_X^{(2)}}{\bar\Sigma_X^{(2)}}$, for $a:=\sqrt{\big(\bar\mu_X^{(2)}\big)^2-\frac{\delta}{1-\delta}\bar\Gamma^{(2)} \bar\Sigma_X^{(2)}}$, then function $B$ is given by\footnote{As we will see in Section \ref{SecHeston}, for the important example where $X$ follows a CIR process it holds $\bar\mu_X^{(2)}<0$.}:
\begin{align*}
B(t)=\begin{cases}\frac{-c(-\bar\mu_X^{(2)}+a)\exp\{-a(T-t)\}-\bar\mu_X^{(2)}-a}{\bar\Sigma_X^{(2)}(1-c\exp\{-a(T-t)\})}&\text{for } 0<\frac{a-\bar\mu_X^{(2)}}{\bar\Sigma_X^{(2)}}\\0&\text{for } 0=\frac{a-\bar\mu_X^{(2)}}{\bar\Sigma_X^{(2)}}\end{cases},
\end{align*}
with $c:=\frac{-\bar\mu_X^{(2)}-a}{-\bar\mu_X^{(2)}+a}$.\\
Further, consider an arbitrary but fixed path $m$ of the Markov chain and consider the time-dependent model associated with $m$. Then, the solution $\Phi^m$ to its HJB equation is given by:
\begin{align*}
\Phi^{m}(t,v,x)&=\frac{v^{\delta}}{\delta}\exp\Big\{\int_t^Tw\big(s,m(s)\big)\text{d}s\Big\}\exp\big\{B(t)x\big\}.
\end{align*}
Now assume that for all $m$, $\Phi^{m}$ is indeed the value function in the corresponding time-dependent model. Then, function $\Phi$ is the value function for the model with Markov switching and the optimal portfolio is:
\begin{align*}
\bar{\pi}(t)\frac{1}{1-\delta}\frac{\lambda}{\sigma_P^2}\Big|_{(X(t),\MC(t))}=\frac{1}{1-\delta}\frac{\gamma}{\sigma_P}\Big|_{(X(t),\MC(t))}.
%\bar\pi(t)=\frac{1}{1-\delta}\frac{\lambda\big(X(t),\MC(t)\big)}{\sigma_P^2\big(X(t),\MC(t)\big)}=\frac{1}{1-\delta}\frac{\gamma\big(X(t),\MC(t)\big)}{\sigma_P\big(X(t),\MC(t)\big)}.
\end{align*}
\label{ThExplSolNoRhoSpecCase}
\end{theorem}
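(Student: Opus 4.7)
The plan is to proceed in four natural steps: separate the ansatz into an $x$-Riccati and an $e_i$-linear system, solve the Riccati explicitly, pin down the time-dependent case, and then apply Theorem \ref{ThVerifNoRho} to pass to the Markov-switching model. I would begin by inserting $f(t,x,e_i)=\bar\xi(t,e_i)\exp\{B(t)x\}$ into system (\ref{PDEfNoRho}) and using the (SMMAF) specifications. After dividing through by $\exp\{B(t)x\}$, the left-hand side is a first-degree polynomial in $x$, and crucially its $x^1$ coefficient involves only the state-independent constants $\bar\Gamma^{(2)}$, $\bar\mu_X^{(2)}$, $\bar\Sigma_X^{(2)}$. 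Matching the $x^0$ and $x^1$ coefficients separately therefore decouples the system into the scalar Riccati ODE (\ref{RiccatiNoRho}) for $B$ and the linear, coupled system (\ref{PDExi}) for $\bar\xi$; uniqueness of the Riccati solution is assumed, and the probabilistic representation (\ref{XiProb}) follows from Corollary \ref{CorFKLS} applied to the continuous coefficient $w(t,e_i)$.

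Second, solving (\ref{RiccatiNoRho}) is a routine constant-coefficient exercise. Under $(\bar\mu_X^{(2)})^2>\frac{\delta}{1-\delta}\bar\Gamma^{(2)}\bar\Sigma_X^{(2)}$ the quadratic on the right has two distinct real roots expressible in terms of $a$; separating variables and imposing $B(T)=0$ yields the stated closed form with the given constant $c$. The degenerate case $a=\bar\mu_X^{(2)}$ occurs precisely when $\bar\Gamma^{(2)}=0$, in which case $B\equiv 0$ trivially satisfies both the ODE and the terminal condition.

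Third, for the time-dependent model induced by a fixed path $m$, I would run the same separation argument on PDE (\ref{PDETimeDepfm}) with $\rho=0$. Because the $x$-dependent coefficients are state-independent under (SMMAF), the Riccati ODE for the $x$-coefficient coincides with (\ref{RiccatiNoRho}) and is solved by the same function $B$, independent of $m$; the remaining $x^0$-equation becomes a scalar linear ODE $\bar\xi^m_t(t)+\bar\xi^m(t)w(t,m(t))=0$ with $\bar\xi^m(T)=1$, whose explicit solution is $\exp\{\int_t^T w(s,m(s))\,\text{d}s\}$. This produces the claimed form of $\Phi^m$.

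Finally, to transfer the result to the Markov-switching model I would invoke Theorem \ref{ThVerifNoRho}. Its hypothesis, that $\bar\pi^m(t)$ depends only on $(t,m(t),V^{m,\bar\pi^m}(t),X^m(t))$ and not on the rest of the path of $m$, is satisfied because for $\rho=0$ the time-dependent optimum reduces to the myopic term $\frac{1}{1-\delta}\gamma/\sigma_P$ with no dependence on $B$ or on the history of $m$. The theorem then yields $\Phi(t,v,x,e_i)=\IE[\Phi^{\MC}(t,v,x)\mid\MC(t)=e_i]$, and substituting the explicit $\Phi^m$ and using that $B(t)$ is deterministic produces (\ref{PhiNoRho}), together with the optimal portfolio $\bar\pi(t)=\frac{1}{1-\delta}\gamma/\sigma_P|_{(X(t),\MC(t))}$. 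The only step that truly needs care is the bookkeeping in the separation to confirm that no $x^1$-term couples across states of $\MC$; once that is in place, everything else is direct computation or an appeal to the already-established verification result.
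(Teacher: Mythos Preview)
Your proposal is correct and follows essentially the same approach as the paper: separate the ansatz to obtain the Riccati ODE (\ref{RiccatiNoRho}) and the coupled linear system (\ref{PDExi}), invoke Corollary \ref{CorFKLS} for the probabilistic representation of $\bar\xi$, obtain the explicit $B$ (the paper appeals to Lemma \ref{jointCharF} rather than separating variables by hand), identify $\Phi^m$ in the time-dependent model (the paper does this via Theorem \ref{ThSolTimeD} rather than repeating the separation), and then apply Theorem \ref{ThVerifNoRho} using that $\bar\pi^m$ is myopic. One small slip: your remark that the degenerate case $a-\bar\mu_X^{(2)}=0$ corresponds to $\bar\Gamma^{(2)}=0$ is off, since under $\bar\mu_X^{(2)}<0$ one has $a=|\bar\mu_X^{(2)}|=-\bar\mu_X^{(2)}$ in that situation, not $a=\bar\mu_X^{(2)}$; the equality branch in the statement is in fact vacuous under the stated hypotheses.
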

%--------------------------------------------------------------------------
\begin{proof}
The first statement follows directly form the derivations above the theorem and Corollary \ref{CorFKLS}, which can be applied as $w(t,e_i)$ and $\frac{\partial}{\partial t}w(t,e_i)$ are continuous in $t$ for all $e_i\in\E$. For the explicit representation of $B$ consider additionally Lemma \ref{jointCharF}.\\
For the HJB solution in the time-dependent model observe that $B(t)$ solves Equation (\ref{Riccati}) for the considered model specification and  apply Theorem \ref{ThSolTimeD}. Remembering that for $\rho=0$, $\vartheta=1$ and $\zeta^{(1)}=0$, it is easily verified that the term in the integral in Equation (\ref{EqSolTimeD}) corresponds to $w(s,m(s))$. So, $\Phi^m$ solves the time-dependent HJB equation and by assumption it is its value function. Observe that for function $\Phi$ it holds:
\begin{align*}
\Phi(t,v,x,e_i)=\IE[\Phi^{\MC}(t,v,x)|\MC(t)=e_i].
\end{align*}
Further, the optimal strategy in the time-dependent model is given by:
\begin{align*}
\bar{\pi}^m=\frac{1}{1-\delta}\frac{\lambda}{\sigma_P^2}\Big|_{(X^m(t),m(t))}=\frac{1}{1-\delta}\frac{\gamma}{\sigma_P}\Big|_{(X^m(t),m(t))}.
%\bar{\pi}^m=\frac{1}{1-\delta}\frac{\lambda\big(X^m(t),m(t)\big)}{\sigma_P^2\big(X^m(t),m(t)\big)}=\frac{1}{1-\delta}\frac{\gamma\big(X^m(t),m(t)\big)}{\sigma_P\big(X^m(t),m(t)\big)}.
\end{align*}
As it does not depend on the whole path of $m$, but only on its current level, we can apply Theorem \ref{ThVerifNoRho}. The statement follows directly.\\
$\mbox{}$\halmos
\end{proof}
Before illustrating these results by the example of the Heston model, we present the solution in the case, where we allow for instantaneous correlation between the stock price process and the stochastic factor.
%----------------------------------------------------------------------
%----------------------------------------------------------------------
\subsection{Markov modulated affine models with leverage}
\label{SubsRho}
%----------------------------------------------------------------------
%----------------------------------------------------------------------
In this section we consider the general Model (\ref{GeneralModel}), so function $f$ is characterized by System (\ref{PDEfgeneral}). We were not able to find its solutions in general, mainly because of the nonlinear term. Unfortunately a transformation like the one in Section \ref{SubsTimedep} does not work in the most general case, because here we have a system of coupled PDEs. What is more, Theorem \ref{ThVerifNoRho} cannot be applied in this case, as in general $\bar{\pi}^m$ depends on the whole path of $m$. The key to find a solution in this case is to assume a separable exponential ansatz:
\begin{align}
f(t,x,e_i)=\xi(t,e_i)\exp\big\{D(t)x\big\},
\label{ansatz2Rho}
\end{align}
for some functions $\xi:[0,T]\times\E\rightarrow\IR$, $D :[0,T]\rightarrow\IR$, which of course implies certain restrictions on the model parameters. More precisely, to be able to separate the considered PDE in $e_i$ and $x$ we assume the following parameter specifications, called ($\text{SMMAF}^{\rho}$):
\begin{align}
\begin{aligned}
\gamma^2(x,e_i)&=\Gamma^{(1)}(e_i)+\bar\Gamma^{(2)}x\\
\mu_X(x,e_i)&=\mu_X^{(1)}(e_i)+\bar\mu_X^{(2)}x\\
\sigma^2_X(x,e_i)&=b^2(\Gamma^{(1)}(e_i)+\bar\Gamma^{(2)}x)
\end{aligned}
\tag{$\text{SMMAF}^{\rho}$}
\label{SMMAFrho}
\end{align}
%\begin{numcases}{(SMMAF^*):=}
%\gamma^2(x,e_i)&$=\Gamma^{(1)}(e_i)+\bar\Gamma^{(2)}x$\\
%\mu_X(x,e_i)&$=\mu_X^{(1)}(e_i)+\bar\mu_X^{(2)}x$\\
%\sigma^2_X(x,e_i)&$=b^2(\Gamma^{(1)}(e_i)+\bar\Gamma^{(2)}x)$
%\label{Case2Rho}
%\end{numcases}
for some real constants $\bar\Gamma^{(2)}\in\IR_{\geq0}$, $b\in\IR_{>0}$ and $\bar\mu_X^{(2)}\in\IR$. Note that this parameter setting is a special case of the specifications in System (\ref{SMMAF}), where we additionally need to assume that $\sigma_X$ is a constant multiple of $\gamma$, or equivalently $\gamma(x,e_i)=\frac{1}{b}\sigma_X(x,e_i)$, so the market price of risk is proportional to the the volatility of the stochastic factor.\\
By inserting (\ref{ansatz2Rho}) in PDE (\ref{PDEfgeneral}) we obtain the following equations for functions $\xi(t,e)$ and $D(t)$, for all $t\in[0,T]$:
\begin{align}
&\xi(t,e_i)\underbrace{\Big[\delta r(e_i)+\frac{1}{2}\frac{\delta}{1-\delta}\Gamma^{(1)}(e_i)+D\mu_X^{(1)}(e_i)+D\frac{\delta}{1-\delta}\rho b\Gamma^{(1)}(e_i)+\frac{1}{2}D^2\frac{b^2}{\vartheta}\Gamma^{(1)}(e_i)\Big]}_{=:\upsilon(t,e_i)}\notag\\
&+\xi_t(t,e_i)=-\sum_{j=1}^{l}q_{ij}\xi(t,e_j),\xi(T,e_i)=1,\forall i\in E\\
&D_t+\frac{1}{2}\frac{\delta}{1-\delta}\bar\Gamma^{(2)}+D\big[\bar\mu_X^{(2)}+\frac{\delta}{1-\delta}\rho b\bar\Gamma^{(2)}\big]+\frac{1}{2}D^2\frac{b^2}{\vartheta}\bar\Gamma^{(2)}=0, D(T)=0,\label{RiccatiRho}
\end{align}
where $\vartheta=\frac{1-\delta}{1-\delta+\delta\rho^2}$. Analogously as before, we obtain from Corollary \ref{CorFKLS} the following expression for $\xi$:
\begin{align}
\xi(t,e_i)&=\IE\Big[\exp\Big\{\int_{t}^{T}\upsilon(s,\MC(s))\text{d}s\Big\}\Big|\MC(t)=e_i\Big], \forall e_i\in\E.
\label{xiRho}
\end{align}
Observe that it is continuous and differentiable w.r.t $t$. As mentioned in the previous section, $\xi$ can be computed in general either by numerically solving the system of ODEs, by  Magnus series or by a Monte Carlo Simulation for (\ref{xiRho}). For $D$ we need to solve a Riccati ODE with constant parameters (see Lemma \ref{jointCharF}). Let us summarize what we have just computed and provide the needed verification result. 
%-----------------------------------------------------------------------------------------
\begin{theorem}[Solution and verification with leverage]$\mbox{}$\\
Consider the model specification as in (\ref{SMMAFrho}) and assume that Equation (\ref{RiccatiRho}) pocesses a unique differentiable solution $D$. Then the solution of the corresponding HJB equation is given by: 
\begin{align}
\begin{aligned}
\Phi(t,v,x,e_i)=&\frac{v^{\delta}}{\delta}\IE\Big[\exp\Big\{\int_{t}^{T}\upsilon(s,\MC(s))\text{d}s\Big\}\Big|\MC(t)=e_i\Big]\exp\{D(t)x\}\\
=&\frac{v^{\delta}}{\delta}\xi(t,e_i)\exp\{D(t)x\}, \forall (t,v,x,e_i)\in[0,T]\times\IR_{\geq0}\times D_X\times \E,
\end{aligned}
\label{PhiRho}
\end{align}
where function $\xi$ is given as in Equation (\ref{xiRho}). Note that $\Phi\in\mathcal{C}^{1,2,2}$ for all $e_i\in\E$.\\
If $b\bar\Gamma^{(2)}\not=0$, $\bar\mu_X^{(2)}+\frac{\delta}{1-\delta}\rho b \bar\Gamma^{(2)}<0$, $\frac{\delta}{1-\delta}\bar\Gamma^{(2)}<\frac{\vartheta (\bar\mu_X^{(2)}+\frac{\delta}{1-\delta}\rho b\bar\Gamma^{(2)})^2}{b^2\bar\Gamma^{(2)}}$ and $0\leq\frac{\vartheta(a-\bar\mu_X^{(2)}-\frac{\delta}{1-\delta}\rho b\bar\Gamma^{(2)})}{b^2\bar\Gamma^{(2)}}$, for $a:=\sqrt{(\bar\mu_X^{(2)}+\frac{\delta}{1-\delta}\rho b\bar\Gamma^{(2)})^2-\frac{\delta}{1-\delta}\frac{b^2}{\vartheta}(\bar\Gamma^{(2)})^2}$, then function $D$ is given by:
\begin{align*}
D(t)=\begin{cases}\frac{\vartheta\big(-c(-\bar\mu_X^{(2)}-\frac{\delta}{1-\delta}\rho b\bar\Gamma^{(2)}+a)\exp\{-a(T-t)\}-\bar\mu_X^{(2)}-\frac{\delta}{1-\delta}\rho b\bar\Gamma^{(2)}-a\big)}{b^2\bar\Gamma^{(2)}(1-c\exp\{-a(T-t)\})}&\text{for } 0<\frac{\vartheta(a-\bar\mu_X^{(2)}-\frac{\delta}{1-\delta}\rho b\bar\Gamma^{(2)})}{b^2\bar\Gamma^{(2)}}\\0&\text{for } 0=\frac{\vartheta(a-\bar\mu_X^{(2)}-\frac{\delta}{1-\delta}\rho b\bar\Gamma^{(2)})}{b^2\bar\Gamma^{(2)}}\end{cases},
\end{align*}
with $c:=\frac{-\bar\mu_X^{(2)}-\frac{\delta}{1-\delta}\rho b\bar\Gamma^{(2)}-a}{-\bar\mu_X^{(2)}-\frac{\delta}{1-\delta}\rho b\bar\Gamma^{(2)}+a}$.\\
Further, consider an arbitrary but fixed path $m$ of the Markov chain and consider the time-dependent model associated with $m$. Then, the solution $\Phi^m$ to its HJB equation is given by:
\begin{align}
\Phi^{m}(t,v,x)&=\frac{v^{\delta}}{\delta}\exp\Big\{\int_t^T\upsilon\big(s,m(s)\big)\text{d}s\Big\}\exp\big\{D(t)x\big\}.
\label{HelpPhiTimeD}
\end{align}
Assume that for all $m$, $\Phi^{m}$ is indeed the value function in the corresponding time-dependent model. Then, function $\Phi$ is the value function for the model with Markov switching and the optimal portfolio is given by:
\begin{align*}
\bar{\pi}&=\Big\{\underbrace{\frac{1}{1-\delta}\frac{\lambda}{\sigma^2_P}}_{\text{mean-var. portf.}}+\underbrace{\frac{1}{1-\delta}\rho\frac{\sigma_X}{\sigma_P}D}_{\text{hedging term}}\Big\}\Big|_{(t,(X(t),\MC(t))}=\frac{1}{1-\delta}\frac{1}{\sigma_P}\Big\{\gamma+\rho\sigma_XD(t)\Big\}\Big|_{(t,(X(t),\MC(t))}.
%\bar{\pi}&=\underbrace{\frac{1}{1-\delta}\frac{\lambda\big(X(t),\MC(t)\big)}{\sigma^2_P\big(X(t),\MC(t)\big)}}_{\text{mean-variance portfolio}}+\underbrace{\frac{1}{1-\delta}\rho\frac{\sigma_X\big(X(t),\MC(t)\big)}{\sigma_P\big(X(t),\MC(t)\big)}D(t)}_{\text{hedging term}}\\
%&=\frac{1}{1-\delta}\frac{1}{\sigma_P\big(X(t),\MC(t)\big)}\Big(\gamma\big(X(t),\MC(t)\big)+\rho\sigma_X\big(X(t),\MC(t)\big)D(t)\Big).
\end{align*}
\label{ThVerifRho}
\end{theorem}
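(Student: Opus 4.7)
The plan is to split the proof into four steps, mirroring the structure of the statement: verify that the separable ansatz solves the HJB system; solve the scalar Riccati ODE in closed form; identify $\Phi^m$ with the time-dependent value function via Theorem \ref{ThSolTimeD}; and then promote this to a statement about the Markov switching value function using Theorem \ref{ThVerifNoRho}.

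The first step is essentially already carried out in the derivations preceding the theorem: substituting the separable ansatz (\ref{ansatz2Rho}) together with the specification (\ref{SMMAFrho}) into the HJB system (\ref{PDEfgeneral}) decouples the $x$- and $e_i$-dependence, producing the coupled linear system for $\xi$ with probabilistic representation (\ref{xiRho}) (obtained from Corollary \ref{CorFKLS}, whose regularity hypothesis on $\upsilon$ is trivially met because $D$ is smooth) and the scalar Riccati (\ref{RiccatiRho}) for $D$. For the closed form of $D$, I would write (\ref{RiccatiRho}) as $D_t + \alpha D^2 + \beta D + \gamma_0 = 0$ with $\alpha = \tfrac{b^2\bar\Gamma^{(2)}}{2\vartheta}$, $\beta = \bar\mu_X^{(2)}+\tfrac{\delta}{1-\delta}\rho b\bar\Gamma^{(2)}$, $\gamma_0 = \tfrac{\delta\bar\Gamma^{(2)}}{2(1-\delta)}$, and integrate by the standard linearizing substitution (or simply quote Lemma \ref{jointCharF}). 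The stated sign conditions translate into $\beta^2 - 4\alpha\gamma_0 = a^2 \geq 0$ and ensure that the integration constant $c$ is well-defined; the terminal condition $D(T)=0$ then fixes $c$ to the displayed value.

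For the identification of $\Phi^m$, I would invoke Theorem \ref{ThSolTimeD} for an arbitrary $m \in \IM$. The crucial observation is that under (\ref{SMMAFrho}) all coefficients entering the Riccati (\ref{Riccati}) for $B^m$ (namely $\Sigma_X^{(2)} = b^2\bar\Gamma^{(2)}$, $\mu_X^{(2)} = \bar\mu_X^{(2)}$, $\zeta^{(2)} = \rho b\bar\Gamma^{(2)}$ and $\Gamma^{(2)} = \bar\Gamma^{(2)}$) are in fact independent of $m$. A direct algebraic check by multiplying (\ref{Riccati}) by $\vartheta$ and setting $D := \vartheta B^m$ shows that $D$ satisfies exactly (\ref{RiccatiRho}) with the same terminal condition; by uniqueness, $\vartheta B^m(t) = D(t)$ for every $m$. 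Inserting this into (\ref{EqSolTimeD}) collapses the integrand to $\upsilon(s,m(s))$, yielding (\ref{HelpPhiTimeD}), and the optimal strategy (\ref{OptimStratTimeDepGen}) becomes the stated $\bar\pi^m$, which depends on $m$ only through its current level $m(t)$.

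This path-independence of $\bar\pi^m$ is precisely the hypothesis of Theorem \ref{ThVerifNoRho}, which by Remark \ref{RemarkVerifRho} applies also for $\rho \neq 0$. Under the standing assumption that each $\Phi^m$ is indeed the value function of the time-dependent problem, this yields $\Phi(t,v,x,e_i) = \IE[\Phi^{\MC}(t,v,x)\mid \MC(t)=e_i]$ as the Markov modulated value function and $\bar\pi$ as optimal. The main obstacle, and the real reason for imposing (\ref{SMMAFrho}) rather than just (\ref{MMAF}) in the leveraged case, is exactly this path-independence of $\vartheta B^m$: without the restriction $\sigma_X = b\gamma$ and the constancy of $\bar\Gamma^{(2)}$, $\bar\mu_X^{(2)}$, the coefficients of the auxiliary Riccati would still switch with $m$, so that $\bar\pi^m$ would inherit path-dependence through $B^m$ and Theorem \ref{ThVerifNoRho} could no longer be invoked.
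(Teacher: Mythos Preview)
Your proposal is correct and follows essentially the same route as the paper: derive $\xi$ via Corollary \ref{CorFKLS}, obtain the closed form for $D$ from Lemma \ref{jointCharF}, identify $\vartheta B^m$ with $D$ (the paper writes this as $B(t):=\tfrac{1}{\vartheta}D(t)$ solving (\ref{Riccati}), which is the same relation read in the other direction), check that the integrand in (\ref{EqSolTimeD}) collapses to $\upsilon(s,m(s))$, and then invoke Theorem \ref{ThVerifNoRho} together with Remark \ref{RemarkVerifRho} using the path-independence of $\bar\pi^m$. Your added remark explaining \emph{why} the restriction (\ref{SMMAFrho}) is needed---namely, that otherwise the coefficients of the Riccati (\ref{Riccati}) would switch with $m$ and $\bar\pi^m$ would become path-dependent---is a useful clarification that the paper leaves implicit.
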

%-------------------------------------------------------------------------
\begin{proof}
For the first statement consider the derivations above and Corollary \ref{CorFKLS}. The explicit expression for function $D$ follows by Lemma \ref{jointCharF}. For the proof of Equation (\ref{HelpPhiTimeD}), observe that $B(t):=\frac{1}{\vartheta}D(t)$ solves Equation \ref{Riccati} for the considered model specification and  apply Theorem \ref{ThSolTimeD}. It is easily verified that the term in the integral in Equation (\ref{EqSolTimeD}) corresponds to $\upsilon(s,m(s))$. So, $\Phi^m$ solves the time-dependent HJB equation and by assumption it is its value function. Observe that for function $\Phi$ it holds:
\begin{align*}
\Phi(t,v,x,e_i)=\IE[\Phi^{\MC}(t,v,x)|\MC(t)=e_i].
\end{align*}
Further, the optimal strategy in the time-dependent model is given by:
\begin{align*}
\bar{\pi}^m&=\frac{1}{1-\delta}\Big\{\frac{\lambda}{\sigma^2_P}+\rho\frac{\sigma_X}{\sigma_P}D\Big\}\Big|_{(t,(X^m(t),m(t))}=\frac{1}{1-\delta}\frac{1}{\sigma_P}\Big\{\gamma+\rho\sigma_XD\Big\}\Big|_{(t,(X^m(t),m(t))}.
%\bar{\pi}^m&=\frac{1}{1-\delta}\Big\{\frac{\lambda\big(X^m(t),m(t)\big)}{\sigma^2_P\big(X^m(t),m(t)\big)}+\rho\frac{\sigma_X\big(X^m(t),m(t)\big)}{\sigma_P\big(X^m(t),m(t)\big)}D(t)\Big\}\\
%&=\frac{1}{1-\delta}\frac{1}{\sigma_P\big(X^m(t),m(t)\big)}\Big(\gamma\big(X^m(t),m(t)\big)+\rho\sigma_X\big(X^m(t),m(t)\big)D(t)\Big).
\end{align*}
As it does not depend on the whole path of $m$, but only on its current level, we can apply Theorem \ref{ThVerifNoRho} and \ref{RemarkVerifRho}. The statement follows directly.\\
$\mbox{}$\halmos
\end{proof}
 
%----------------------------------------------------------------------
%----------------------------------------------------------------------
\section{Example: Markov modulated Heston model (MMH)}
\label{SecHeston}
In this section we apply the derived results to the famous Heston model, where the stochastic factor follows a mean reverting CIR process and is interpreted as the stochastic volatility of the asset price process. The original model was introduced in \cite{Heston1993}. Optimal portfolios under the original Heston model are derived in \cite{Kraft2005} and \cite{Kallsen2008}.\\
In what follows we extend this framework to Markov switching parameters. More precisely, we consider the following model:
\begin{align}
\begin{aligned}
&\text{d}  P_0(t)= P_0(t)r{\big(\MC(t)\big)}\text{d}t\\
&	\text{d} P_1(t)=P_1(t)\Big[r\big(\MC(t)\big)+\hat{\lambda}\big(\MC(t)\big)X(t)\text{d} t+\nu\big(\MC(t)\big)\sqrt{X(t)}\text{d} W_P(t)\Big]\\
	&\text{d}{X}(t)=\kappa\big(\MC(t)\big)(\theta\big(\MC(t)\big)-X(t))\text{d}t+\chi\big(\MC(t)\big)\sqrt{X(t)}\text{d}W_X(t)\\
&	\text{d}\langle W_P, W_X\rangle(t)=\rho,
\label{MCHeston}
\end{aligned}
\end{align}
with initial values $P_0(0)=p_0$, $P_1(0)=p_1$ and $X(0)=x_0$ and $r,\hat{\lambda},\nu,\kappa,\theta,\chi:\E\rightarrow\IR$ being deterministic functions with $\kappa(e_i),\theta(e_i),\chi(e_i)>0$, for all $e_i\in\E$. Furthermore, it is assumed that $2\kappa(e_i)\theta(e_i)\geq\chi^2(e_i)$, for all $e_i\in\E$, in order to assure the positivity of process $X^m$. Observe that this framework corresponds to the following parameter specifications in terms of the notation from Model (\ref{GeneralModel}) and (\ref{MMAF}):
\begin{align}
\begin{aligned}
\sigma_P(x,e_i)&=\nu(e_i)\sqrt{x}&&\\
\gamma(x,e_i)&=\frac{\hat{\lambda}(e_i)}{\nu(e_i)}\sqrt{x}&&\Rightarrow \Gamma^{(1)}(e_i)=0,\ \  \Gamma^{(2)}(e_i)=\frac{\hat{\lambda}^2(e_i)}{\nu^2(e_i)}\\
\mu_X(x,e_i)&=\kappa(e_i)\big(\theta(e_i)-x\big)&&\Rightarrow \mu_X^{(1)}(e_i)=\kappa(e_i)\theta(e_i),\ \  \mu_X^{(2)}(e_i)=-\kappa(e_i)\\
\sigma_X(x,e_i)&=\chi(e_i)\sqrt{x}&&\Rightarrow \Sigma_X^{(1)}(e_i)=0,\ \  \Sigma_X^{(2)}(e_i)=\chi^2(e_i)\\
\rho\gamma(x,e_i)\sigma_X(x,e_i)&=\rho\frac{\hat{\lambda}(e_i)}{\nu(e_i)}\chi(e_i)x&&\Rightarrow \zeta^{(1)}(e_i)=0,\ \ \zeta^{(2)}(e_i)=\rho\frac{\hat{\lambda}(e_i)}{\nu(e_i)}\chi(e_i),
\end{aligned}
\tag{MMH}
\label{MMH}
\end{align}
for all $(t,e_i)\in[0,T]\times \E$. A similar model is used for pricing of volatility swaps in \cite{Elliott2007}.\\
As before, we first derive the optimal portfolio strategy and the value function in the corresponding time-dependent model (Section \ref{SubsHestonTimedep}). Afterwards, in Section \ref{SubsHestonNoRhoModel}, we present the results for the Markov modulated model without correlation between the Brownian motion driving the risky asset price process and the one for the volatility. Section \ref{SubsHestonRhoModel} deals with the case with correlation. 
%----------------------------------------------------------------------
\subsection{Time-dependent Heston model}
\label{SubsHestonTimedep}
%----------------------------------------------------------------------
The time-dependent model is stated as follows:
\begin{align}
\begin{aligned}
&\text{d}  P_0(t)= P_0(t)r{\big(m(t)\big)}\text{d}t, \\
&\text{d} P_1(t)=P_1(t)\Big[r\big(m(t)\big)+\hat{\lambda}\big(m(t)\big)X^m(t)\text{d} t+\nu\big(m(t)\big)\sqrt{X^m(t)}\text{d} W_P(t)\Big], \\
	&\text{d}{X^m}(t)=\kappa\big(m(t)\big)(\theta\big(m(t)\big)-X^m(t))\text{d}t+\chi\big(m(t)\big)\sqrt{X^m(t)}\text{d}W_X(t), \\
&\text{d}\langle W_P, W_X\rangle(t)=\rho\text{ d}t,
\end{aligned}
\label{ModelHestTimeDep}
\end{align}
with initial values $P_0(0)=p_0$, $P_1(0)=p_1$ and $X(0)=x_0$, where function $m$ is defined as in Equation (\ref{m}). A similar model is presented and motivated in the context of calibration and derivatives pricing in \cite{Elices2007} and \cite{Mikhailov2003}.\\
%With the notation from Section \ref{} this corresponds to the following specifications:
%\begin{align*}
%\lambda\big(X^m(t),m(t)\big)=\hat{\lambda}\big(m(t)\big)X^m(t)\\
%\end{align*}
From Equation (\ref{PDETimeDepfm}) in Section \ref{SubsTimedep} we know that the drift $\tilde{\mu}_X$ of the modified process $\tilde{X}^m$ is given by:
\begin{align*}
\tilde{\mu}_X(\tilde{X}^m(t),m(t))&={\kappa(m(t))}\big({\theta}(m(t))-\tilde{X}^m(t)\big)+\frac{\delta}{1-\delta}\rho \frac{\chi(m(t))\hat{\lambda}(m(t))}{\nu(m(t))}\tilde{X}^m(t)\\
&=:\tilde{\kappa}(m(t))\big(\tilde{\theta}(m(t))-\tilde{X}^m(t)\big).
\end{align*}
%----------------------------------------------------------------------
Further recall that we are interested in solving Equations (\ref{Riccati}) and (\ref{Integr}) in order to find the value function of our optimization problem. Furthermore, by Theorem \ref{ThSolTimeD} we have the following probabilistic representation:
\begin{align*}
h^m(t,x)=\IE\Big[\exp\Big\{\int_t^T\frac{1}{\vartheta} g(s,\tilde{X}^m(s),m(s))\text{d}s\Big\}\Big|\tilde{X}^m(t)=x\Big].
\end{align*}
Naturally it leads to the same ODEs (\ref{Riccati}) and (\ref{Integr}), however in the context of the time-dependent parameters it is more convenient to work with the expectation. The following theorem brings together the probabilistic and the ODE approach and delivers the main tool to calculate the solution explicitly. It is cited from \cite{Kraft2005}, Proposition 5.1.
%----------------------------------------------------------------------
\begin{lemma}$\mbox{}$\\
Define process $X$ by the following SDE:
\begin{align*}
\text{d}{X}(t)=\kappa(\theta-X(t))\text{d}t+\chi\sqrt{X(t)}\text{d}W_1(t),
\end{align*}
where $\kappa,\theta,\chi\in\IR_{>0}$ and $W_1$ is a standard Brownian motion. For $\beta\leq\frac{\kappa^2}{2\chi^2}$ and $\alpha\leq\frac{\kappa+a}{\chi^2}$, where $a:=\sqrt{\kappa^2-2\beta\chi^2}$, the following function is well-defined:
\begin{align*}
\varphi^{\alpha,\beta}(t,T,x):=\IE\Big[\exp\Big\{\alpha X(T)+\beta\int_t^TX(s)\text{d}s\Big\}\Big|X(t)=x\Big].
\end{align*}
More precisely, it is given by: 
%\begin{align*}
%\varphi^{\alpha,\beta}(t,T,x)=\exp\big\{A^{\alpha,\beta}(t)+B^{\alpha,\beta}(t)x\big\},
%\end{align*}
%where for fixed $T>0$ the functions $A^{\alpha,\beta}$ and $B^{\alpha,\beta}$ are real-valued, continuously differentiable functions on $[0,T]$, which satisfy the following ODEs:
%\begin{align*}
%&B^{\alpha,\beta}_t-\kappa B^{\alpha,\beta}+\frac{1}{2}\chi^2 (B^{\alpha,\beta})^2+\beta=0, B^{\alpha,\beta}(T)=\alpha\\
%&A^{\alpha,\beta}_t+\kappa\theta B^{\alpha,\beta}=0, A^{\alpha,\beta}(T)=0.
%\end{align*}
%For $\beta<\frac{\kappa^2}{2\chi^2}$ and $\alpha<\frac{\kappa+a}{\chi^2}$ they are given by:
%\begin{align*}
%A^{\alpha,\beta}(t)&=\frac{\kappa\theta(\kappa-a)}{\chi^2}(T-t)+\frac{2\kappa\theta}{\chi^2}\ln\Big\{\frac{1-c\exp(-a(T-t))}{1-c}\Big\}\\
%B^{\alpha,\beta}(t)&=\frac{-c(\kappa+a)\exp(-a(T-t))+\kappa-a}{\chi^2\big(1-c\exp(-a(T-t))\big)},
%\end{align*}
%where
%\begin{align*}
%c:=\frac{-\alpha\chi^2+\kappa-a}{-\alpha\chi^2+\kappa+a}.
%\end{align*}
%For $\beta\leq\frac{\kappa^2}{2\chi^2}$ and $\alpha=\frac{\kappa+a}{\chi^2}$ we obtain:
%\begin{align*}
%&A^{\alpha,\beta}(t)=\kappa\theta\frac{\kappa+a}{\chi^2}(T-t)\\
%&B^{\alpha,\beta}(t)=\frac{\kappa+a}{\chi^2}.
%\end{align*}
%Alternative formulation we need for the example:
\begin{align*}
\varphi^{\alpha,\beta}(t,T,x)=\exp\big\{A^{\alpha,\beta}(T-t)+B^{\alpha,\beta}(T-t)x\big\},
\end{align*}
where for fixed $T>0$ functions $A^{\alpha,\beta}(\tau)$ and $B^{\alpha,\beta}(\tau)$ are real-valued, continuously differentiable on $[0,T]$ and satisfy the following system of ODEs:
\begin{align*}
&-B^{\alpha,\beta}_{\tau}(\tau)+\frac{1}{2}\chi^2 \big( B^{\alpha,\beta}(\tau)\big)^2-\kappa B^{\alpha,\beta}(\tau)+\beta=0,  B^{\alpha,\beta}(0)=\alpha\\
&- A^{\alpha,\beta}_{\tau}(\tau)+\kappa\theta  B^{\alpha,\beta}(\tau)=0,  A^{\alpha,\beta}(0)=0.
\end{align*}
For $\beta<\frac{\kappa^2}{2\chi^2}$ and $\alpha<\frac{\kappa+a}{\chi^2}$ they are given by:
\begin{align*}
 A^{\alpha,\beta}(\tau)&=\frac{\kappa\theta(\kappa-a)}{\chi^2}\tau-\frac{2\kappa\theta}{\chi^2}\ln\Big\{\frac{1-c\exp(-a\tau)}{1-c}\Big\}\\
 B^{\alpha,\beta}(\tau)&=\frac{-c(\kappa+a)\exp(-a\tau)+\kappa-a}{\chi^2\big(1-c\exp(-a\tau)\big)},
\end{align*}
where $c:=\frac{-\alpha\chi^2+\kappa-a}{-\alpha\chi^2+\kappa+a}$. For $\beta\leq\frac{\kappa^2}{2\chi^2}$ and $\alpha=\frac{\kappa+a}{\chi^2}$ we obtain:
\begin{align*}
& A^{\alpha,\beta}(\tau)=\kappa\theta\frac{\kappa+a}{\chi^2}\tau,\ \  B^{\alpha,\beta}(\tau)=\frac{\kappa+a}{\chi^2}.
\end{align*}
Observe that functions $\tilde A^{\alpha,\beta}(t):=A^{\alpha,\beta}(T-t)$ and $\tilde B^{\alpha,\beta}(t):=B^{\alpha,\beta}(T-t)$ solve the following system of ODEs:
\begin{align*}
&\tilde B^{\alpha,\beta}_{t}(t)+\frac{1}{2}\chi^2 \big( \tilde B^{\alpha,\beta}(t)\big)^2-\kappa \tilde B^{\alpha,\beta}(t)+\beta=0,  \tilde B^{\alpha,\beta}(T)=\alpha\\
& \tilde A^{\alpha,\beta}_{t}(t)+\kappa\theta  \tilde B^{\alpha,\beta}(t)=0,  \tilde A^{\alpha,\beta}(T)=0.
\end{align*}
\label{jointCharF}
\end{lemma}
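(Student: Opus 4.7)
The strategy is the classical affine-transform argument for the CIR process. First I would make the exponentially affine ansatz $\varphi^{\alpha,\beta}(t,T,x) = \exp\{A^{\alpha,\beta}(T-t) + B^{\alpha,\beta}(T-t)x\}$ motivated by the affine structure of both the drift and the squared diffusion of $X$. A standard Feynman--Kac application to the candidate $\varphi^{\alpha,\beta}$ shows that if it is $\mathcal{C}^{1,2}$ and satisfies a suitable integrability condition, it must solve the backward PDE
\begin{align*}
\varphi_t + \kappa(\theta-x)\varphi_x + \tfrac{1}{2}\chi^2 x\, \varphi_{xx} + \beta x\, \varphi = 0,\qquad \varphi(T,T,x) = e^{\alpha x}.
\end{align*}
Plugging the ansatz into this PDE and collecting terms of order $x^0$ and $x^1$ separates the equation into the Riccati ODE for $B^{\alpha,\beta}(\tau)$ (with $\tau = T-t$) and the linear ODE for $A^{\alpha,\beta}(\tau)$, with initial conditions $B^{\alpha,\beta}(0)=\alpha$ and $A^{\alpha,\beta}(0)=0$ coming from the terminal condition at $\tau=0$.

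Next I would solve the Riccati explicitly. Writing it as $B_\tau = \tfrac{1}{2}\chi^2 B^2 - \kappa B + \beta$, the two stationary roots are $\frac{\kappa \mp a}{\chi^2}$ with $a := \sqrt{\kappa^2 - 2\beta\chi^2}$, which is real thanks to $\beta \leq \kappa^2/(2\chi^2)$. A partial fraction decomposition of $dB/(B-r_+)(B-r_-)$ and integration yields the closed form stated in the lemma, with the constant of integration $c = \frac{-\alpha\chi^2 + \kappa - a}{-\alpha\chi^2 + \kappa + a}$ fixed by $B^{\alpha,\beta}(0)=\alpha$. The bound $\alpha \leq (\kappa+a)/\chi^2$ is exactly what ensures that the denominator $1 - c\exp(-a\tau)$ stays bounded away from zero for all $\tau \in [0,T]$, so that $B^{\alpha,\beta}$ is continuously differentiable on $[0,T]$. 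Once $B^{\alpha,\beta}$ is known, a direct integration gives $A^{\alpha,\beta}(\tau) = \kappa\theta \int_0^\tau B^{\alpha,\beta}(u)\,du$, which evaluates to the logarithmic expression in the statement. The boundary case $\alpha = (\kappa+a)/\chi^2$ corresponds to $c = 0$, in which case $B^{\alpha,\beta}$ is constant and $A^{\alpha,\beta}$ is linear.

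Finally I would verify that the constructed $\varphi^{\alpha,\beta}$ is indeed the desired conditional expectation. The natural route is an It\^o martingale argument: apply It\^o's formula to $M_s := \exp\{\beta \int_t^s X(u)\,du\}\,\varphi^{\alpha,\beta}(s,T,X(s))$, check that the PDE makes the drift vanish so that $M$ is a local martingale, and then use the parameter restrictions to produce a uniform $L^p$ bound on $M_{s\wedge\tau_n}$ along a localizing sequence, so that $M$ is a true martingale. Taking expectations and using $M_T = \exp\{\alpha X(T) + \beta\int_t^T X(s)\,ds\}$ together with the terminal condition $\varphi^{\alpha,\beta}(T,T,x) = e^{\alpha x}$ yields the probabilistic representation. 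The final observation, that $\tilde{A}^{\alpha,\beta}(t) = A^{\alpha,\beta}(T-t)$ and $\tilde{B}^{\alpha,\beta}(t) = B^{\alpha,\beta}(T-t)$ solve the indicated backward system, is an immediate sign change from the chain rule.

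The main obstacle is the integrability/verification step: the Riccati solution has a potential blow-up when the denominator vanishes, and one has to show that the stated parameter bounds $\beta \leq \kappa^2/(2\chi^2)$ and $\alpha \leq (\kappa+a)/\chi^2$ are exactly sharp so that the martingale argument goes through globally on $[0,T]$. Since the lemma is cited verbatim from Proposition 5.1 in \cite{Kraft2005}, in the paper itself I would simply give the ansatz-and-ODE derivation and refer the reader there for the full martingale verification and sharpness discussion.
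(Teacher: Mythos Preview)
Your proposal is correct and in fact goes well beyond what the paper does: the paper gives no proof at all for this lemma, it is simply cited from \cite{Kraft2005}, Proposition~5.1. Your affine-ansatz/Feynman--Kac derivation followed by the explicit Riccati solution and martingale verification is exactly the argument carried out in that reference, so your final remark that one would just refer the reader there is precisely what the paper does.

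One small slip: in the boundary case $\alpha=(\kappa+a)/\chi^2$ you write that this corresponds to $c=0$, but in fact the denominator $-\alpha\chi^2+\kappa+a$ vanishes there, so $c\to\infty$; taking this limit in the general formula for $B^{\alpha,\beta}$ gives the constant value $(\kappa+a)/\chi^2$ as stated. The case $c=0$ corresponds instead to $\alpha=(\kappa-a)/\chi^2$.
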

%--------------------------------------------------------------
In the following lemma we summarize some properties of functions $A^{\alpha,\beta}$ and $B^{\alpha,\beta}$, which
we will need for the computation of the value function F m and for the qualitative analysis
of the optimal portfolio.
%----------------------------------------------------------------------
\begin{lemma}[Properties of the characteristic function]$\mbox{}$\\
Consider the notation from Lemma \ref{jointCharF} and assume that $\beta<\frac{\kappa^2}{2\chi^2}$ and $\alpha<\frac{\kappa+a}{\chi^2}$. Then it holds that:
\begin{enumerate}
\renewcommand{\labelenumi}{\roman{enumi})}
\item $B^{\alpha,\beta}(\tau)$ is monotone in $\tau$.
%$\text{sign}\big(\frac{\partial}{\partial\tau}B^{\alpha,\beta}(\tau)\big)=\text{sign}(c)$ $\Rightarrow$ 
\item $\lim_{\tau\downarrow 0}B^{\alpha,\beta}(\tau)=\alpha$.
\item $\lim_{\tau\uparrow \infty}B^{\alpha,\beta}(\tau)=\frac{\kappa-a}{\chi^2}\begin{cases}
<0,&\text{ for }\beta<0\\=0,&\text{ for }\beta=0\\>0,&\text{ for }\beta>0\end{cases}$.
\item $\frac{\partial}{\partial\tau}A^{\alpha,\beta}(\tau)\begin{cases}
\leq0,&\text{ for }\alpha\leq0 \text{ and }\beta<0\\\geq0,&\text{ for }\alpha\geq0 \text{ and } \beta>0\end{cases}$.
\item Let $\beta\geq0$ and $\alpha\geq 0$. Then 
$$A^{\alpha,\beta}(\tau)\in\big[-2\frac{\kappa\theta}{\chi^2}\ln\{1+T\kappa\},3
\frac{\kappa^2\theta T}{\chi^2}\big]$$
for all $\tau\in[0,T]$.
%\item For all $c_1\in[\frac{\kappa-a}{\chi^2},\frac{\kappa+a}{\chi^2})$ it holds: if $\alpha\leq c_1$, then $B^{\alpha,\beta}(\tau)\leq c_1$.
\item Let $a\not=0$. For all $c_2\in(\frac{\kappa-a}{\chi^2},\frac{\kappa+a}{\chi^2})$ it holds: if $\alpha< c_2$, then $B^{\alpha,\beta}(\tau)<c_2$.
%\item Assume $a\not=0$ and $\tau\not=0$. Then for all $c_3\in(\frac{\kappa-a}{\chi^2},\frac{\kappa+a}{\chi^2})$ it holds: if $\alpha\leq c_3$, then $B^{\alpha,\beta}(\tau)<c_3$.
\end{enumerate}
\label{LemmaPropCharF}
\end{lemma}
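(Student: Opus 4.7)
The plan is to base the whole lemma on the autonomous Riccati ODE $B^{\alpha,\beta}_\tau=\tfrac12\chi^2(B^{\alpha,\beta})^2-\kappa B^{\alpha,\beta}+\beta$ from Lemma \ref{jointCharF}, coupled with its primitive $A^{\alpha,\beta}_\tau=\kappa\theta\,B^{\alpha,\beta}$. Writing $B_\pm:=(\kappa\pm a)/\chi^2$ for the two roots of the right-hand-side parabola, one sees that this parabola is strictly negative on $(B_-,B_+)$ and strictly positive outside $[B_-,B_+]$, and by the hypothesis $\alpha<B_+$. Because $B_\pm$ are equilibria and ODE uniqueness forbids trajectories from crossing them, a standard phase-line argument splits the analysis into three self-evident cases: $\alpha<B_-$ (trajectory strictly increasing up to $B_-$), $\alpha=B_-$ (constant trajectory), and $B_-<\alpha<B_+$ (trajectory strictly decreasing down to $B_-$). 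This single picture immediately yields monotonicity (i), the initial value (ii), and the limit $\lim_{\tau\uparrow\infty}B^{\alpha,\beta}(\tau)=B_-$ in (iii). The sign in (iii) follows by comparing $a$ and $\kappa$: $\beta>0$ gives $a<\kappa$ hence $B_->0$; $\beta=0$ gives $a=\kappa$ hence $B_-=0$; $\beta<0$ gives $a>\kappa$ hence $B_-<0$.

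For (iv) I would substitute into $A^{\alpha,\beta}_\tau=\kappa\theta B^{\alpha,\beta}$, so the sign of $A^{\alpha,\beta}_\tau$ equals the sign of $B^{\alpha,\beta}$. In the case $\alpha\leq0$, $\beta<0$ the previous phase analysis gives $B_-<0$, and in all three sub-cases the trajectory is trapped in $[\min(\alpha,B_-),\max(\alpha,B_-)]\subseteq(-\infty,0]$, yielding $B^{\alpha,\beta}\leq 0$. The case $\alpha\geq0$, $\beta>0$ is symmetric since then $B_->0$ and the trajectory is squeezed into $[\min(\alpha,B_-),\max(\alpha,B_-)]\subseteq[0,\infty)$.

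For (vi) I would argue by contradiction. Pick $c_2\in(B_-,B_+)$ and set $\tau_0:=\inf\{\tau>0:B^{\alpha,\beta}(\tau)\geq c_2\}$; continuity gives $B^{\alpha,\beta}(\tau_0)=c_2$. But at this value the right-hand side of the Riccati ODE equals $\tfrac12\chi^2c_2^2-\kappa c_2+\beta<0$, so $B^{\alpha,\beta}$ is strictly decreasing at $\tau_0$, which forces $B^{\alpha,\beta}$ to exceed $c_2$ immediately before $\tau_0$ — contradicting the definition of $\tau_0$. Hence $B^{\alpha,\beta}(\tau)<c_2$ throughout.

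Finally, for (v) with $\alpha,\beta\geq0$, I would integrate. The phase-line reasoning of (iv) extends verbatim to the boundary case $\beta=0$ (the trajectory is then trapped in $[0,B_+]$), so $B^{\alpha,\beta}\geq0$ and hence $A^{\alpha,\beta}(\tau)=\kappa\theta\int_0^\tau B^{\alpha,\beta}(s)\,ds\geq0$, which trivially exceeds the (negative) logarithmic lower bound. For the upper bound I would invoke (vi) with $c_2$ arbitrarily close to $B_+$ to get $B^{\alpha,\beta}\leq B_+=(\kappa+a)/\chi^2\leq 2\kappa/\chi^2$, using $a\leq\kappa$ because $\beta\geq0$; integrating gives $A^{\alpha,\beta}(\tau)\leq 2\kappa^2\theta T/\chi^2<3\kappa^2\theta T/\chi^2$. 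The main obstacle I expect is just the bookkeeping across the corner cases ($a=0$, $\alpha=B_\pm$, and the apparent slackness of the logarithmic lower bound, which I believe is deliberately loose so that the same estimate can be reused under weaker sign conditions later); once the parabola-plus-phase-line picture is cleanly in place, every item of the lemma reduces to tracking which of the three monotonicity regions the trajectory inhabits.
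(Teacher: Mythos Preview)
Your argument is correct, and it is genuinely different from the paper's. The paper treats i)--iv) as ``trivial calculations'' based on the closed-form expression for $B^{\alpha,\beta}$, proves v) by rewriting the explicit formula for $A^{\alpha,\beta}$ and bounding each factor separately, and proves vi) by freezing $\tau$ and analysing the map $\alpha\mapsto B^{\alpha,\beta}(\tau)$: they compute $\partial_\alpha B$ and $\partial_\alpha^2 B$ to show this map is strictly increasing and convex, verify that its unique fixed point is $B_-=(\kappa-a)/\chi^2$ and that it tends to $B_+$ as $\alpha\uparrow B_+$, and conclude that the graph stays below the diagonal on $(B_-,B_+)$.

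Your route instead works entirely with the autonomous Riccati ODE and its phase line, never touching the closed form. This is more conceptual and more unified: the single observation that $B_\pm$ are equilibria with the right-hand side negative on $(B_-,B_+)$ gives i)--iv) and vi) essentially for free, and v) follows by integrating the resulting pointwise bounds $0\le B^{\alpha,\beta}\le B_+\le 2\kappa/\chi^2$. One small remark: for the upper bound in v) you do not actually need to invoke vi); the phase-line picture already confines the trajectory to $(-\infty,B_+)$. Your observation that the logarithmic lower bound in v) is slack (you in fact get $A^{\alpha,\beta}\ge 0$) is correct; the paper's explicit-formula bound is simply looser here. The paper's approach has the minor advantage that it verifies the stated formulas directly, while yours has the advantage of being formula-free and of making the qualitative structure transparent.
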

%----------------------------------
\begin{proof}$\mbox{}$\\
The proofs for Statements i), ii), iii) and iv) follow by trivial calculations. The remaining proofs are presented below.\\
\textbf{Statement v):} First we rewrite function $A^{\alpha,\beta}(\tau)$ in a convenient way by inserting the relation $1-c=\frac{2a}{-\alpha\chi^2+\kappa+a}$:
\begin{align*}
A^{\alpha,\beta}(\tau)&=\frac{\kappa\theta}{\chi^2}\Big[(\kappa-a)\tau-2\ln\Big\{\frac{1-c\exp(-a\tau)}{1-c}\Big\}\Big]\\
&=\frac{\kappa\theta}{\chi^2}\Big[\underbrace{(\kappa-a)}_{\in[0,\kappa]}\tau-2\ln\Big\{\underbrace{\frac{1-\exp(-a\tau)}{2a}}_{\in[0,\frac{T}{2}]}\underbrace{\big(-\alpha\chi^2+\kappa+a\big)}_{\in[0,2\kappa]}+\underbrace{\exp(-a\tau)}_{\in[\exp(-\kappa T),1]}\Big\}\Big].
\end{align*}
Now observe that $a\in[0,\kappa]$, as $\beta\geq0$. Thus, $\kappa-a\in[0,\kappa]$ and $\exp(-a\tau)\in[\exp(-\kappa\tau),1]$. Further, $-\alpha\chi^2+\kappa+a\in[0,2\kappa]$, as $\alpha\geq0$. Now consider the term $\frac{1-\exp(-a\tau)}{2a}$ and prove that it is monotonically decreasing by showing that its derivative w.r.t. $a$ is negative:
\begin{align*}
\frac{\partial}{\partial a}\Big(\frac{1-\exp(-a\tau)}{2a}\Big)=\frac{\exp(-a\tau)(1+a\tau)-1}{2a^2},
\end{align*}
where negativity follows by the general inequality $\exp(x)>1+x,\forall x\in\IR$. So, for $a\in[0,\kappa]$, $\frac{1-\exp(-a\tau)}{2a}\in[\frac{1-\exp(-\kappa\tau)}{2\kappa},\lim_{a\downarrow0}\frac{1-\exp(-a\tau)}{2a}]$. The limit is given by:
\begin{align*}
\lim_{a\downarrow0}\frac{1-\exp(-a\tau)}{2a}=\lim_{a\downarrow0}\frac{\exp(-a\tau)\tau}{2}=\frac{\tau}{2}.
\end{align*}
As $\tau\in[0,T]$, we obtain: $\frac{1-\exp(-a\tau)}{2a}\in[0,\frac{T}{2}]$. Combining the inequalities from above leads to the statement.\\
\textbf{Statement vi):} In this proof we consider $B^{\alpha,\beta}(\tau)$ as a function in $\alpha$ and fix all other parameters. Computing the first two derivatives shows that $B^{\alpha,\beta}$ is a convex, monotonically increasing function of $\alpha$:
\begin{align*}
\frac{\partial}{\partial\alpha}B^{\alpha,\beta}(\tau)&=\frac{4a^2\exp(-a\tau)}{(1-c\exp(-a\tau))^2(-\alpha\chi^2+\kappa+a)^2}\geq0\\
\frac{\partial^2}{\partial\alpha^2}B^{\alpha,\beta}(\tau)&=\frac{8a^2\chi^2\exp(-a\tau)(1-\exp(-a\tau))}{(\underbrace{1-c\exp(-a\tau)}_{>0})^3(\underbrace{-\alpha\chi^2+\kappa+a}_{>0})^3}\geq0.
\end{align*}
Further,
\begin{align*}
\lim_{\alpha\uparrow\frac{\kappa+a}{\chi^2}}B^{\alpha,\beta}(\tau)=\lim_{c\uparrow\infty}B^{\alpha,\beta}(\tau)=\frac{\kappa+a}{\chi^2}.
\end{align*}
Now we would like to find the points where the graph of $B^{\alpha,\beta}$ crosses the graph of function $f(\alpha)=\alpha$. To this aim we solve the following equation:
\begin{align*}
B^{\alpha,\beta}(\tau)&=\alpha\\
\Leftrightarrow (\kappa+a)(1-c\exp(-a\tau))-2a&=\chi^2\alpha(1-c\exp(-a\tau))\\
\Leftrightarrow (\alpha\chi^2-\kappa+a)(\exp(-a\tau)-1)&=0.
\end{align*}
Now assume that $\tau\not=0$ and $a\not=0$ and observe that the only solution is given by $\alpha=\frac{\kappa-a}{\chi^2}$. What is more, in this case the first two derivatives are even strictly positive, which means that $B^{\alpha,\beta}$ is strictly monotonically increasing and strictly convex in $\alpha$. Thus, its graph stays for $\alpha\in\big(\frac{\kappa-a}{\chi^2},\frac{\kappa+a}{\chi^2}\big)$ under the graph of the function $f(\alpha)=\alpha$, crosses it at $\alpha=\frac{\kappa-a}{\chi^2}$ and converges to it for  $\alpha\uparrow\frac{\kappa+a}{\chi^2}$. This proves Statement vi) for $\tau\not=0$ and $a\not=0$.\\
%This proves Statement viii) and Statements vii) and vi) for $\tau\not=0$ and $a\not=0$.
Now assume $a>0$ and $\tau=0$. Then $B^{\alpha,\beta}(0)=\alpha$ and Statement vi) follows directly in this case.  \\
%For $a=0$, $B^{\alpha,\beta}(\tau)=\frac{\kappa}{\chi^2}$, which trivially completes the proof of Statement vi).
$\mbox{}$\halmos
\end{proof}
%----------------------------------------------------------------------
\begin{notation}
With functions $\kappa$, $\theta$ and $\chi$ given as in Model (\ref{ModelHestTimeDep}) we introduce the following notation for all $e\in\E$:
\begin{align*}
{A}^{\alpha,\beta,e}(\tau)&:=-\frac{  \kappa(e)  \theta(e)(  \kappa(e)-  a(e))}{\chi(e)^2}\tau+\frac{2  \kappa(e)  \theta(e)}{\chi(e)^2}\ln\Big\{\frac{1-  c(e)\exp(-  a(e)\tau)}{1-  c(e)}\Big\}\\
  B^{\alpha,\beta,e}(\tau)&:=-\frac{-  c(e)(  \kappa(e)+  a(e))\exp(-  a(e)\tau)+  \kappa(e)-  a(e)}{\chi(e)^2\big(1-  c(e)\exp(-  a(e)\tau)\big)},
\end{align*}
where:
\begin{align*}
  a(e):=\sqrt{\kappa(e)^2+2\beta\chi(e)^2}, c(e):=\frac{\alpha\chi(e)^2+  \kappa(e)-  a(e)}{\alpha \chi(e)^2+  \kappa(e)+  a(e)}.
\end{align*}
Functions $\tilde a(e)$, $\tilde c(e)$, $\tilde{A}^{\alpha,\beta,e}(\tau)$ and $\tilde B^{\alpha,\beta,e}(\tau)$ are defined analogously for the transformed parameters $\tilde{\kappa}$ and $\tilde{\theta}$.
\end{notation}
%----------------------------------------------------------------------
Now we apply the previous two lemmas to derive the solution of the HJB equation. The result is presented in the theorem below.  
%----------------------------------------------------------------------
\begin{theorem}[Solution and verification in the time-dependent Heston model]
Assume the following conditions on the model parameters  in (\ref{ModelHestTimeDep}):
\begin{align}
\frac{1}{2\vartheta}\frac{\delta}{1-\delta}\frac{\big(\hat{\lambda}(e)\big)^2}{\big(\nu(e)\big)^2}&<\frac{\tilde\kappa^2(e)}{2\chi^2(e)},\forall e\in\E\label{AssumpBeta}\\
\max_{e\in\E}\Big\{\frac{\tilde\kappa(e)-\tilde a(e)}{\chi^2(e)}\Big\}&\leq \min_{e\in\E}\Big\{\frac{\tilde\kappa(e)+\tilde a(e)}{\chi^2(e)}\Big\}.\label{AssumpMaxMin}
\end{align}
Then the solution of the corresponding HJB system is given for all $(t,v,x)\in[0,T)\times \IR_{\geq 0}\times D_X$ by:
\begin{align*}
&\Phi^m(t,v,x)=\frac{v^\delta}{\delta}\IE\Big[\exp\Big\{\int_t^T\frac{1}{\vartheta} g(s,\tilde{X}^m(s),m(s))\text{d}s\Big\}\Big|\tilde{X}(t)=x\Big]^{\vartheta}\displaybreak[0]\\
&=\frac{v^\delta}{\delta}\Big[\exp\Big\{\int_t^T\frac{1}{\vartheta}\delta r(m(s))\text{d}s\Big\}\exp\big\{A_j(t_{j+1}-t)+B_j(t_{j+1}-t)x\big\}\prod_{i=j+1}^{k}\exp\{A_i(\tau_i)\}\Big]^{\vartheta}\displaybreak[0]\\
&=:\frac{v^\delta}{\delta}\exp\Big\{\int_t^T\delta r(m(s))\text{d}s\Big\}\exp\big\{\vartheta A^m(t)+\vartheta B^m(t)x\big\},
\end{align*}
where:
\begin{align*}
%\vartheta&:=\frac{1-\delta}{1-\delta+\delta\rho^2}\\
\tau_i&:=t_{i+1}-t_i,i=1,\beta_i:=\frac{1}{2\vartheta}\frac{\delta}{1-\delta}\frac{\big(\hat{\lambda}(m(t_i))\big)^2}{\big(\nu(m(t_i))\big)^2},\ldots,K\\
A_K&:=\tilde A^{0,\beta_K,m_K}(\tau_K), B_K:=\tilde B^{0,\beta_K,m_K}(\tau_K)\\
A_i&:=\tilde A^{B_{i+1},\beta_i,m_i}(\tau_i), B_i:=\tilde B^{B_{i+1},\beta_i,m_i}(\tau_i),i=0,\ldots,K-1.
\end{align*}
Further, $\Phi^m(t,v,x)$ is the value function of the optimization problem and the optimal portfolio strategy is given for all $t\in[0,T]$ by:
\begin{align*}
\bar{\pi}^m(t)=\frac{1}{1-\delta}\Big\{\frac{\hat{\lambda}(m(t))}{\big(\nu(m(t))\big)^2}+\rho\frac{\chi(m(t))}{\nu(m(t))}\vartheta B^m(t)\Big\}.
\end{align*}
\label{ThReprTimeDepHest}
\end{theorem}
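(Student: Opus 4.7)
The plan is to reduce the problem to Theorem \ref{ThSolTimeD} and then exploit the piecewise-constant nature of $m$ by iterating Lemma \ref{jointCharF} backwards in time. First I would observe that the (\ref{MMH}) identification forces $\Sigma_X^{(1)}\equiv 0$, $\Gamma^{(1)}\equiv 0$ and $\zeta^{(1)}\equiv 0$, so the integrand in (\ref{Integr}) collapses to $\int_t^T\delta\,r(m(s))\,\text{d}s$ and the factor $\exp\{\vartheta A^m(t)+\vartheta B^m(t)x\}$ is determined solely by the Riccati equation (\ref{Riccati}). The whole construction therefore reduces to solving (\ref{Riccati}) with terminal condition $B^m(T)=0$ and piecewise-constant coefficients.

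On each subinterval $[t_i,t_{i+1})$ the coefficients in (\ref{Riccati}) coincide, after identifying the parameters as in the (\ref{MMH}) specification, with those appearing in Lemma \ref{jointCharF} for the state $m_i$. I would therefore build $B^m$ backwards in time: on $[t_K,T]$ the terminal value $B^m(T)=0$ together with Lemma \ref{jointCharF} gives $B^m(t)=\tilde B^{0,\beta_K,m_K}(t_{K+1}-t)$, and using $B_{i+1}:=B^m(t_{i+1})$ as the new terminal value on $[t_i,t_{i+1})$ I obtain $B^m(t)=\tilde B^{B_{i+1},\beta_i,m_i}(t_{i+1}-t)$, which is exactly the stated recursion. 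Gluing the pieces produces a function continuous on $[0,T]$ by construction and continuously differentiable on each subinterval, i.e.\ the regularity required by Theorem \ref{ThSolTimeD}; the companion $A^m$ is then obtained by stepwise integration of (\ref{Integr}) and reproduces the recursive formula for $A_i$.

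The main obstacle will be to verify inductively that at every backward step the pair $(\alpha,\beta)=(B_{i+1},\beta_i)$ lies in the admissibility range of Lemma \ref{jointCharF}, namely $\beta_i\leq\tilde\kappa^2(m_i)/(2\chi^2(m_i))$ and $B_{i+1}\leq(\tilde\kappa(m_i)+\tilde a(m_i))/\chi^2(m_i)$. The first bound is immediate from Assumption (\ref{AssumpBeta}). For the second I plan to set $c_2:=\min_{e\in\E}(\tilde\kappa(e)+\tilde a(e))/\chi^2(e)$, which by Assumption (\ref{AssumpMaxMin}) is no smaller than $\max_{e\in\E}(\tilde\kappa(e)-\tilde a(e))/\chi^2(e)$ and hence sits in the admissibility interval $\big((\tilde\kappa(e)-\tilde a(e))/\chi^2(e),(\tilde\kappa(e)+\tilde a(e))/\chi^2(e)\big]$ for every $e\in\E$. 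The backward induction starts with $\alpha=0<c_2$ on $[t_K,T]$, and Statement vi) of Lemma \ref{LemmaPropCharF} (applied with $c_2$, or with a slight perturbation of $c_2$ when it happens to coincide with a boundary) then propagates the bound $B_{i+1}<c_2$ from one step to the next; since $c_2\leq(\tilde\kappa(m_i)+\tilde a(m_i))/\chi^2(m_i)$ for every $i$ by construction, the admissibility condition on $\alpha$ is preserved throughout.

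Finally, for the verification claim I would invoke Theorem \ref{ThVerifTimeD}, whose hypothesis $\Sigma_X^{(1)}=\Gamma^{(1)}=\zeta^{(1)}=0$ is exactly the (\ref{MMH}) structure, so it applies without further checks and identifies $\Phi^m$ as the value function of the time-dependent optimization problem. The closed-form expression for $\bar{\pi}^m$ then follows from (\ref{OptimStratTimeDepGen}) by substituting $\gamma(x,e_i)=\hat{\lambda}(e_i)\sqrt{x}/\nu(e_i)$, $\sigma_X(x,e_i)=\chi(e_i)\sqrt{x}$ and $\sigma_P(x,e_i)=\nu(e_i)\sqrt{x}$; the $\sqrt{x}$ factors cancel and the stated formula results.
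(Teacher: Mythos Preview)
Your approach is essentially the same as the paper's: backward iteration of Lemma \ref{jointCharF} over the subintervals $[t_i,t_{i+1})$, checking the admissibility conditions $\beta_i<\tilde\kappa^2(m_i)/(2\chi^2(m_i))$ and $B_{i+1}<(\tilde\kappa(m_i)+\tilde a(m_i))/\chi^2(m_i)$ at each step via Assumptions (\ref{AssumpBeta}), (\ref{AssumpMaxMin}) and Statement vi) of Lemma \ref{LemmaPropCharF}, and then invoking Theorem \ref{ThVerifTimeD} for the verification. The only cosmetic difference is your choice of the invariant bound: you take $c_2:=\min_{e}(\tilde\kappa(e)+\tilde a(e))/\chi^2(e)$, whereas the paper takes $c_2:=\max_{e}(\tilde\kappa(e)-\tilde a(e))/\chi^2(e)$; both sit in the required interval for every state (up to the boundary technicality you correctly flag), and both propagate the bound in the same way. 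One small wording slip: it is not the integrand in (\ref{Integr}) that collapses to $\delta r$, but rather the extra $\Gamma^{(1)}$, $\Sigma_X^{(1)}$, $\zeta^{(1)}$ terms in the representation (\ref{EqSolTimeD}) that vanish, leaving $\exp\{\int_t^T\delta r\,\text{d}s\}$ times $\exp\{\vartheta A^m+\vartheta B^m x\}$; the ODE (\ref{Integr}) itself still retains the $\mu_X^{(1)}B^m$ term, which is precisely what produces the $A_i$ recursion.
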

%----------------------------------
\begin{proof}
The proof follows by recalling the probabilistic representation for $h^m$ from Equation (\ref{ProbReprTimeDeph}) and applying Lemma \ref{jointCharF} stepwise starting at the back. In each step the assumptions of Theorem \ref{jointCharF} need to be checked. They read as follows:
\begin{align}
\beta_i&=\frac{1}{2\vartheta}\frac{\delta}{1-\delta}\frac{\big(\hat{\lambda}(m(t_i))\big)^2}{\big(\nu(m(t_i))\big)^2}<\frac{\tilde\kappa^2(m_i)}{2\chi^2(m_i)},\forall i\in\{j,\ldots,K\}\label{betaCond}\\
\alpha_i&=\tilde B^{\alpha_{i+1},\beta_{i+1},m_{i+1}}(t_{i+2}-t_{i+1})<\frac{\tilde\kappa(m_i)+\tilde a(m_i)}{\chi^2(m_i)},\forall i\in\{j,\ldots,K-1\}\label{AlphaCond}\\
\alpha_K&=0<\frac{\tilde\kappa(m_K)+\tilde a(m_K)}{\chi^2(m_K)}.\label{alphaKCond}
\end{align}
Inequality (\ref{betaCond}) follows directly from Assumption (\ref{AssumpBeta}) and Inequality (\ref{alphaKCond}) is obvious as $\tilde{\kappa}(e),\tilde{a}(e)>0,\forall e\in\E$. For Inequality (\ref{AlphaCond}) recall that in our model $\beta_i>0$ and thus $\tilde a(m_i)<\tilde\kappa(m_i)$ for all $i\in\{0,\ldots,K\}$. Then, $\alpha_K=0<\max_{e\in\E}\Big\{\frac{\tilde\kappa(e)-\tilde a(e)}{\chi^2(e)}\Big\}:=c_2$. From Assumption (\ref{AssumpMaxMin}) we obtain further $c_2<\frac{\tilde\kappa(m_i)+\tilde a(m_i)}{\chi^2(m_i)}$ for all $i\in\{0,\ldots,K\}$. Statement (vii) from Lemma \ref{LemmaPropCharF} leads to $\tilde \alpha_{K-1}=\tilde B_K<c_2<\frac{\tilde\kappa(m_{K-1})+\tilde a(m_{K-1})}{\chi^2(m_{K-1})}$. To obtain Condition (\ref{AlphaCond}) for all $i$, observe that $\alpha_i=\tilde B_{i+1}$ and continue backwards in an analogous way showing that $\tilde B_{i+1}<c_2<\frac{\tilde\kappa(m_i)+\tilde a(m_i)}{\chi^2(m_i)}$ for all $i\in\{0,\cdots,K\}$.\\
Observe that here we do not need to check the assumptions of Corollary \ref{CorFK}, as we have derived the solution in an explicit form and it can be verified by direct substitution. So, $\Phi^m$ solves the corresponding HJB equation.\\
The verification result and the optimal portfolio strategy follow as a direct application of Theorem \ref{ThVerifTimeD}.\\
$\mbox{}$\halmos
\end{proof}
%--------------------------------------------------
%\begin{remark} Analogous computations can be done if we set: $$\lambda=\sqrt{\Gamma^{(1)}(m(t))+\Gamma^{(2)}(m(t))X^m(t)}\sqrt{X^m(t)}.$$
%\end{remark}
%--------------------------------------------------

%----------------------------------------------------------------------
%\subsubsection{Numerical implementation}
%\label{SubssubHestonNumtimed}
%----------------------------------------------------------------------
%----------------------------------------------------------------------
%----------------------------------------------------------------------
\subsection{Markov modulated Heston model with no leverage}
\label{SubsHestonNoRhoModel}
We continue with the Markov modulated Heston model with no correlation. After deriving the solution for the general parameter specifications (Section \ref{SecMMHNoRho}), we will simplify it in the separable case (Section \ref{SecSMMHNoRho}).
%----------------------------------------------------------------------
\subsubsection{General Markov modulated Heston model (MMH) with no leverage}
\label{SecMMHNoRho}
%-------------------------
Consider Model (\ref{MCHeston}) and set $\rho=0$. Based on the results for the time-dependent model, the solution of the HJB equation in the Markov switching model can be derived as shown in the following theorem.
%---------------------------------------------------------------------
\begin{corollary}[Solution and verification in the MMH with $\rho=0$]$\mbox{}$\\
Let the conditions of Theorem \ref{ThReprTimeDepHest} hold. Then the value function $\Phi$ in Model (\ref{MCHeston}) with $\rho=0$ is given by the following equation:
\begin{align}
\Phi(t,v,x,e_i)=&\frac{v^{\delta}}{\delta}\IE\big[f^{\MC}(t,x)\big|\MC(t)=e_i\big]\notag\\
=&\frac{v^{\delta}}{\delta}\IE\big[\exp\Big\{\int_t^T\delta r(\MC(s))\text{d}s\Big\}\exp\big\{A^{\MC}(t)+B^{\MC}(t)x\big\}\big|\MC(t)=e_i\big],
\label{PhiHestonNonRhoCase1}
\end{align}
where for any $m\in\IM$, functions $A^{m}$ and $B^m$ are given by:
\begin{align*}
A^m(t)&=\sum_{j=0}^{K}\Big\{\int_t^T\delta r(m(s))\text{d}s+A_j(t_{j+1}-t)+\sum_{i=j+1}^{K}A_i(\tau_i)\Big\}1_{t\in[t_j,t_{j+1})}\\
B^m(t)&=\sum_{j=0}^{K}\big\{B_j(t_{j+1}-t)\big\}1_{t\in[t_j,t_{j+1})},
\end{align*} 
with:
\begin{align*}
\tau_i&:=t_{i+1}-t_i,\beta_i:=\frac{1}{2}\frac{\delta}{1-\delta}\frac{\big(\hat{\lambda}(m(t_i))\big)^2}{\big(\nu(m(t_i))\big)^2},i=1,\ldots,K\\
A_K&:=A^{0,\beta_K,m_K}(\tau_K), B_K:=B^{0,\beta_K,m_K}(\tau_K)\\
A_i&:=A^{B_{i+1},\beta_i,m_i}(\tau_i), B_i:=B^{B_{i+1},\beta_i,m_i}(\tau_i),i=0,\ldots,K-1.
\end{align*}
The optimal portfolio is:
\begin{align*}
\bar{\pi}(t)=\frac{1}{1-\delta}\frac{\hat{\lambda}(\MC(t))}{\big(\nu(\MC(t))\big)^2}.
\end{align*}
\label{CorHestonNoRhoGen}
\end{corollary}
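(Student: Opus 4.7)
The plan is to obtain this corollary as a direct consequence of two already-established results: the time-dependent Heston solution in Theorem \ref{ThReprTimeDepHest} and the reduction-to-deterministic-path verification in Theorem \ref{ThVerifNoRho}. Since $\rho=0$ makes the MMH model a special case of (\ref{GeneralModel}) with $\rho=0$, and the parametric specifications in (\ref{MMH}) satisfy all the structural assumptions needed for these theorems, the proof is largely a matter of verifying that the hypotheses transfer and then assembling the pieces.

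First I would check that under the MMH parameters and for $\rho=0$ (so $\vartheta=1$), the conditions (\ref{AssumpBeta}) and (\ref{AssumpMaxMin}) from Theorem \ref{ThReprTimeDepHest} hold for every piecewise-constant path $m\in\IM$. This is automatic since these conditions are assumed in the corollary statement and they depend only on the state-space parameters, not on the path $m$. Theorem \ref{ThReprTimeDepHest} then yields, for each fixed $m$, that the value function in the time-dependent model is
\begin{align*}
\Phi^{m}(t,v,x)=\frac{v^\delta}{\delta}\exp\Big\{\int_t^T\delta r(m(s))\,\mathrm{d}s\Big\}\exp\big\{A^m(t)+B^m(t)x\big\},
\end{align*}
with $A^m$, $B^m$ built up by the stated recursive backward concatenation of the Riccati solutions $A^{\alpha,\beta,e}$, $B^{\alpha,\beta,e}$ on the successive intervals $[t_j,t_{j+1})$, and the corresponding optimal strategy is $\bar{\pi}^{m}(t)=\frac{1}{1-\delta}\hat\lambda(m(t))/\nu^2(m(t))$ (the hedging term vanishes because $\rho=0$).

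Next I would verify the hypothesis of Theorem \ref{ThVerifNoRho}: namely that $\bar{\pi}^{m}(t)$ depends on $m$ only through its \emph{current} value $m(t)$, not through the whole future path on $[t,T]$. This is immediate from the explicit expression above, since $\bar{\pi}^{m}(t)=p(t,m(t),v,x)$ with $p(t,e,v,x)=\frac{1}{1-\delta}\hat\lambda(e)/\nu^2(e)$. With this path-independence in hand, Theorem \ref{ThVerifNoRho} applies verbatim and yields
\begin{align*}
\Phi(t,v,x,e_i)=\IE\big[\Phi^{\MC}(t,v,x)\,\big|\,\MC(t)=e_i\big],\qquad \bar{\pi}(t)=\frac{1}{1-\delta}\frac{\hat\lambda(\MC(t))}{\nu^2(\MC(t))}.
\end{align*}

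Finally I would substitute the explicit form of $\Phi^m$ into this conditional expectation to arrive at (\ref{PhiHestonNonRhoCase1}); here the piecewise definitions of $A^m$ and $B^m$ via the indicator-sum decomposition over $\{t\in[t_j,t_{j+1})\}$ just read off the active coefficients on the interval containing $t$, so the identity follows by inspection. I do not anticipate any real obstacle: the hard analytic work (solvability and uniqueness of the Riccati system, polynomial-integrability bounds needed for verification, and the conditioning argument that swaps Markov-switching for path-wise deterministic coefficients) has all been discharged in the cited theorems. The only step requiring a little care is the indexing in the recursion for $A_i,B_i$, which I would double-check by evaluating on a one-jump path to confirm the terminal condition $B^{\MC}(T)=0$ and continuity of $A^{\MC},B^{\MC}$ across the jump times $t_j$.
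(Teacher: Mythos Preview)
Your proposal is correct and follows essentially the same approach as the paper: the paper's proof is a one-line invocation of Theorem \ref{ThReprTimeDepHest} with $\rho=0$ (hence $\vartheta=1$) together with Theorem \ref{ThVerifNoRho}. Your write-up simply makes explicit the intermediate checks (that the conditions transfer to every $m\in\IM$ and that $\bar\pi^m$ depends on $m$ only through $m(t)$), which the paper leaves implicit.
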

%--------------------------
\begin{proof} 
An application of Theorem \ref{ThReprTimeDepHest} for $\rho=0$, i.e. $\vartheta=1$, and Theorem \ref{ThVerifNoRho} leads to the statement. \\
$\mbox{}$\halmos
%Theorem \ref{ThExplSolNoRho}
\end{proof}
Observe that function $\Phi$ can be easily computed by a partial Monte Carlo method, where one has to simulate only the path of the Markov chain and not all other processes.\\
%--------------------------------------------------
%\begin{remark} Analogous computations can be done if we set $$\lambda=\sqrt{\Gamma^{(1)}(\MC(t))+\Gamma^{(2)}(\MC(t))X(t)}\sqrt{X(t)}.$$
%\end{remark}
%--------------------------------------------------
%----------------------------------------------------------------------
\subsubsection{Separable Markov modulated Heston model (SMMH) with no leverage}
\label{SecSMMHNoRho}
%---------------------------------------------------------------------
Now we will consider the separable example corresponding to the special case presented in Section \ref{SubsunNoRhoSepa}. So, we specify our model in such a way that a separable explicit solution can be found:
\begin{align}
\begin{aligned}
&\text{d}  P_0(t) = P_0(t)r{\big(\MC(t)\big)}\text{d}t\\
&	\text{d} P_1(t)=P_1(t)\Big[r\big(\MC(t)\big)+\underbrace{d\nu\big(\MC(t)\big)}_{=\hat{\lambda}(\MC(t))}X(t)\text{d} t+\nu\big(\MC(t)\big)\sqrt{X(t)}\text{d} W_P(t)\Big]\\
&	\text{d}{X}(t)=\kappa(\theta\big(\MC(t)\big)-X(t))\text{d}t+\chi\sqrt{X(t)}\text{d}W_X(t)\\
&	\text{d}\langle W_P, W_X\rangle(t)=0,
\label{MCHestonNonRhoCase3}
\end{aligned}
\end{align}
where $P_0(0)=p_0$, $P_1(0)=p_1$, $X(0)=x_0$, $\kappa,\chi\in\IR_{>0}$, $d\in\IR$, and as before $\theta(e)\in\IR_{>0}$ and $2\kappa\theta(e)\geq\chi^2$, for all $e\in\E$, so that $X(t)\geq0$, for all $t\in[0,T]$. This model can be embedded in the notation from (\ref{SMMAF}) as follows:
\begin{align}
\begin{aligned}
\gamma^2(x,e_i)&=d^2x&&\Rightarrow\Gamma^{(1)}(e_i)=0,\ \ \bar\Gamma^{(2)}=d^2\\
\mu_X(x,e_i)&=\kappa\theta(e_i)-\kappa x&&\Rightarrow\mu_X^{(1)}(e_i)=\kappa\theta(e_i),\ \ \bar\mu_X^{(2)}=-\kappa\\
\sigma^2_X(x,e_i)&=\chi^2x&&\Rightarrow\Sigma_X^{(1)}(e_i)=0,\ \ \bar\Sigma_X^{(2)}=\chi^2.
\end{aligned}
\tag{SMMH}
\label{SMMH}
\end{align}
A direct application of Theorem \ref{ThVerifTimeD} and Theorem \ref{ThExplSolNoRhoSpecCase} leads to the following solution:
%---------------------------------------------------------------------
\begin{corollary}[Solution and verification in the SMMH with $\rho=0$]$\mbox{}$\\
Consider Model (\ref{MCHestonNonRhoCase3}) and assume:
\begin{align}
\frac{\delta}{1-\delta}d^2<\frac{\kappa^2}{\chi^2}.
\label{AssumpBetaCase2}
\end{align}
Then the value function is given for all $(t,v,x,e_i)\in[0,T]\times\IR_{\geq0}\times D_X\times\E$ by:
\begin{align}
\Phi(t,v,x,e_i)=\frac{v^{\delta}}{\delta}\bar\xi(t,e_i)\exp\{B(t)x\},
\label{PhiHestonNonRhoCase3}
\end{align}
where:
\begin{align*}
\bar\xi(t,e_i)=\IE\Big[\exp\Big\{\int_t^Tw(s,\MC(s))\text{d}s\Big\}\Big|\MC(t)=e_i\Big], \forall\ (t,e_i)\in [0,T]\times\E,
\end{align*}
with $w(t,e_i)=\delta r(e_i)+B(t)\kappa\theta(e_i)$, for all $(t,e_i)\in [0,T]\times\E$, and 
\begin{align*}
B(t)=\frac{-c(\kappa+a)\exp\{-a(T-t)\}+\kappa-a}{\chi^2\big(1-c\exp\{-a(T-t)\}\big)}, \forall\ t\in [0,T],
\end{align*}
where $a=\sqrt{\kappa^2-\frac{\delta}{1-\delta}d^2\chi^2}$ and $c:=\frac{\kappa-a}{\kappa+a}$. The optimal portfolio is:
\begin{align*}
\bar{\pi}(t)=\frac{1}{1-\delta}\frac{d}{\nu(\MC(t))}.
\end{align*}
\end{corollary}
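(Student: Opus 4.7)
The plan is to recognize Model (\ref{MCHestonNonRhoCase3}) as a special instance of the (\ref{SMMAF}) framework with $\rho=0$ and then simply invoke Theorem \ref{ThExplSolNoRhoSpecCase}. The work therefore reduces to (a) matching notation, (b) checking the four scalar conditions on the parameters, (c) reading off the explicit Riccati solution, and (d) verifying the auxiliary hypothesis that $\Phi^m$ is the value function in the induced time-dependent model.

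For step (a), I would identify the (\ref{SMMH}) coefficients on the right-hand side of (\ref{MCHestonNonRhoCase3}): $\Gamma^{(1)}(e_i)=0$, $\bar\Gamma^{(2)}=d^{2}$, $\mu_X^{(1)}(e_i)=\kappa\theta(e_i)$, $\bar\mu_X^{(2)}=-\kappa$, $\Sigma_X^{(1)}(e_i)=0$, $\bar\Sigma_X^{(2)}=\chi^{2}$, and since $\rho=0$ also $\zeta^{(1)}=\zeta^{(2)}=0$ and $\vartheta=1$. For step (b), the conditions $\bar\Sigma_X^{(2)}\neq0$ and $\bar\mu_X^{(2)}<0$ are immediate from $\chi,\kappa>0$; the discriminant inequality $\frac{\delta}{1-\delta}\bar\Gamma^{(2)}<(\bar\mu_X^{(2)})^{2}/\bar\Sigma_X^{(2)}$ becomes exactly Assumption (\ref{AssumpBetaCase2}), which also guarantees that $a=\sqrt{\kappa^{2}-\tfrac{\delta}{1-\delta}d^{2}\chi^{2}}$ is a real number in $[0,\kappa)$; and finally $0\leq(a-\bar\mu_X^{(2)})/\bar\Sigma_X^{(2)}=(a+\kappa)/\chi^{2}$ holds trivially.

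For step (c), I would substitute the above identifications into the closed-form expression for $B(t)$ given in Theorem \ref{ThExplSolNoRhoSpecCase}. With $c=(-\bar\mu_X^{(2)}-a)/(-\bar\mu_X^{(2)}+a)=(\kappa-a)/(\kappa+a)$, the formula collapses to the stated expression for $B(t)$, while the PDE (\ref{PDExi}) for $\bar\xi$ inherits the probabilistic representation (\ref{XiProb}) with $w(s,e_i)=\delta r(e_i)+B(s)\kappa\theta(e_i)$ obtained by plugging in $\Gamma^{(1)}=\Sigma_X^{(1)}=0$ and $\mu_X^{(1)}(e_i)=\kappa\theta(e_i)$. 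The optimal portfolio follows from the (\ref{SMMAF})-level formula $\bar\pi=\frac{1}{1-\delta}\gamma/\sigma_P$ with $\gamma=d\sqrt{x}$ and $\sigma_P=\nu(\MC(t))\sqrt{x}$, yielding $\bar\pi(t)=\frac{1}{1-\delta}\,d/\nu(\MC(t))$.

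The only non-bookkeeping step is (d): Theorem \ref{ThExplSolNoRhoSpecCase} assumes that for each deterministic path $m$ the candidate $\Phi^{m}$ is genuinely the value function of the induced time-dependent problem. Here the (\ref{SMMH}) parameters satisfy $\Sigma_X^{(1)}=\Gamma^{(1)}=0$, and with $\rho=0$ also $\zeta^{(1)}=0$, so Theorem \ref{ThVerifTimeD} applies directly and supplies precisely this hypothesis; alternatively one may quote Theorem \ref{ThReprTimeDepHest} restricted to $\rho=0$, $\vartheta=1$, after noting that (\ref{AssumpBetaCase2}) implies the corresponding assumption (\ref{AssumpBeta}) and that (\ref{AssumpMaxMin}) is automatic because $\kappa$ and $\chi$ do not depend on $e$. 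I expect the matching of notation between (\ref{SMMAF}) constants and Heston-model constants, and making sure $\vartheta=1$ is used consistently, to be the only place where one has to be careful; no new analytic work is required beyond what Theorems \ref{ThExplSolNoRhoSpecCase} and \ref{ThVerifTimeD} already provide.
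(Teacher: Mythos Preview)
Your proposal is correct and follows exactly the paper's route: the paper's proof is the one-line ``Follows directly by Theorem \ref{ThExplSolNoRhoSpecCase} and \ref{ThVerifTimeD},'' and you have simply spelled out the parameter matching and condition checking that make those citations valid. (One harmless slip: $a\in[0,\kappa)$ only when $\delta\in(0,1)$; for $\delta<0$ one has $a>\kappa$, but all that is actually needed is $a\in\IR_{\geq0}$ and $a+\kappa>0$, which hold.)
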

%--------------------------------------------------
\begin{proof}
Follows directly by Theorem \ref{ThExplSolNoRhoSpecCase} and \ref{ThVerifTimeD}.\\
$\mbox{}$\halmos
%We would like to apply Theorem \ref{ThExplSolNoRhoSpecCase}. Assumption (\ref{AssumpBeta}) coincides with Inequality (\ref{AssumpBetaCase2}) as the corresponding parameters does not depend on $\MC$, and Condition (\ref{AssumpMaxMin}) is trivially fulfilled as $a>0$ and  the corresponding parameters does not depend on $\MC$. Thus, Theorem \ref{ThReprTimeDepHest} can be applied. The remaining assumptions of Theorem \ref{ThExplSolNoRhoSpecCase}, $\frac{\delta}{1-\delta}d^2<\frac{\kappa^2}{\chi^2}$ and $0\leq\frac{a+\kappa}{\chi^2}$, are fulfilled as well, because of Assumption (\ref{AssumpBetaCase2}) and $\kappa,a>0$. Finally, an application of Theorem \ref{ThExplSolNoRhoSpecCase} yields the statement.
\end{proof}
%----------------------------------------------------------------------
%\subsubsection{Numerical implementation}
%\label{SubssubHestonNumNoRho}
%----------------------------------------------------------------------
%------------------------------------------------------------
\subsection{Separable Markov modulated Heston model with leverage ($\text{SMMH}^{\rho}$)}
\label{SubsHestonRhoModel}
For a tractable model for the case with leverage we generalize Model (\ref{MCHestonNonRhoCase3}) to $\rho\not=0$:
\begin{align}
\begin{aligned}
&\text{d}  P_0(t)= P_0(t)r{\big(\MC(t)\big)}\text{d}t\\
&	\text{d} P_1(t)=P_1(t)\Big[r\big(\MC(t)\big)+ d\nu\big(\MC(t)\big)X(t)\text{d} t+\nu\big(\MC(t)\big)\sqrt{X(t)}\text{d} W_P(t)\Big]\\
&	\text{d}{X}(t)=\kappa\big\{\theta\big(\MC(t)\big)-X(t)\big\}\text{d}t+\chi\sqrt{X(t)}\text{d}W_X(t)\\
&	\text{d}\langle W_P, W_X\rangle(t)=\rho\text{ d}t,
	\label{MCHestonRhoCase1}
\end{aligned}
\end{align}
with initial values $P_0(0)=p_0$, $P_1(0)=p_1$, $X(0)=x_0$. This corresponds to the following specifications in the notation from (\ref{SMMAFrho}):
\begin{align}
\begin{aligned}
\gamma^2(x,e_i)&=d^2x&&\Rightarrow\Gamma^{(1)}(e_i)=0,\ \ \bar\Gamma^{(2)}=d^2\\
\mu_X(x,e_i)&=\kappa\theta(e_i)-\kappa x&&\Rightarrow\mu_X^{(1)}(e_i)=\kappa\theta(e_i),\ \ \bar\mu_X^{(2)}=-\kappa\\
\sigma^2_X(x,e_i)&=\chi^2x&&\Rightarrow b=\frac{\chi}{|d|}.
\end{aligned}
\tag{$\text{SMMH}^{\rho}$}
\label{SMMHrho}
\end{align}
As before Theorem \ref{ThVerifTimeD} and Theorem \ref{ThVerifRho} lead to the verification result in this case:
%(With the notation from Model (\ref{SMMAFrho}) we have $b=\frac{\chi}{d}$.) 
%---------------------------------------------------------------------
\begin{corollary}[Solution and verification in $\text{SMMH}^{\rho}$]$\mbox{}$\\
\label{CorMSHestonRhoSol}
Assume that:
\begin{align}
0&<\kappa-\frac{\delta}{1-\delta}\rho \chi |d|\label{cond3}\\
\frac{\delta}{1-\delta}  d^2&<\frac{\vartheta (\kappa-\frac{\delta}{1-\delta}\rho \chi |d| )^2}{\chi^2}\label{cond1}
%0&\leq\frac{\vartheta(a+\kappa-\frac{\delta}{1-\delta}\rho \chi d )}{\chi^2},\label{cond2}
\end{align}
for $a:=\sqrt{(\kappa-\frac{\delta}{1-\delta}\rho \chi |d| )^2-\frac{\delta}{1-\delta}\frac{\chi^2}{\vartheta}  d^2}$. Then, the value function in Model (\ref{MCHestonRhoCase1}) is given for all $(t,v,x,e_i)\in[0,T]\times\IR_{\geq0}\times D_X\times\E$ by: 
\begin{align}
\Phi(t,v,x,e_i)=\frac{v^{\delta}}{\delta}\xi(t,e_i)\exp\{D(t)x\},
\end{align}
where:
\begin{align}
\xi(t,e_i)=\IE\Big[\exp\Big\{\int_t^T\upsilon(s,\MC(s))\Big\}\text{d}s\Big|\MC(t)=e_i\Big], \forall (t,e_i)\in [0,T]\times\E,
\label{xiHestonCor}
\end{align}
for $\upsilon(t,e_i)=\delta r(e_i)+D(t)\kappa\theta(e_i)$, and:
\begin{align}
\begin{aligned}
D(t)=\frac{\vartheta\big(-c(\kappa-\frac{\delta}{1-\delta}\rho \chi   |d| +a)\exp\{-a(T-t)\}+\kappa-\frac{\delta}{1-\delta}\rho \chi |d| -a\big)}{\chi^2(1-c\exp\{-a(T-t)\})},\forall t\in [0,T]
\end{aligned}
\label{solutionDrho}
\end{align}
with $c:=\frac{\kappa-\frac{\delta}{1-\delta}\rho \chi |d| -a}{\kappa-\frac{\delta}{1-\delta}\rho \chi |d| +a}$.
The optimal portfolio is:
\begin{align}
\bar{\pi}(t)=\frac{1}{1-\delta}\Big[\frac{d }{\nu\big(\MC(t)\big)}+\rho\frac{ \chi}{\nu\big(\MC(t)\big)}D(t)\Big].
\label{HestonRhoOptimPortf}
\end{align}
As mentioned earlier the first part of the optimal portfolio, $\bar\pi_{MV}(t):=\frac{1}{1-\delta}\frac{d}{\nu(\MC(t))}$, is called the mean-variance term and the second one, $\bar\pi_H(t):=\frac{1}{1-\delta}\rho\frac{\chi}{\nu(\MC(t))}D(t)$, is the hedging term.
\end{corollary}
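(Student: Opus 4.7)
My plan is to view Corollary \ref{CorMSHestonRhoSol} purely as a specialization of Theorem \ref{ThVerifRho} to the Heston diffusion; essentially nothing new has to be proved beyond parameter bookkeeping. First I would identify Model (\ref{MCHestonRhoCase1}) with the separable specification (\ref{SMMAFrho}) by reading off $\Gamma^{(1)}(e_i)=0$, $\bar\Gamma^{(2)}=d^2$, $\mu_X^{(1)}(e_i)=\kappa\theta(e_i)$, $\bar\mu_X^{(2)}=-\kappa$, and from $\sigma_X^2(x,e_i)=\chi^2x=b^2\bar\Gamma^{(2)}x$ the choice $b=\chi/|d|$. The affine constraint $\rho\gamma\sigma_X=\zeta^{(1)}+\zeta^{(2)}x$ is satisfied with $\zeta^{(1)}(e_i)=0$, so in particular the intercept hypothesis $\Sigma_X^{(1)}=\Gamma^{(1)}=\zeta^{(1)}=0$ required by Theorem \ref{ThVerifTimeD} also holds.

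Next I would translate the standing hypotheses of Theorem \ref{ThVerifRho}. Since $b\bar\Gamma^{(2)}=\chi|d|>0$ and $\bar\mu_X^{(2)}+\frac{\delta}{1-\delta}\rho b\bar\Gamma^{(2)}=-(\kappa-\frac{\delta}{1-\delta}\rho\chi|d|)$, assumption (\ref{cond3}) is exactly the required negativity; condition (\ref{cond1}) is exactly the quadratic constraint $\frac{\delta}{1-\delta}\bar\Gamma^{(2)}<\vartheta(\bar\mu_X^{(2)}+\frac{\delta}{1-\delta}\rho b\bar\Gamma^{(2)})^2/(b^2\bar\Gamma^{(2)})$ after substitution; and the positivity $\frac{\vartheta(a-\bar\mu_X^{(2)}-\frac{\delta}{1-\delta}\rho b\bar\Gamma^{(2)})}{b^2\bar\Gamma^{(2)}}\geq 0$ follows gratis from $a\geq 0$, $\vartheta>0$ and (\ref{cond3}). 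At that point the closed-form expressions for $a$, $c$, and $D(t)$ in (\ref{solutionDrho}) fall out of the generic formula in Theorem \ref{ThVerifRho}, while $\xi(t,e_i)$ in (\ref{xiHestonCor}) comes from substituting $\Gamma^{(1)}\equiv 0$ and $\mu_X^{(1)}(e_i)=\kappa\theta(e_i)$ into the generic $\upsilon$, collapsing it to $\delta r(e_i)+D(t)\kappa\theta(e_i)$.

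The one substantive hypothesis of Theorem \ref{ThVerifRho} still to dispose of is that, for every piecewise-constant path $m$, the candidate $\Phi^m$ in (\ref{HelpPhiTimeD}) is the actual value function of the corresponding time-dependent problem; this is where Theorem \ref{ThVerifTimeD} enters, and its intercept hypothesis $\Sigma_X^{(1)}=\Gamma^{(1)}=\zeta^{(1)}=0$ was verified in the first paragraph. The optimal portfolio (\ref{HestonRhoOptimPortf}) is then obtained by substituting $\sigma_P=\nu\sqrt{x}$, $\gamma=d\sqrt{x}$, $\sigma_X=\chi\sqrt{x}$ into the generic rule $\bar\pi=\frac{1}{(1-\delta)\sigma_P}(\gamma+\rho\sigma_X D)$ from Theorem \ref{ThVerifRho}. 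The only mildly delicate issue is bookkeeping the absolute value $|d|$ produced by $b=\chi/|d|$ against the signed $d$ appearing in $\gamma=d\sqrt{x}$ and $\zeta^{(2)}=\rho d\chi$: every quantity built from $b$ retains $|d|$ (hence $|d|$ shows up in (\ref{cond3})--(\ref{solutionDrho})), whereas the optimal portfolio and the drift correction inherit the signed $d$. Reconciling these occurrences is essentially the only point requiring care; no further analytic work is needed.
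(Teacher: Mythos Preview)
Your proposal is correct and follows exactly the paper's own argument, which simply invokes Theorem \ref{ThVerifRho} together with Theorem \ref{ThVerifTimeD}; you have in fact spelled out the parameter identifications and hypothesis checks in considerably more detail than the paper does. The only point to add is that your verification of the intercept condition $\Sigma_X^{(1)}=\Gamma^{(1)}=\zeta^{(1)}=0$ is precisely what licenses the appeal to Theorem \ref{ThVerifTimeD}, closing the loop on the one substantive assumption left open by Theorem \ref{ThVerifRho}.
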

%----------------------------------------------------------
\begin{proof}
The statement follows from Theorem \ref{ThVerifRho} and Theorem \ref{ThVerifTimeD}.\\
$\mbox{}$\halmos
\end{proof}
%Note that as mentioned earlier the optimal portfolio given in Equation (\ref{HestonRhoOptimPortf}) can be separated in two parts: the first one, $\bar\pi_{MV}(t):=\frac{1}{1-\delta}\frac{d}{\nu(\MC(t))}$, is called the mean-variance term and the second one, $\bar\pi_H(t):=\frac{1}{1-\delta}\rho\frac{\chi}{\nu(\MC(t))}D(t)$, is the hedging term. 
After having verified theoretically this solution we are now interested in interpreting it from an economical point of view. We start by deriving explicitly the dynamics of the variance process $Var(t):=\nu^2(\MC(t))X(t)$ of the log returns of the risky asset:
\begin{corollary}[Variance of the log returns] $\mbox{}$\\
The instantaneous variance of the log asset returns is characterized by the following SDE:
\begin{align*}
\text{d}Var(t)&=\hat{\kappa}\big(\MC(t)\big)\big(\hat{\theta}\big(\MC(t)\big)-Var(t)\big)\text{d}t+\hat{\chi}\big(\MC(t)\big)\sqrt{Var(t)}\text{d}W_X(t)\\
&+Var(t)\sum_{i=1}^{l}\frac{\nu^2(e_i)}{\nu^2\big(\MC(t)\big)}\text{d}M_i(t),
\end{align*}
where:
\begin{align*}
\hat{\kappa}\big(\MC(t)\big)&=\kappa-\sum_{i=1}^{l}\frac{\nu^2(e_i)}{\nu^2\big(\MC(t)\big)}q_{MC(t)i}\\ 
\hat{\theta}\big(\MC(t)\big)&=\frac{\nu^2\big(\MC(t)\big)\kappa\theta\big(\MC(t)\big)}{\hat{\kappa}}\\
\hat{\chi}\big(\MC(t)\big)&=\nu\big(\MC(t)\big)\chi.
\end{align*}
With this notation the price process for the risky asset is given by the following SDE:
\begin{align*}
\text{d}P_1(t)=P_1(t)\Big[r\big(\MC(t)\big)+\frac{d}{\nu\big(\MC(t)\big)}Var(t)\Big]\text{d}t+\sqrt{Var(t)}\text{d}W_P(t).
\end{align*}
\end{corollary}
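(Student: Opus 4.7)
The plan is to apply It\^o's formula for Markov-modulated processes to $Var(t) := \nu^2(\MC(t))X(t)$ and then regroup the resulting terms. First, the semimartingale decomposition of the chain (Appendix \ref{SecPrel}) applied componentwise to $e_i \mapsto \nu^2(e_i)$ yields
\begin{align*}
\text{d}\nu^2(\MC(t)) = \sum_{i=1}^{l}\nu^2(e_i)\,q_{MC(t),i}\,\text{d}t + \sum_{i=1}^{l}\nu^2(e_i)\,\text{d}M_i(t),
\end{align*}
where the $M_i$ are the purely discontinuous component martingales of $\MC$ introduced in the appendix. Since $X$ is a continuous semimartingale and $\nu^2(\MC)$ is purely discontinuous, the covariation $[\nu^2(\MC),X]$ vanishes, so the It\^o product rule reduces to
\begin{align*}
\text{d}Var(t) = \nu^2(\MC(t))\,\text{d}X(t) + X(t)\,\text{d}\nu^2(\MC(t)).
\end{align*}

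Substituting the CIR dynamics of $X$ from (\ref{MCHestonRhoCase1}) together with the previous display, and writing $X(t) = Var(t)/\nu^2(\MC(t))$ to extract $Var(t)$, the drift collects to
\begin{align*}
\nu^2(\MC(t))\kappa\theta(\MC(t)) - Var(t)\Big\{\kappa - \sum_{i=1}^{l}\frac{\nu^2(e_i)}{\nu^2(\MC(t))}q_{MC(t),i}\Big\}.
\end{align*}
The bracketed factor is by definition $\hat{\kappa}(\MC(t))$; factoring it out of the first term introduces $\hat{\theta}(\MC(t)) = \nu^2(\MC(t))\kappa\theta(\MC(t))/\hat{\kappa}(\MC(t))$ and produces the claimed mean-reverting drift $\hat{\kappa}(\MC(t))(\hat{\theta}(\MC(t))-Var(t))$. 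The continuous martingale part $\nu^2(\MC(t))\chi\sqrt{X(t)}\,\text{d}W_X(t)$ rewrites as $\nu(\MC(t))\chi\sqrt{Var(t)}\,\text{d}W_X(t) = \hat{\chi}(\MC(t))\sqrt{Var(t)}\,\text{d}W_X(t)$, and the remaining pure-jump martingale is exactly $Var(t)\sum_{i=1}^{l}\frac{\nu^2(e_i)}{\nu^2(\MC(t))}\,\text{d}M_i(t)$, completing the first claim.

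The second identity requires no stochastic calculus: substituting $Var(t) = \nu^2(\MC(t))X(t)$ in the SDE for $P_1$ in (\ref{MCHestonRhoCase1}) turns the drift contribution $d\,\nu(\MC(t))X(t)$ (where $d$ is the scalar constant in (\ref{SMMHrho})) into $\frac{d}{\nu(\MC(t))}Var(t)$, and the diffusion coefficient $\nu(\MC(t))\sqrt{X(t)}$ into $\sqrt{Var(t)}$. The main technical point is the very first step: one must fix a consistent convention for the semimartingale decomposition of $\MC$ so that the compensator of $\mathbf{1}_{\{\MC(t)=e_i\}}$ is indeed $\int_0^t q_{MC(s),i}\,\text{d}s$, and verify that no continuous local-martingale component appears in $\text{d}\nu^2(\MC(t))$ so that the covariation with $\text{d}X(t)$ truly vanishes; thereafter the entire proof is algebraic regrouping.
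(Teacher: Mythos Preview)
Your proposal is correct and follows essentially the same route as the paper, which simply invokes Theorem~\ref{ThIto} (It\^o's formula for Markov-modulated It\^o diffusions) applied to $f(t,x,e)=\nu^2(e)\,x$. Your decomposition via the semimartingale representation of $\nu^2(\MC(t))$ together with the product rule is exactly what that theorem produces when specialized to this bilinear $f$, and your algebraic regrouping of drift, diffusion and jump terms matches the claimed coefficients $\hat\kappa,\hat\theta,\hat\chi$.
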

\begin{proof}
Follows directly as an application of It\^o's formula for Markov-modulated It\^o diffusions, which is stated in Theorem \ref{ThIto}.
\end{proof}
Observe that the variance $Var$ follows a mean reverting process with jumps according to the Markov chain, where all parameters depend on the Markov chain. This rich stochastic structure makes the considered model very flexible and suitable for describing a wide range of markets. Furthermore, note that the market price of risk defined as the excess return divided by $Var(t)$ is given by $\frac{d}{\nu(\MC(t))}$, which is exactly the driver of the mean variance term of the optimal portfolio. The higher the reward for the corresponding risk (i.e. the higher $d$) or the lower the risk for the same reward (i.e. the lower $\nu$), the more attractive the stock becomes for the investor and the bigger $\bar\pi_{MV}$. Note that the absolute value of $\bar\pi_H$ exhibits similar behavior when $\nu$ changes, as the quotient between $\bar\pi_{MV}$ and $\bar\pi_H$ is invariant w.r.t. $\nu$:
$$\frac{\bar\pi_{MV}(t)}{\bar\pi_H(t)}=\frac{d}{\rho\chi D(t)}.$$
What is more, this relation remains the same for all states of the Markov chain. This makes sense as the hedging term is a correction of the mean variance portfolio, so it is adjusted whenever $\bar\pi_{MV}$ changes because of the Markov chain.\\
One can understand the influence of $\nu$ on the optimal portfolio even better by recalling the SDE of the wealth process resulting from the optimal strategy given in Equation (\ref{HestonRhoOptimPortf}):
\begin{align*}
\text{d}V^{\bar\pi}(t)=V^{\bar\pi}(t)\Big\{&r\big(\MC(t)\big)+\frac{d}{1-\delta}\big(d+\rho\chi D(t)\big)X(t)\text{d}t\\
&+\frac{1}{1-\delta}\big(d+\rho\chi D(t)\big)\sqrt{X(t)}\text{d}W_P(t)\Big\}.
\end{align*}
Note that the wealth process does not depend on $\nu$ and recall that $\nu$ determines the market price of risk and the influence of the stochastic factor $X$ on the volatility of the asset log returns. So, the optimal portfolio is chosen in such a way, that the investor is protected against changes in $\nu$ and the influence of this parameter. Another interesting observation is that the mean reversion level $\theta$ and the current level of $X$ do not influence directly the optimal policy. The reason is that the stochastic factor $X$ influences proportionally the instantaneous variance and the excess return of the risky asset price.\\
A detailed analysis of the single components driving the optimal portfolio is given in the next section. 
%Observe that the mean reversion level $\theta$ and the current level of $X$ do not influence the optimal policy directly. The reason is that the stochastic factor $X$ influences proportionally the instantaneous variance and the excess return of the risky asset price.\\
%A more detailed analysis of the results 
%----------------------------------------------------------------------
\subsection{Numerical implementation and discussion of the results}
\label{SubHestonNumRho}
%%%%%%%%%%%%%%%%%%%%%%%%%%%%%%%%%%%%%%%%%%%%%%%%%%%%%%%%%%%%%%%%%%%%%%%%%%%%%%%
In this section we illustrate and interpret the results derived above by some numerical examples. First of all we will specify the basic parameter set we are working with and show the numerical results for this parameter specification in order to point out the impact of the Markov switching. Then we will analyze the influence of the risk aversion parameter $\delta$ on the behavior of the investor. We will continue with the impact of $d$ and $\nu$, which are the driving parameters of $\bar\pi_{MV}$ as one can see in Equation (\ref{HestonRhoOptimPortf}). Afterwards we will discuss the sensitivity of the results to changes of the remaining parameters.\\
In what follows we consider Model (\ref{MCHestonRhoCase1}) and assume that the Markov chain can switch between two states. The first one, $e_1$, describes a calm market with moderate volatility levels. The second one, $e_2$ corresponds to a turbulent state with higher volatility and lower market price of risk. The investment time horizon is set to $T=5$ throughout the whole section. Based on the empirical results from \cite{AitSahalia2007}\footnote{In this paper the parameters of a Heston model are estimated based on daily  observations of the stock index S\&P500 and the volatility index VIX over the period from 1990 till 2003. Based on their results we choose our Markov modulated parameters in such a way, that the first state of $\MC$ describes a calm market and the second a volatile one.} we fix the following basic parameter set: $\kappa=4$, $\theta(e_1):=\theta_1=0.02$, $\theta(e_2):=\theta_2=0.04$, $\chi=0.35$, $d=1.7$, $\nu(e_1):=\nu_1=1$, $\nu(e_2):=\nu_2=1.3$, $r(e_1):=r_1=0.03$, $r(e_2):=r_2=0.01$, $\rho=-0.8$. So, on average the time needed for the mean reversion of process $X$ is $\frac{1}{\kappa}=\frac{1}{4}=3$ months. Further, following \cite{Bernhart2011}\footnote{In this paper a Markov-modulated Black Scholes model is estimated to weekly prices of S\&P500 over the period from 1987 till 2009. We use their result for the intensity matrix of the Markov chain.}, we set the elements of the intensity matrix to $q_{11}=-1.0909$, $q_{22}=-3.4413$. This means that on average the Markov chain remains one year in the calm state and around 4 months in the turbulent one, as the waiting time the Markov chain spends in state $e_i$ before the next jump is exponentially distributed with parameter $-q_{ii}$ and expectation $\frac{1}{q_{ii}}$. We will compare the results for two investors with different risk preferences: the first one has a positive risk aversion parameter $\delta=0.3$ and for the second one it is negative $\delta=-1$. We denote these parameter specifications by Set 1 and Set 2, respectively. This differentiation is necessary because $\delta$ influences strongly the optimal behavior of the investor, as we will see in what follows.\\
Table \ref{tab:CompFormSim} contains the optimal expected utility for Set 1 and Set 2 calculated as in Corollary \ref{CorMSHestonRhoSol}, where function $\xi$ is computed by a Monte Carlo simulation of the Markov chain with $10$ thousand simulations. Additionally, it is compared to the optimal expected utility computed via a full Monte Carlo simulation of Model (\ref{MCHestonRhoCase1}) with $1$ Mio. simulations and 250 steps per year, i.e. trading is done once per day.
\begin{table}[H]
\centering
%\footnotesize
\begin{tabular}{|c||c|c|c|c|}
\hline
Parameter & Formula & Comp. time & Monte Carlo & Comp. time\\
\hline
Set 1 & 7.4261 & 40 sec & 7.4260 & approx. 2.2 h \\
Set 2 & -0.0802 & 40 sec & -0.0802 & approx. 2.2 h \\
\hline
\end{tabular}
\caption{Comparison of the optimal expected utility computed as in Corollary \ref{CorMSHestonRhoSol} (second column) and by a full Monte Carlo simulation(fourth column), where $V(0)=10$, $X(0)=0.02$, $\MC(0)=e_1$, $T=5$. The third and fifth columns contain the corresponding computational time.}
\label{tab:CompFormSim}
\end{table}
One can notice that the values from the second and the fourth columns are quite close to each other. This shows that trading only once per day, which might be sensible in reality, does not lead to significant loss of utility. So the derived optimal strategy is practically applicable. What is more, note that many time consuming Monte Carlo simulations are required in order to obtain converging results for the full Monte Carlo approach, which confirms the importance of the derived theoretical results in Corollary \ref{CorMSHestonRhoSol}.\\
%%%%%%%%%%%%%%%%%%%%%%%%%%%%%%%%%%%%%%%%%%%%%%%%%%%%%%%%%%%%%%%%%%%%%%%%%%%%%%%
%\textbf{Compare the results for the 4 sets}
%%%%%%%%%%%%%%%%%%%%%%%%%%%%%%%%%%%%%%%%%%%%%%%%%%%%%%%%%%%%%%%%%%%%%%%%%%%%%%%
Now we compare the optimal trading strategies in the two discussed parameter settings. They are presented in Figure \ref{fig:PIset12}. The main part of the optimal strategy is given by the mean-variance portfolio. It can be observed that it is positive, which is to be expected, as the expected asset return exceeds the riskless interest rate for both states of the Markov chain. One can also recognize that higher $\delta$ leads to a higher long position in the risky asset and for $\delta=0.3$ the exposure even exceeds the investors's wealth. Lower $\delta$ results in a more moderate investment in the risky asset. This observation is in accordance to the interpretation of $\delta$ as a risk aversion parameter. Furthermore, the investment in the risky asset is lower in the turbulent state of the Markov chain, as it is associated with lower excess return and higher volatility, thus higher risk. It is because of this difference in the optimal behavior for the different states of the Markov chain, that it is important for the investor to recognize the true state and to react to the parameter changes.
\begin{figure}[h]
    \centering
    \includegraphics[trim=0.3cm 8cm 10.5cm 8cm,clip,scale=0.5]{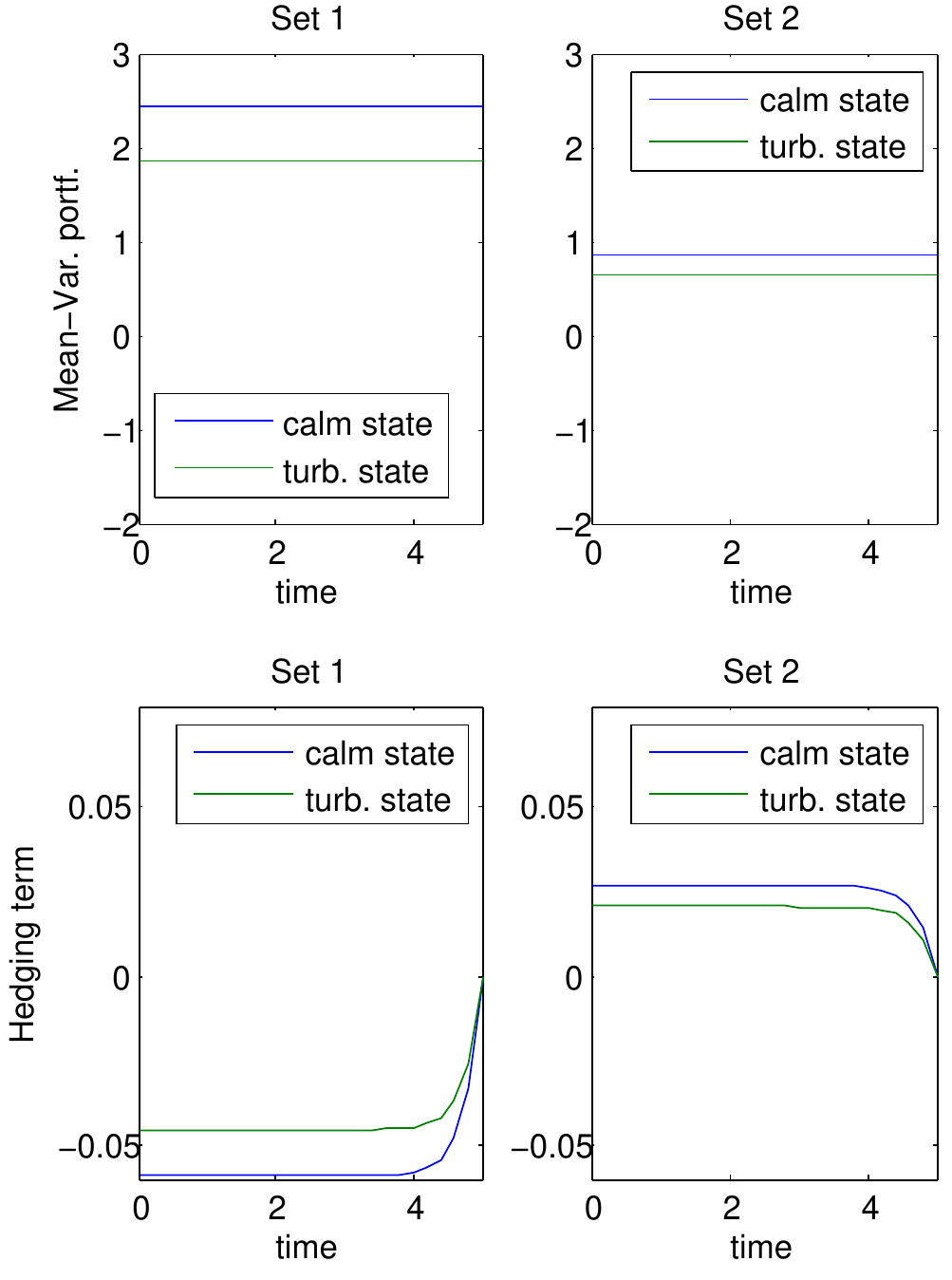}
    \caption{Mean-variance portfolio (first row) and hedging term (second row) for the two considered parameter sets. The blue lines represent the optimal investment in the calm state and the green lines the optimal investment in the turbulent state. }
    \label{fig:PIset12}
\end{figure}
The hedging term depends among others on the correlation $\rho$ between the Brownian motions for the stock and the stochastic factor, the volatility coefficient $\chi$ of the stochastic factor and function $D$. Function $D$ is plotted in Figure \ref{fig:Y1set1234}. 
\begin{figure}[h]
    \centering
    \includegraphics[trim=0cm 15cm 0cm 8cm,clip,scale=0.45]{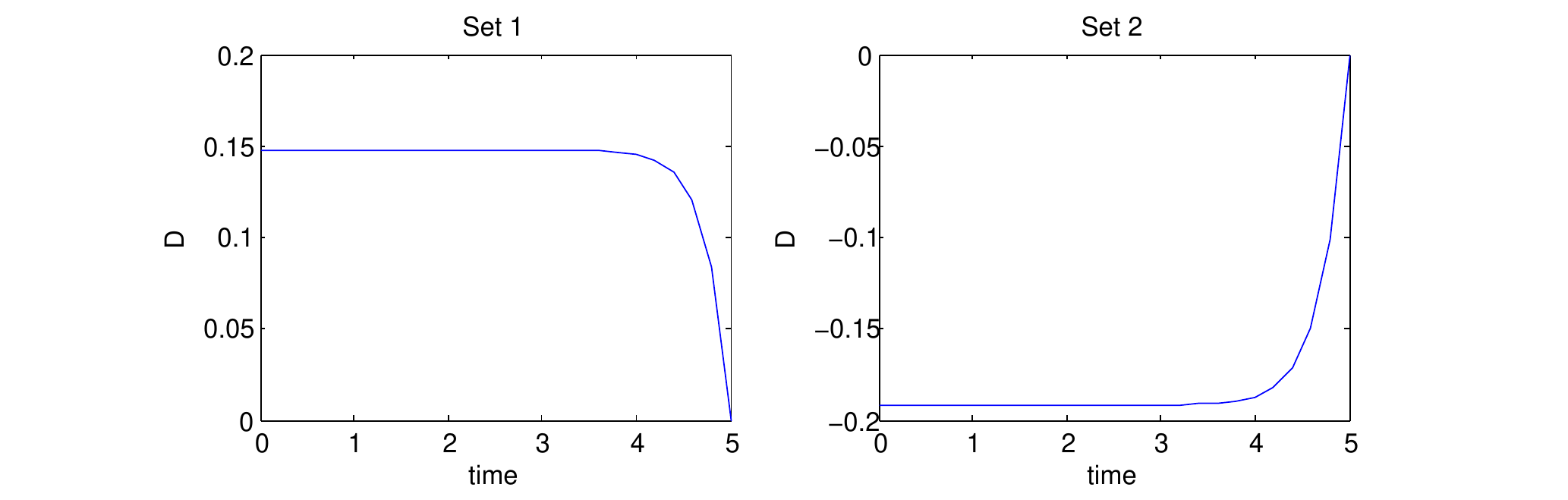}
    \caption{Function $D$ against time for the two different parameter sets.}
    \label{fig:Y1set1234}
\end{figure}
Note that for $\delta>0$ it is positive, a fact that follows from Statements i), ii) and iii) in Lemma \ref{LemmaPropCharF}. So, as $\rho$ is negative, the hedging term for Set 1 is negative. This makes sense, as the negative correlation relates higher volatility due to higher increments of $W_X$ to falling asset prices. So, the investor reduces his long risky position as a protection against this additional risk. The hedging term in the turbulent state is slightly higher, i.e. it has a smaller absolute value, as the mean-variance portfolio is also smaller in this state, the quotient $\frac{\bar\pi_{H}}{\bar\pi_{MV}}$, however, remains the same, as mentioned before. If $\delta<0$ then function $D$ is negative and because of the negative $\rho$ the hedging term is positive in both states. This might be interpreted as a hedge against missed profit chances, as the mean-variance portfolio of this highly risk averse investor is smaller than one. Note that for both values for $\delta$ the absolute value of $D$ is remaining stable over the almost whole time horizon and in the last year it decreases rapidly towards zero. The role of the hedging diminishes for short time horizons, as small changes in the price of the risky asset influence strongly the final wealth and this high sensitivity, thus high risk, is dominating the hedging effect.\\
%%%%%%%%%%%%%%%%%%%%%%%%%%%%%%%%%%%%%%%%%%%%%%%%%%%%%%%%%%%%%%%%%%%%%%%%%%%%%%%%
%\textbf{Influence of the risk aversion parameter in set 1}
%%%%%%%%%%%%%%%%%%%%%%%%%%%%%%%%%%%%%%%%%%%%%%%%%%%%%%%%%%%%%%%%%%%%%%%%%%%%%%%
Now we deepen our observations on the influence of the risk aversion parameter $\delta$ on the optimal portfolio. Figure \ref{fig:Histograms} shows the density of the terminal wealth of an investor following the derived optimal strategy for different values of $\delta$.
\begin{figure}[h]
    \centering
    \includegraphics[trim=0cm 4cm 0cm 3.8cm,clip,scale=0.5]{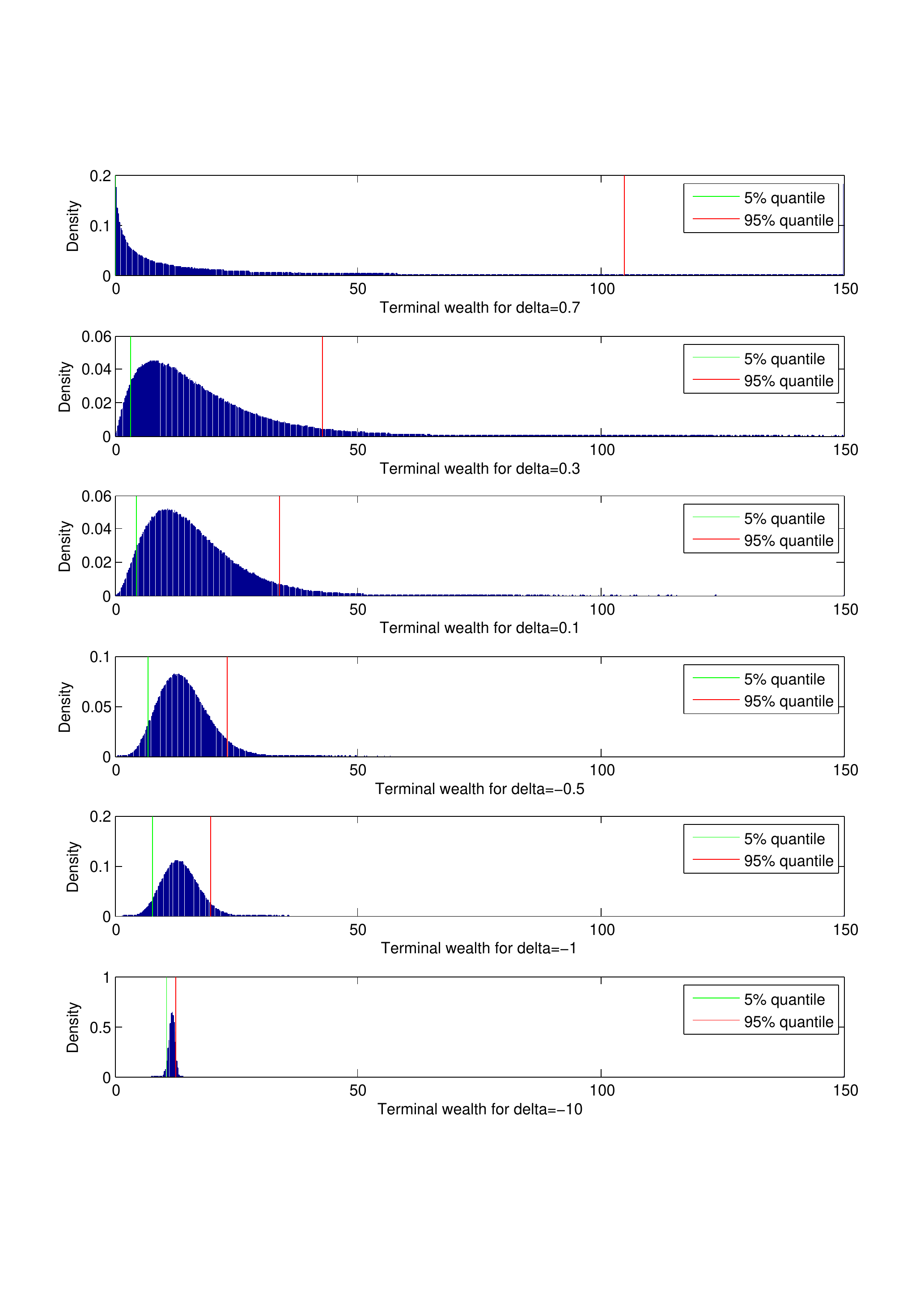}
    \caption{Density of the terminal wealth of an investor following the derived optimal strategy for different values of $\delta$, where the remaining parameters are adopted from Set 1. The densities are obtained via Monte Carlo simulations of Model (\ref{MCHestonRhoCase1}). For reasons of better comparability all values higher than 150 are summarized in the last bar in the plots.}
    \label{fig:Histograms}
\end{figure}
One can clearly recognize that higher values for $\delta$ lead to higher probabilities for both very low (close to zero) and very high (even above 120) wealth levels. In contrast, for small $\delta$ both probabilities for high losses and high profits are much smaller. This fact is reflected also in the shift of the $5\%$-quantile to the right and of the $95\%$-quantile to the left for smaller values for $\delta$.
Figure \ref{fig:PIdelta} confirms that the smaller $\delta$, the more conservative the mean-variance portfolio. The absolute value of the hedging term also decreases for smaller delta mainly because of the influence of the factor $\frac{1}{1-\delta}$ which drives the mean-variance term as well. Furthermore, in Figure \ref{fig:PIdelta} one can observe once again that in the turbulent state the investor holds less of the risky asset throughout time and all values for $\delta$.
\begin{figure}[h]
    \centering
    \includegraphics[trim=0cm 7cm 0cm 7cm,clip,scale=0.5]{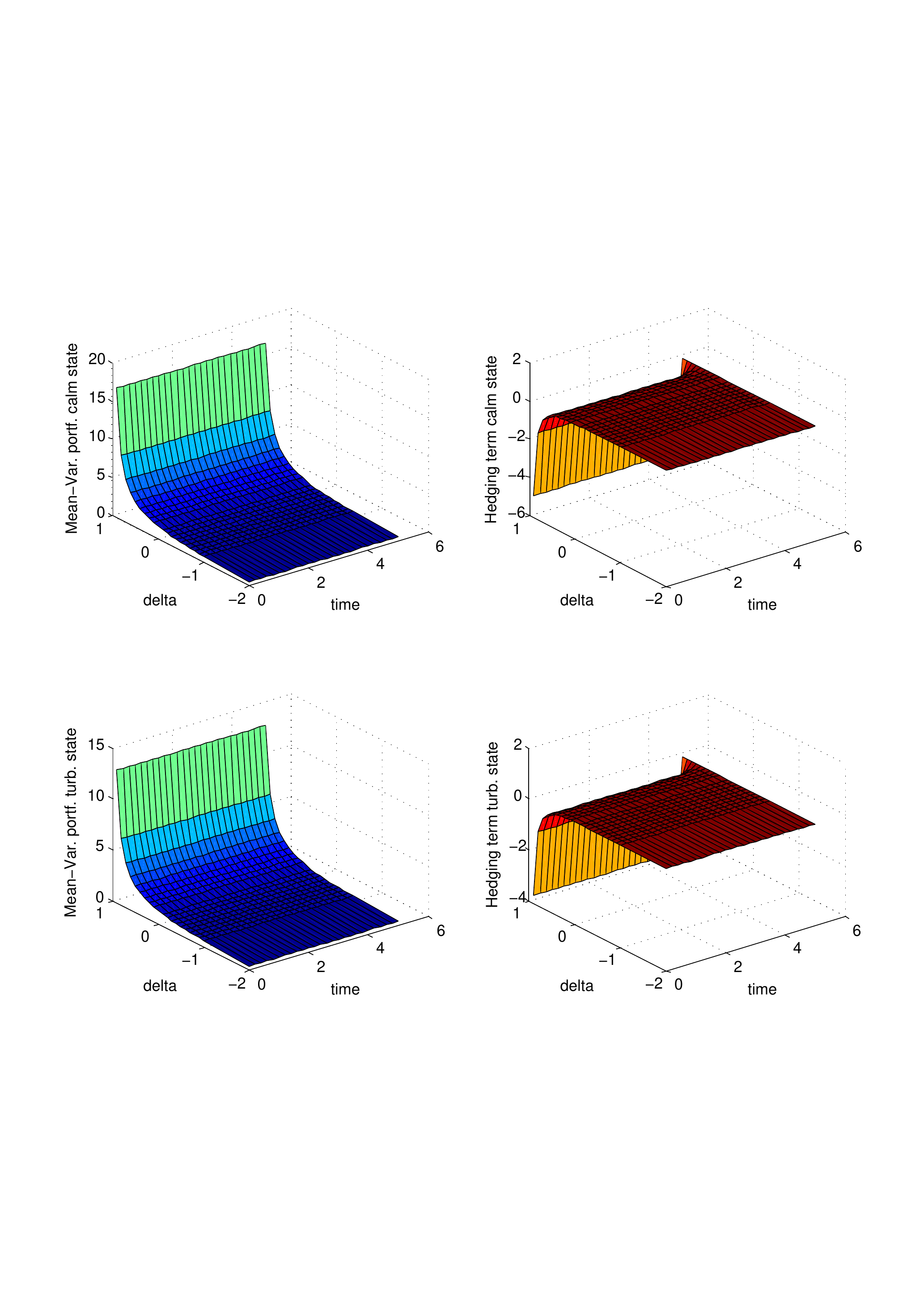}
    \caption{Optimal mean-variance portfolio and hedging term over time for different values of $\delta$. The remaining parameters are adopted from Set 1.}
    \label{fig:PIdelta}
\end{figure}
%
%%%%%%%%%%%%%%%%%%%%%%%%%%%%%%%%%%%%%%%%%%%%%%%%%%%%%%%%%%%%%%%%%%%%%%%%%%%%%%%
%\textbf{Influence of the d and nu on the portfolio}\\
%%%%%%%%%%%%%%%%%%%%%%%%%%%%%%%%%%%%%%%%%%%%%%%%%%%%%%%%%%%%%%%%%%%%%%%%%%%%%%%
We proceed with the remaining components of the mean-variance portfolio: $d$ and $\nu$. The influence of $d$ on $\bar\pi_{MV}$ is naturally positive, if $\nu$ is positive, as high $d$ means high market price of risk. The absolute value of the hedging term also increases with $d$ which might be interpreted in the case of $\delta>0$ as a compensation for the higher exposure and in the case of $\delta<0$ as willingness to increase the exposure due to the higher market price of risk. Figure \ref{fig:Histogramd} shows the influence of $d$ on the distribution of the terminal wealth of an investor with $\delta=0.3$. One can see that the higher $d$, the higher the $95\%$ quantile on the one side and the lower the $5\%$ quantile on the other. So, both probabilities for big gains and big losses get higher. The risk on the down side comes from the fact that the ivestor has a higher exposure for bigger values for $d$, so if the stock price falls, she bears higher losses. However, the effect on the positive side is stronger, as an increasing $d$ leads to a high excess return of the stock and thus to the possibility for much higher gains. As Figure \ref{fig:Histogramd1} shows, the same can be observed also for $\delta=-1$, however the influence of $d$ on the wealth distribution is not so strong, as we are dealing with a more risk averse investor, who prefers having less exposure to the risky asset. Furthermore, observe that the wealth distribution in this case is more symmetric than for $\delta=0.3$, which reflects once again the risk aversion of the investor. In contrast, an investor with $\delta=0.3$ is willing to accept the high mass on the lower end because it is compensated by the possibility for very high profits.\\
\begin{figure}[h]
    \centering
    \includegraphics[trim=0cm 11.5cm 0cm 3.8cm,clip,scale=0.5]{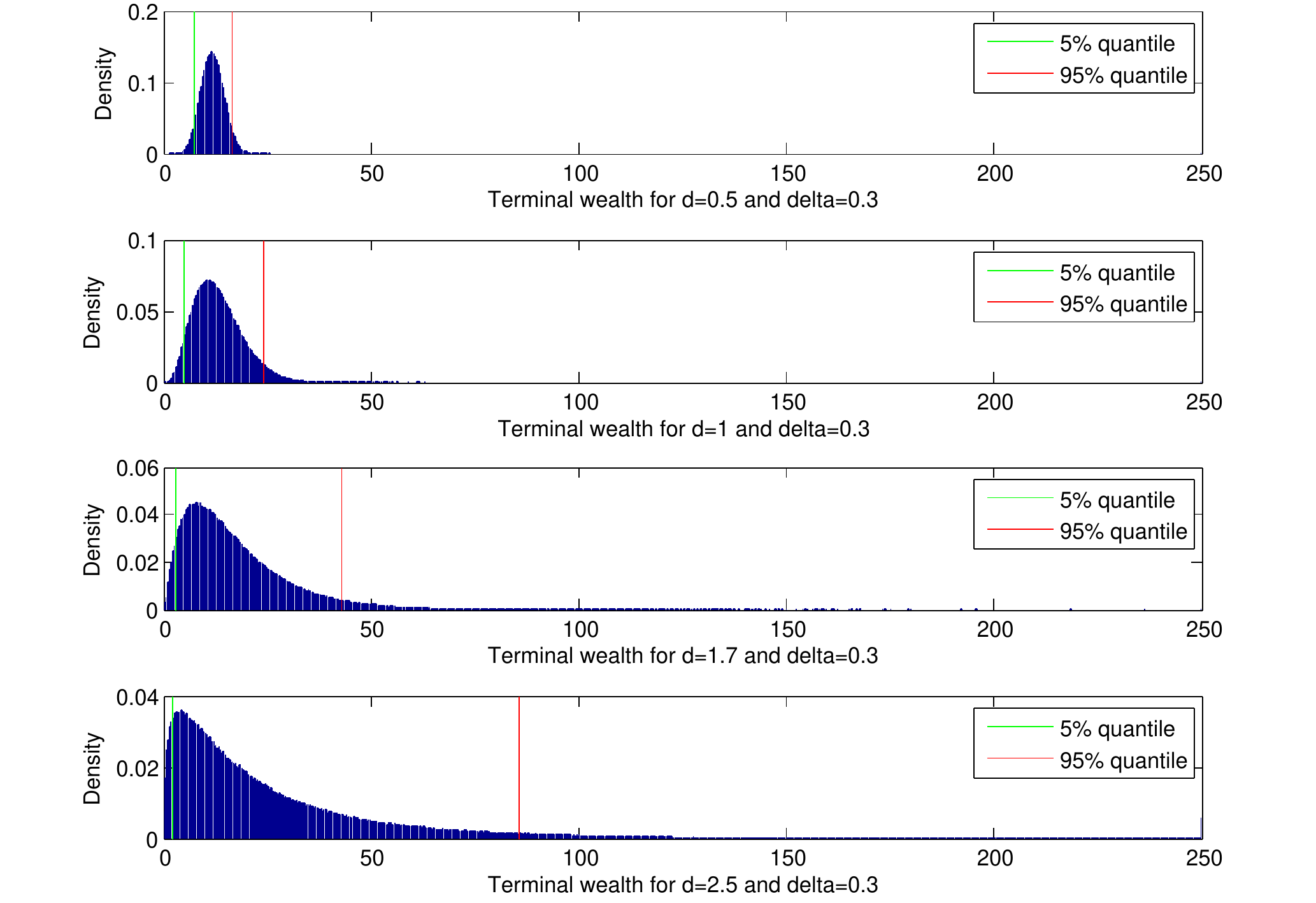}
    \caption{Density of the terminal wealth of an investor following the derived optimal strategy for different values of $d$, where the remaining parameters are adopted from Set 1. The densities are obtained via Monte Carlo simulations of Model (\ref{MCHestonRhoCase1}). For reasons of better comparability all values higher than 250 are summarized in the last bar in the plots.}
    \label{fig:Histogramd}
\end{figure}
\begin{figure}[h]
    \centering
    \includegraphics[trim=0cm 11.5cm 0cm 3.8cm,clip,scale=0.5]{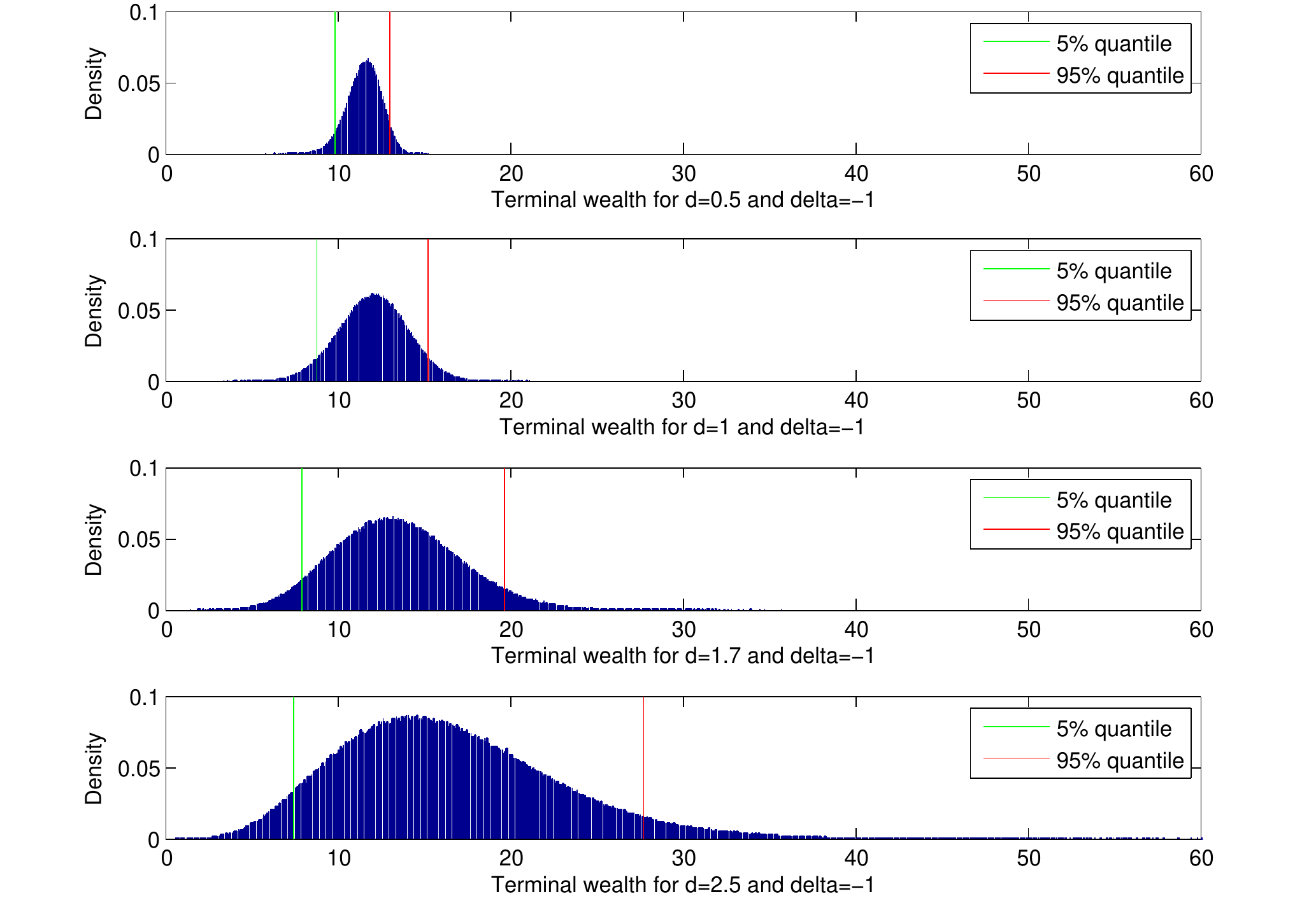}
    \caption{Density of the terminal wealth of an investor following the derived optimal strategy for different values of $d$, where the remaining parameters are adopted from Set 2. The densities are obtained via Monte Carlo simulations of Model (\ref{MCHestonRhoCase1}). For reasons of better comparability all values higher than 60 are summarized in the last bar in the plots.}
    \label{fig:Histogramd1}
\end{figure}
The influence of $\nu$ on the investment strategy can be easily derived from the analytical formula for the optimal portfolio. Higher values of $\nu$ lead to lower investments in the risky assets, as it reduces the market price of risk. As $\nu_1<\nu_2$, the exposure in the risky asset is smaller in the turbulent state than in the calm one. This observation holds for the mean-variance portfolio as well as for the hedging term. The bigger the difference between $\nu_1$ and $\nu_2$ the more important for the investor to recognize the Markov switching character of the market and to adjust her strategy.\\
Let us continue with the remaining model parameters. As we can easily see from Equation (\ref{HestonRhoOptimPortf}) $\rho$ influences the sign of the hedging term: $sign(\bar\pi_H)=sign(D)sign(\rho)=sign(\delta)sign(\rho)$. So for positive $\delta$ and positive $\rho$ the hedging term is positive because the investor profits from the possibility to achieve higher terminal wealth levels, as an increase of $X$ due to positive increments of $W_X$ would be related to an increase in the stock price due to positive increments in $W_P$). On the contrary for a very risk averse investor with $\delta<0$ positive correlation results in a negative hedging portfolio, as the possibility for high returns does not compensate enough for the increase in the risk of falling $X$ and falling stock prices. For $\delta>0$ and $\rho<0$ the hedging term is negative in order to reduce the exposure to the additional risk coming from an increase in stochastic factor and fall in the stock price, especially considering the big long position from $\bar\pi_{MV}$. If $\delta<0$ and $\rho<0$ the hedging term is positive and allows the investor to profit from scenarios with low volatility and thus stable high asset prices. This makes sense especially considering the interpretation of the hedging term as protection against too conservative investments. Now that we have interpreted the sign of the hedging term we will consider how its absolute value changes with the considered parameters. The higher the correlation between $W_X$ and $W_P$, the higher the absolute value of the hedging term as the stochastic factor can be better hedged by the risky asset. Furthermore, it can be summarized that lower values for $\chi$ and higher values for $\kappa$ lead to a lower absolute value of the hedging term, as lower volatility coefficient reduce the risk coming from the stochastic factor $X$ and faster mean reversion makes it more difficult to hedge. The corresponding plots are omitted here for space reasons.\\
%%%%%%%%%%%%%%%%%%%%%%%%%%%%%%%%%%%%%%%%%%%%%%%%%%%%%%%%%%%%%%%%%%%%%%%%%%%%%%%
%\textbf{Influence of rho, chi and kappa on the portfolio and the value function}\\
%%%%%%%%%%%%%%%%%%%%%%%%%%%%%%%%%%%%%%%%%%%%%%%%%%%%%%%%%%%%%%%%%%%%%%%%%%%%%%%
Now we will discuss the influence of the Markov switching parameter $\theta$. It does not influence the optimal portfolio policy, as mentioned earlier, but only function $\xi$ and herewith the value function $\Phi$. Figures \ref{fig:Y2theta} and \ref{fig:Y2theta1} present function $\Phi$ for different choices for $\theta$ for $\delta=0.3$ and $\delta=-1$, respectively.%
%%%%%%%%%%%%%%%%%
%Influence theta
%%%%%%%%%%%%%%%%%
\begin{figure}[h]
    \centering
    \includegraphics[trim=0cm 2.5cm 0cm 10.8cm,clip,scale=0.5]{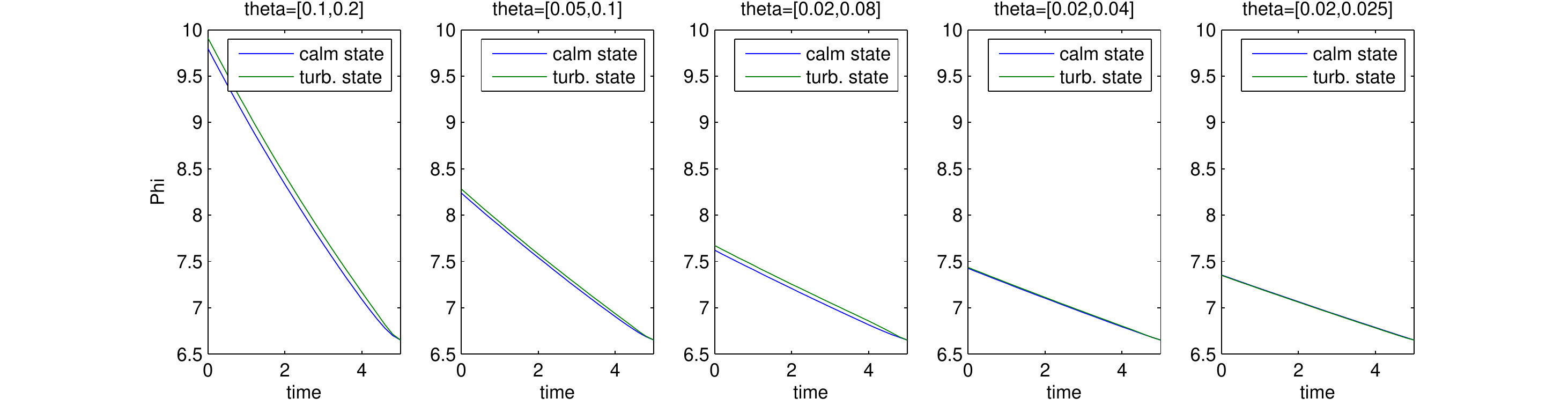}
    \caption{Function $\Phi$ against time for different choices for $(\theta_1,\theta_2)$. The remaining parameters are adopted from Set 1 ($\delta=0.3$).}
    \label{fig:Y2theta}
\end{figure}
\begin{figure}[h]
    \centering
    \includegraphics[trim=0cm 2.5cm 0cm 10.8cm,clip,scale=0.5]{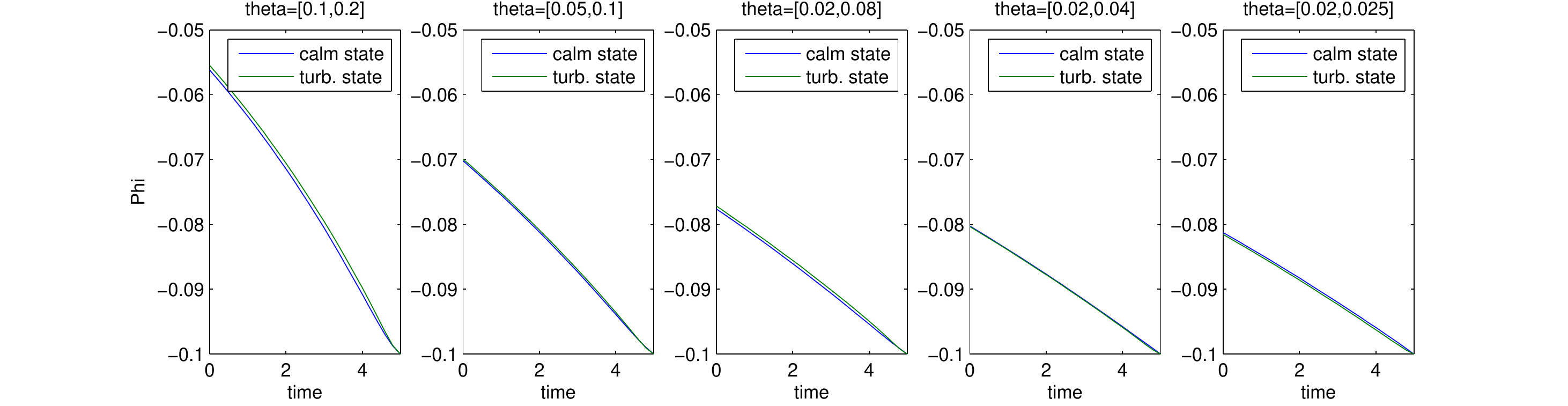}
    \caption{Function $\Phi$ against time for different choices for $(\theta_1,\theta_2)$. The remaining parameters are adopted from Set 2 ($\delta=-1$).}
    \label{fig:Y2theta1}
\end{figure}
It can be seen that an increase in $\theta_1$ and $\theta_2$ leads to higher optimal expected utility both for $\delta=0.3$ and $\delta=-1$. The investor takes advantage of the stochastic factor and uses this additional risk to generate more profit. Furthermore, it can be seen from the plots that the value function $\Phi$ does not depend heavily on the initial state of the Markov chain. This might be explained by the fact that for longer maturities the Markov chain will have enough time to switch a few times and for shorter maturities the integral in the formula for $\xi$ is quite small. However, the bigger the difference between $\theta_1$ and $\theta_2$ the bigger the influence of the current state of the Markov chain on $\Phi$.\\
%%%%%%%%%%%%%%%%%%%%%%%%%%%%%%%%%%%%%%%%%%%%%%%%%%%%%%%%%%%%%%%%%%%%%%%%%%%%%%%
%\textbf{Influence of the intensity matrix on the value function}\\
%%%%%%%%%%%%%%%%%%%%%%%%%%%%%%%%%%%%%%%%%%%%%%%%%%%%%%%%%%%%%%%%%%%%%%%%%%%%%%%
%----------------------------------------------------------------------
%----------------------------------------------------------------------
\section{Conclusion}
\label{SecConcl}
In the context of utility maximization we presented a flexible and at the same time analytically tractable asset price model, where two additional sources of randomness are considered: a Markov chain and a continuous stochastic factor following a Markov modulated diffusion process. Relying on the semimartingale representation of the Markov chain we stated the corresponding HJB equations. We found explicit solutions for the optimal control and derived the value function up to an expectation over the Markov chain. Furthermore, we confirmed the practicability of the derived formulas by some numerical implementations. We found out that the optimal portfolio consists of two parts: the first one corresponds to the solution of the mean-variance optimization problem and the second one corrects it for the additional risk coming from the stochastic factor. What is more, we showed that the state of the Markov chain influences strongly the optimal investment decision. Besides these practical insights we proved an easy to check verification theorem that reduces the case with Markov switching to the one with time-dependent coefficients. We applied it to verify the solution for the Markov modulated Heston model. To sum up, we contributed to literature by solving the optimal investment problem in a realistic model and analyzing the results from theoretical and practical point of view.
%----------------------------------------------------------------------

%----------------------------------------------------------------------
\appendix
%\settocdepth{1}
%----------------------------------------------------------------------
%----------------------------------------------------------------------
\section{General mathematical results on Markov chains and Markov modulated It\^o diffusions}
\label{SecPrel}
%----------------------------------------------------------------------
%----------------------------------------------------------------------
In this section we give an overview over some properties of Markov chains that will be of use for the solution of the considered optimization problem. The longer proofs are outsourced in the appendix.\\
The next theorem states that a Markov chain is a semimartingale and reveals from the point of view of semimartingale theory the importance of the intensity matrix. For a proof see \cite{Yin1998}, p.18, Lemma 2.4.
%--------------------------------
\begin{theorem}[Semimartingale decomposition]$\mbox{}$\\ 
Let $\MC$ be a Markov chain with state space $\E=\{e_1,\ldots,e_l\}$ and generator matrix $Q$. Then, $\MC$ can be written in the following form:
\begin{align}
\MC(t)=\MC(0)+\int_{0}^{t}Q^{\prime}\MC(s)\text{d}s+M(t),\forall t\in[0,\infty),
\label{semimart}
\end{align}
where $M$ is a martingale w.r.t. the filtration $\IF^{\MC}=\{\F_t^{\MC}\}_{t\geq0}$ generated by process $\MC$. Observe that process $\int_{0}^{t}Q^{\prime}\MC(s)\text{d}s$ is of finite variation. Thus, Equation (\ref{semimart}) is a semimartingale representation of $\MC$.
\label{ThSemimart}
\end{theorem}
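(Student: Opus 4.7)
The plan is to verify directly that the process $M$ defined by setting $M(t):=\MC(t)-\MC(0)-\int_0^t Q^{\prime}\MC(s)\,\text{d}s$ is a martingale with respect to $\IF^{\MC}$; this is exactly what is required to read Equation (\ref{semimart}) as a semimartingale decomposition, since the integral term is absolutely continuous and hence of finite variation (note also that $\MC$ is bounded, taking values in the finite set $\E$).

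First I would reduce the computation to the Markov property. For $0\leq s\leq t$, the Markov property of $\MC$ gives $\IE[M(t)-M(s)\mid\F_s^{\MC}]=\IE[M(t)-M(s)\mid \MC(s)]$, so it is enough to evaluate both terms conditionally on $\{\MC(s)=e_i\}$ for each $i\in E$. Let $P(\tau):=\exp(\tau Q^{\prime})$ denote the transition semigroup; standard results for finite-state continuous-time homogeneous Markov chains (see \cite{Yin1998}) give $\IE[\MC(t)\mid \MC(s)=e_i]=P(t-s)e_i$ together with the forward equation $\frac{\text{d}}{\text{d}\tau}P(\tau)=Q^{\prime}P(\tau)$ and $P(0)=I$. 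This supplies the first half of the identity: $\IE[\MC(t)-\MC(s)\mid \MC(s)=e_i]=(P(t-s)-I)e_i$.

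Next I would handle the drift-like term. Applying Fubini (licit because $\MC$ is bounded and the integrand is measurable), one obtains
\begin{align*}
\IE\Big[\int_s^t Q^{\prime}\MC(u)\,\text{d}u\,\Big|\,\MC(s)=e_i\Big] = Q^{\prime}\int_0^{t-s} P(v)e_i\,\text{d}v.
\end{align*}
Integrating the forward equation from $0$ to $t-s$ gives $Q^{\prime}\int_0^{t-s}P(v)\,\text{d}v=P(t-s)-I$, so the two conditional expectations coincide and $\IE[M(t)-M(s)\mid\F_s^{\MC}]=0$. Combined with integrability (immediate from boundedness of $\MC$ and of the generator $Q$), this establishes that $M$ is an $\IF^{\MC}$-martingale.

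The main conceptual obstacle is not technical but notational: one must recognise that the assertion hinges on the integrated Kolmogorov forward equation $Q^{\prime}\int_0^{\tau}P(v)\,\text{d}v=P(\tau)-I$, after which the martingale property of $M$ is a bookkeeping exercise. The only subtleties are (i) justifying the Fubini interchange, which is routine given the boundedness of $\MC$, and (ii) verifying adaptedness of $\int_0^tQ^{\prime}\MC(s)\,\text{d}s$ to $\IF^{\MC}$, which is immediate from measurability of the integrand. Hence no heavy machinery is required beyond the elementary theory of continuous-time Markov chains recalled in the appendix of the cited reference.
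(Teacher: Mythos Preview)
Your argument is correct: defining $M(t):=\MC(t)-\MC(0)-\int_0^t Q'\MC(s)\,\text{d}s$ and verifying the martingale increment property via the Markov property, Fubini, and the integrated forward equation $Q'\int_0^\tau P(v)\,\text{d}v=P(\tau)-I$ is precisely the standard route, and all steps are justified by the boundedness of $\MC$.

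The paper itself does not supply a proof at all; it simply cites \cite{Yin1998}, p.~18, Lemma~2.4. Your write-up is therefore strictly more informative than the paper's treatment, and in fact reproduces the essence of the argument found in that reference. There is nothing to correct.
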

%--------------------------------
A key result needed for our derivations is the It\^o's formula for Markov modulated It\^o diffusions. Before stating the main result, let us define formally what we call a Markov modulated It\^o diffusion.
%--------------------------------
\begin{Def}[Markov modulated It\^o diffusion]$\mbox{}$\\ 
Let $W:=\{W(t)\}_{t\geq 0}$ be an $m$-dimensional Brownian motion and $\MC:=\{\MC(t)\}_{t\geq 0}$ as in Theorem \ref{ThSemimart}.
Then the process $X=\{X(t)\}_{t\geq 0}=\big\{\big(X_1(t),\ldots,X_n(t)\big)^{\prime}\big\}_{t\geq 0}\in\IR^n$ defined via:
\begin{align}
X(t)&=X(0)+\int_0^t\mu_X\big(s,X(s),\MC(s)\big)\text{d}s+\int_0^t\sigma_X\big(s,X(s),\MC(s)\big)\text{d}W(s)\\
&=\left(\begin{array}{c}x^1_0\\\vdots\\x^n_0\end{array}\right)+\int_0^t\left(\begin{array}{c}\mu^1_X\big(s,X(s),\MC(s)\big)\\\vdots\\\mu^n_X\big(s,X(s),\MC(s)\big)\end{array}\right)\text{d}s\notag\\
&+\int_0^t\left(\begin{array}{ccc}\sigma^{11}_X\big(s,X(s),\MC(s)\big)&\ldots&\sigma^{1m}_X\big(s,X(s),\MC(s)\big)\\\vdots&\vdots&\vdots\\\sigma^{n1}_X\big(s,X(s),\MC(s)\big)&\ldots&\sigma^{nm}_X\big(s,X(s),\MC(s)\big)\end{array}\right)\text{d}\left(\begin{array}{c}W_1(s)\\\vdots\\W_m(s)\end{array}\right)\notag
\end{align}
is called a Markov modulated It\^o diffusion, where $\mu_X:[0,\infty)\times \IR^n\times\E\rightarrow\IR^n$ and $\sigma_X:[0,\infty)\times \IR^n\times\E\rightarrow\IR^{n,m}$ are deterministic functions. Observe that $X$ is a continuous process as we have two integrals w.r.t. continuous processes. The natural filtration of $X$ is denoted by $\IF^{X}=\{\F_t^X\}_{t\geq0}$.
\label{DefMarkModDiff}
\end{Def}
%--------------------------------
\begin{theorem}[It\^o's formula for Markov modulated It\^o diffusions]$\mbox{}$\\ 
Consider process $X$ as in Definition \ref{DefMarkModDiff}. Further, a function $f:[0,\infty)\times \IR^n\times\E\rightarrow\IR, \big(t,(x_1,\ldots,x_n)^{\prime},e_i\big)\mapsto f\big(t,(x_1,\ldots,x_n)^{\prime},e_i\big)$ is given, which is once continuously differentiable in the first variable and twice continuously differentiable in the second for all $e_i\in\E$. Then $\big\{f\big(t,X(t),\MC(t)\big)\big\}_{t\geq0}$ is a semimartingale and we have:
\begin{align*}
\begin{aligned}
&f\big(t,X(t),\MC(t)\big)=f(0,X(0),\MC(0))\\
&+\int_0^t\Big[\frac{\partial}{\partial t}f\big(s,X(s),\MC(s)\big)+\sum_{j=1}^{n}\frac{\partial}{\partial x_j}f\big(s,X(s),\MC(s)\big)\mu^j_x\big(s,X(s),\MC(s)\big)\\
&+\frac{1}{2}\sum_{j,k=1}^{n}\frac{\partial^2}{\partial x_j\partial x_k}f\big(s,X(s),\MC(s)\big)\sum_{r=1}^{m}\sigma_X^{jr}\big(s,X(s),\MC(s)\big)\sigma_X^{kr}\big(s,X(s),\MC(s)\big)\Big]\text{d}s\\
&+\int_0^t\sum_{r=1}^{m}\Big(\sum_{j=1}^{n}\sigma^{jr}_X\big(s,X(s),\MC(s)\big)\frac{\partial}{\partial x_j}f\big(s,X(s),\MC(s)\big)\Big)\text{d}W_r(s)\\
&+\int_0^t\sum_{i=1}^{l}f\big(s,X(s),e_i\big)q_{MC(s)i}\text{d}s+\int_0^t\sum_{i=1}^{l}f(s,X(s),e_i)\text{d}M_i(s).
\end{aligned}
\end{align*}
\label{ThIto}
\end{theorem}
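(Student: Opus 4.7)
My strategy is to reduce the statement to the classical Itô formula for semimartingales via a linear extension of $f$ in its Markov-chain argument. Since $\E=\{e_1,\ldots,e_l\}$ consists of the unit vectors of $\IR^l$, any function $f(t,x,\cdot)$ on $\E$ is the restriction of the unique linear-in-$y$ map
$$\tilde f(t,x,y):=\sum_{i=1}^{l}f(t,x,e_i)\,y_i,\qquad y\in\IR^l,$$
which is $\mathcal{C}^{\infty}$ in $y$. Its gradient in $y$ is the $y$-independent vector $\bigl(f(t,x,e_1),\ldots,f(t,x,e_l)\bigr)^{\prime}$, and all second-order derivatives in $y$ (including mixed partials with $x$) vanish. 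By Theorem~\ref{ThSemimart}, $\MC$ is an $\IR^l$-valued semimartingale whose finite-variation part is continuous and whose martingale part $M$ is purely discontinuous, and $X$ is a continuous semimartingale by its defining SDE. Hence $(X,\MC)$ is a joint semimartingale in $\IR^n\times\IR^l$ to which the general Itô formula applies.

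The next step is to apply the classical semimartingale Itô formula (in the form found, e.g., in Protter's book) to $\tilde f(t,X(t),\MC(t))$. Two simplifications do most of the work. First, because $X$ is continuous, $\Delta X\equiv 0$, which kills all cross-bracket contributions between $X$ and $\MC$ and also removes $X$ from the jump-correction term. Second, because $\tilde f$ is linear in $y$, all second $y$-derivatives vanish, and the pathwise jump correction
$$\tilde f(s,X(s),\MC(s))-\tilde f(s,X(s),\MC(s-))-\nabla_y\tilde f(s,X(s),\MC(s-))\cdot\Delta\MC(s)$$
is identically zero at each jump of $\MC$, since $\tilde f$ equals its own first-order Taylor expansion in $y$. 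What survives is the time-derivative term, the first- and second-order $x$-terms written out in the statement (obtained from $\int_0^t\nabla_x f\,dX(s)$ together with $\frac12\int_0^t\mathrm{tr}(\sigma_X\sigma_X^{\prime}\nabla^2_x f)\,ds$), and a single extra integral $\int_0^t\nabla_y\tilde f(s,X(s))\,d\MC(s)$.

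The final step is to split this last integral using the decomposition $d\MC(s)=Q^{\prime}\MC(s)\,ds+dM(s)$ from Theorem~\ref{ThSemimart}. Since $\nabla_y\tilde f$ does not depend on the $\MC$-path, the vector stochastic integral against $M$ splits component-wise into $\sum_{i=1}^{l}\int_0^t f(s,X(s),e_i)\,dM_i(s)$, which is the final martingale term. The drift piece becomes $\int_0^t\sum_{i=1}^{l}f(s,X(s),e_i)\,(Q^{\prime}\MC(s))_i\,ds$, and the identity $(Q^{\prime}\MC(s))_i=\sum_{j}q_{ji}\I{\MC(s)=e_j}=q_{MC(s)\,i}$, valid for all $s$ since $\MC(s)\in\E$, produces exactly the claimed compensator term. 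No single step is really an obstacle; the key conceptual observation is that linearity of $\tilde f$ in the Markov-chain argument makes the discrete jumps of $\MC$ fully compatible with the classical Itô calculus, bypassing the need for a genuinely jump-adapted change-of-variables formula.
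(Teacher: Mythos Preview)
Your proposal is correct and is essentially the same approach as the paper's: both reduce the statement to the general It\^o formula for semimartingales (the paper simply cites Jacod--Shiryaev, Theorem~I.4.47, without further detail). Your linear extension $\tilde f(t,x,y)=\sum_i f(t,x,e_i)y_i$ is exactly the standard device that makes this reduction transparent, and your observations that linearity in $y$ kills the jump-correction sum and that continuity of $X$ kills all cross-brackets with $\MC$ are precisely what is needed.
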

%-------------------------------
\begin{proof}
Follows as an application of the general It\^o formula for semimartingales from \cite{Jacod2003}, p.57, Theorem 4.47.
\end{proof}
%------------------------------------------------------
Another important result about  It\^o diffusions extended by Markov chains is the connection between the solution of a deterministic PDE and expectations of exponentials. A generalization of the classical Feynman-Kac Theorem is presented in the following theorem.
%--------------------------------
%--------------------------------
\begin{theorem}[Feynman-Kac Theorem with Markov switching I]$\mbox{}$\\
Let process $X$, valued in the set $D_X\subseteq\IR$, be a one-dimensional Markov modulated  It\^o diffusion given by the following SDE:
$$
\text{d}X(t)={\mu}_X\big(X(t),\MC(t)\big)\text{d}t+{\sigma}_X\big(X(t), \MC(t)\big)\text{d}W(s),
$$
where $W$ is a Brownian motion and $\MC$ is the Markov chain from Theorem \ref{ThSemimart}. 
Further consider function $K:[0,T]\times D_X\times \E\rightarrow\IR$. Then, define function $k:[0,T]\times D_X\times\E\rightarrow\IR$ via:
$$k(t,x,e_i):=\IE\Big[\exp\Big\{-\int_0^tK\big(t-s,X(s),\MC(s)\big)\text{d}s\Big\}\Big|X(0)=x,\MC(0)=e_i\Big].$$
For all $\ (t,x,e_i)\in[0,T]\times D_X\times\E$, assume that function $k$ is well-defined, $k(t,x,e_i)<\infty$,  and that the following conditions hold:
\begin{enumerate}
\renewcommand{\labelenumi}{\roman{enumi})}
% 1
\item Function $k$ is twice continuously differentiable in $x$.
% 2
\item For the process $\{N(r)\}_{r\in[0,T]}$ defined by:
\begin{align*}
&N(r):=\frac{1}{r}\int_0^r\Big[\frac{\partial}{\partial x} k\big(t,X(s),\MC(s)\big){\mu}_X\big(X(s),\MC(s)\big)\\
&+\frac{1}{2}\frac{\partial^2}{\partial x\partial x}k\big(t,X(s),\MC(s)\big)\sigma_X^2\big(X(s),\MC(s)\big)+\sum_{j=1}^{l}q_{MC(s)j}k(t,X(s),e_j)\Big]\text{d}s,
\end{align*}
it holds that:
$$\lim_{r\downarrow0}\IE[N(r)|X(0)=x,\MC(0)=e_i]=\IE[\lim_{r\downarrow0}N(r)|X(0)=x,\MC(0)=e_i].$$ This means, it converges in mean to its almost sure limit, for all $t\in[0,T]$.
% 3
\item $\IE\Big[\int_0^r\frac{\partial}{\partial x} k\big(t,X(s),\MC(s)\big)\sigma_X\big(X(s),\MC(s)\big)\text{d}W(s)\Big|X(0)=x,\MC(0)=e_i\Big]=0,$ for all $r\in[0,T]$.
% 4
\item $\IE\Big[\int_0^r\big(k(t,X(s),e_1),\ldots,k(t,X(s),e_l)\big)\text{d}M(s)\Big|X(0)=x,\MC(0)=e_i\Big]=0,$
for all $r\in[0,T]$.
% 5
\item For
\begin{align*}
Z(t+r):&=\exp\Big\{-\int_0^{t+r}K\big(t+r-s,X(s),\MC(s)\big)\text{d}s\Big\}\\
Y(r):&=\exp\Big\{\int_0^{r}K\big(t+r-s,X(s),\MC(s)\big)\text{d}s\Big\},
\end{align*}
it holds, for all $r\in[0,T-t]$, that: 
\begin{align*}
&\lim_{r\downarrow0}\IE\Big[Z(t+r)\frac{Y(r)-Y(0)}{r}\Big|X(0)=x,\MC(0)=e_i\Big]\\
&=\IE\Big[\lim_{r\downarrow0}Z(t+r)\frac{Y(r)-Y(0)}{r})\Big|X(0)=x,\MC(0)=e_i\Big]\\
&=\IE[Z(t)\Big|X(0)=x,\MC(0)=e_i]K(t,x,e_i).
\end{align*}
\end{enumerate}
Then $k$ is differentiable w.r.t. $t$ and satisfies the following system of coupled PDEs, for all $(t,x)\in[0,T]\times D_X$:
\begin{align}
\begin{aligned}
&-\frac{\partial}{\partial t}k(t,x,e_i)-k(t,x,e_i)K(t,x,e_i)+\frac{\partial}{\partial x}k(t,x,e_i){\mu}_X(x,e_i)\\
&+\frac{1}{2}\frac{\partial^2}{\partial x\partial x}k(t,x,e_i)\sigma_X^2(x,e_i)=-\sum_{j=1}^lq_{ij}k(t,x,e_j),\ \  k(0,x,e_i)=1,\forall i\in E.
\end{aligned}
\label{PDEThFeynmKac}
\end{align}
Further, conditions iii) and iv) and v) from above can be replaced by the following conditions, respectively:
\begin{enumerate}
\item[iii)'] 
$\IE\Big[\int_0^r\Big\{\frac{\partial}{\partial x} k\big(t,X(s),\MC(s)\big)\sigma_X\big(X(s),\MC(s)\big)\Big\}^2\text{d}s\Big|X(0)=x,\MC(0)=e_i\Big]<\infty$,  for all $r\in[0,T]$.
\item[iv)']$\IE\Big[\int_0^r\big(k(t,X(s),e_j)\big)^2\text{d}s\Big|X(0)=x,\MC(0)=e_i\Big]<\infty,$
for all $r\in[0,T]$.
\item[v)'] $K(t,x,e_i)$ and $\frac{\partial}{\partial t}K(t,x,e_i)$ are continuous in $t$ and $x$ and
\begin{align*}
&\lim_{r\downarrow0}\IE\Big[Z(t+r)\frac{Y(r)-Y(0)}{r}\Big|X(0)=x,\MC(0)=e_i\Big]\\
&=\IE\Big[\lim_{r\downarrow0}Z(t+r)\frac{Y(r)-Y(0)}{r})\Big|X(0)=x,\MC(0)=e_i\Big],
\end{align*}
for all $r\in[0,T-t]$, $e_j\in\E$. Note that a sufficient condition for the equality above to hold is the existence of an integrable bound for $Z(t+r)\frac{Y(r)-Y(0)}{r}$.
%\item[v)'] $K(t,x,e_i)\equiv K(x,e_i)$, for all $(x,e_i)\in D_X\times\E$, i.e. $K$ does not depend on $t$, and $K(x,e_i)$ is continuous in $x$ for all $e_i\in\E$.
\end{enumerate}
\label{ThFeynmKac}
\end{theorem}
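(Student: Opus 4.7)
The plan is to compute $\partial k/\partial t$ from a time-difference quotient at $r\downarrow 0$, decomposed so that one piece isolates the potential term $-Kk$ (handled by condition~(v)) and the other isolates the infinitesimal generator of the Markov-modulated diffusion $(X,\MC)$ acting on $k(t,\cdot,\cdot)$ (handled by It\^o's formula together with conditions~(i)--(iv)). The linchpin identity is obtained by splitting $\int_0^{t+r}K(t+r-s,X(s),\MC(s))\,\td s$ at $s=r$, changing variables $u=s-r$ in the tail, and applying the strong Markov property of $(X,\MC)$ to collapse the resulting inner conditional expectation into $k(t,X(r),\MC(r))$. This yields
\begin{align*}
k(t+r,x,e_i)=\IE\bigl[Y(r)^{-1}k(t,X(r),\MC(r))\,\big|\,X(0)=x,\MC(0)=e_i\bigr],
\end{align*}
equivalently $\IE\bigl[Z(t+r)(Y(r)-Y(0))\,\big|\,\cdot\bigr]=\IE[k(t,X(r),\MC(r))|\cdot]-k(t+r,x,e_i)$, the form tailored to condition~(v).

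Next I would write the difference quotient as $A(r)+B(r)$ with
\begin{align*}
A(r)=\frac{k(t+r,x,e_i)-\IE[k(t,X(r),\MC(r))|\cdot]}{r},\qquad B(r)=\frac{\IE[k(t,X(r),\MC(r))|\cdot]-k(t,x,e_i)}{r}.
\end{align*}
By the identity above, $A(r)=-\IE\bigl[Z(t+r)\tfrac{Y(r)-Y(0)}{r}\,\big|\,\cdot\bigr]$, whose limit by condition~(v) is $-K(t,x,e_i)\,k(t,x,e_i)$. For $B(r)$, apply Theorem~\ref{ThIto} to $k(t,X(s),\MC(s))$ treating $t$ as a fixed parameter (regularity granted by~(i)); integrating on $[0,r]$, taking conditional expectation, and using~(iii) and~(iv) to annihilate the Brownian and Markov-chain martingale integrals leaves $B(r)=\IE[N(r)|\cdot]$ with $N(r)$ precisely as in~(ii). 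Condition~(ii) passes the limit inside the expectation, and by continuity of the integrands at $s=0$ the pathwise limit of $N(r)$ is $\partial_x k\cdot\mu_X+\tfrac12\partial_{xx}k\cdot\sigma_X^2+\sum_j q_{ij}k(t,x,e_j)$ evaluated at $(t,x,e_i)$. Summing the two limits and rearranging produces exactly PDE~(\ref{PDEThFeynmKac}); the boundary $k(0,x,e_i)=1$ is immediate from the definition.

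The main obstacle is precisely the interchange of limit and expectation encoded in~(ii) and~(v): unboundedness of $K$ and the path-functional nature of $Y$ preclude a universally applicable dominating argument, which is why these interchanges are imposed as hypotheses rather than derived. For the ``primed'' replacements, conditions~(iii)' and~(iv)' upgrade the associated stochastic integrals from local to true (hence zero-mean) martingales via the standard square-integrability criterion, applied both to $W$ and component-wise to the compensated jump martingale $M$ of $\MC$ (whose predictable quadratic variation is bounded on $[0,T]$ because $\E$ is finite and $Q$ is bounded); this gives~(iii) and~(iv). For~(v)', continuity of $K$ delivers the almost-sure limit $\tfrac{Y(r)-Y(0)}{r}\to K(t,x,e_i)$, continuity of $\partial_tK$ delivers $Z(t+r)\to Z(t)$ a.s., and the stated interchange-of-limits assumption in~(v)' (justified in practice by an integrable dominant) then yields~(v) directly via dominated convergence.
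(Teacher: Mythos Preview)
Your argument is correct and is exactly the Markov-modulated elaboration of the approach the paper invokes: the paper's proof consists solely of the sentence ``The proof goes along the lines of the proof of Theorem 8.2.1 from \cite{Oksendal2000},'' and your decomposition via the Markov property, the split $k(t+r)-k(t)=A(r)+B(r)$, condition~(v) for the potential term, and It\^o's formula (Theorem~\ref{ThIto}) together with conditions~(i)--(iv) for the generator term is precisely that argument carried out in detail. The handling of the primed conditions is also correct; your reasoning for (iii)', (iv)' via square-integrability and for (v)' via pathwise continuity is the intended one.
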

%--------------------------------
\begin{proof}
The proof goes along the lines of the proof of Theorem 8.2.1 from \cite{Oksendal2000}. 
\end{proof}
%--------------------------------
\begin{remark}
Some boundedness conditions on $k$, ${\mu}_X$, $\sigma_X$ and $K$ can easily replace assumptions (ii), (iii), (iv) and (v). However these functions are not bounded for important examples, such as the Heston model. That is why we keep the assumptions as general as possible.
\end{remark}
%--------------------------------
For our applications we need the backwards formulation of the Feynman-Kac result, which follows as a direct application of the theorem above and is stated in the following corollary:
%--------------------------------
\begin{corollary}[Feynman-Kac Theorem with Markov switching II]$\mbox{}$\\ 
Consider processes $X$ and $\MC$ as in Theorem \ref{ThFeynmKac} and some function $H:[0,T]\times D_X\times \E\rightarrow\IR$. Define function $h:[0,T]\times D_X\times\E$ via:
$$h(t,x,e_i):=\IE\Big[\exp\Big\{-\int_t^TH\big(s,X(s),\MC(s)\big)\Big\}\Big|X(t)=x,\MC(t)=e_i\Big].$$
Denote $K(t,x,e_i):=H(T-t,x,e_i)$ and define function $k(t,x,e_i)$ as in Theorem \ref{ThFeynmKac}. Assume that the conditions of Theorem \ref{ThFeynmKac} hold for $k$ and $K$. Then $h$ is differentiable w.r.t. $t$ and satisfies the following system of coupled PDEs for all $(t,x)\in[0,T]\times D_X$:
\begin{align*}
&\frac{\partial}{\partial t}h(t,x,e_i)-h(t,x,e_i)H(t,x,e_i)+\frac{\partial}{\partial x}h(t,x,e_i){\mu}_X(x,e_i)\\
&+\frac{1}{2}\frac{\partial^2}{\partial x\partial x}h(t,x,e_i)\sigma_X^2(x,e_i)=-\sum_{j=1}^lq_{ij}h(t,x,e_j), \ \ h(T,x,e_i)=1, \forall i \in E.
\end{align*}
\label{CorFK}
\end{corollary}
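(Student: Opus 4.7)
The plan is to reduce the backward-time problem to the forward-time problem solved in Theorem \ref{ThFeynmKac} by a time-reversal substitution. The key observation is that, since $(X,\MC)$ is time-homogeneous (the drift $\mu_X$ and diffusion $\sigma_X$ do not depend explicitly on $t$ and $Q$ is constant), the process $(X,\MC)$ started at time $t$ in $(x,e_i)$ has the same law as $(X,\MC)$ started at time $0$ in $(x,e_i)$ read along the shifted time parameter. Concretely, I would first show, via the substitution $s=t+u$ in the integrand of $h$, that
\begin{align*}
h(t,x,e_i)
&=\IE\Big[\exp\Big\{-\int_0^{T-t}H\big(t+u,X(u),\MC(u)\big)\,\td u\Big\}\,\Big|\,X(0)=x,\MC(0)=e_i\Big].
\end{align*}

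Next, setting $\tau:=T-t$ and recalling $K(s,x,e_i):=H(T-s,x,e_i)$, I would verify by direct manipulation that $H(t+u,\cdot,\cdot)=H(T-\tau+u,\cdot,\cdot)=K(\tau-u,\cdot,\cdot)$, which yields the clean identity
\begin{align*}
h(t,x,e_i)=k(T-t,x,e_i)\qquad\forall\,(t,x,e_i)\in[0,T]\times D_X\times\E,
\end{align*}
where $k$ is the function introduced in Theorem \ref{ThFeynmKac}. The boundary condition $k(0,\cdot,e_i)=1$ then translates directly into $h(T,\cdot,e_i)=1$.

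Having established this correspondence, the PDE for $h$ follows from the PDE for $k$ by a straightforward chain-rule computation: since the assumptions of Theorem \ref{ThFeynmKac} are imposed on $k$ and $K$, that theorem gives differentiability of $k$ in its first argument together with the forward system (\ref{PDEThFeynmKac}). Writing $h_t(t,x,e_i)=-k_\tau(T-t,x,e_i)$, $h_x(t,x,e_i)=k_x(T-t,x,e_i)$, $h_{xx}(t,x,e_i)=k_{xx}(T-t,x,e_i)$, and noting that $K(T-t,x,e_i)=H(t,x,e_i)$, substitution into (\ref{PDEThFeynmKac}) evaluated at $\tau=T-t$ turns the minus sign in front of $k_\tau$ into a plus sign in front of $h_t$, producing exactly the claimed backward system for $h$.

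I do not expect any genuine obstacle: the work has already been done in Theorem \ref{ThFeynmKac}, and Corollary \ref{CorFK} is essentially bookkeeping. The only subtle point to double-check is that all five assumptions of Theorem \ref{ThFeynmKac} (differentiability of $k$, the interchange of limit and expectation in condition (ii), the vanishing expectations in (iii)--(iv), and the limit in (v)) are transferred correctly through the map $(t,x,e_i)\mapsto(T-t,x,e_i)$, which they are because that map is a smooth bijection on $[0,T]$ and time-homogeneity of $(X,\MC)$ ensures the conditional expectations involved are invariant under the shift of the initial time.
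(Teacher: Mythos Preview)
Your proposal is correct and takes essentially the same approach as the paper, which merely states that the corollary ``follows as a direct application of the theorem above'' without spelling out details. You have supplied exactly the time-reversal argument that this phrase is pointing to: the identification $h(t,x,e_i)=k(T-t,x,e_i)$ via time-homogeneity of $(X,\MC)$ and the substitution $K(\cdot,x,e_i)=H(T-\cdot,x,e_i)$, followed by the chain rule to convert the forward PDE (\ref{PDEThFeynmKac}) into the backward one.
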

%--------------------------------
As an application of the backwards Feynman-Kac result one obtains the following corollary expectations over Markov chains and their relation to linear systems of coupled ODEs:
%--------------------------------
\begin{corollary}[Linear system of coupled ODEs]$\mbox{}$\\ 
Consider again the Markov chain $\MC$ from Theorem \ref{ThSemimart} and define function $\xi:[0,T]\times\E\rightarrow\IR$ as follows:
\begin{align*}
\xi(t,e_i)=\IE\Big[\exp\Big\{-\int_t^T\Xi\big(s,\MC(s)\big)\text{d}s\Big\}\Big|\MC(t)=e_i\Big],
\end{align*}
for some finite function $\Xi:[0,T]\times\E\rightarrow\IR$. Assume that function $\xi$ is well-defined and $\xi(t,e_i)<\infty,\forall\ (t,e_i)\in[0,T]\times\E$. Further, for an arbitrary but fix $t\in[0,T]$ and all $r\in[0,t]$, define:
\begin{align*}
Z(t+r):&=\exp\Big\{-\int_0^{t+r}\Xi\big(T-t-r+s,\MC(s)\big)\text{d}s\Big\}\\
Y(r):&=\exp\Big\{\int_0^{r}\Xi\big(T-t-r+s,\MC(s)\big)\text{d}s\Big\},
\end{align*}
and assume, for all $e_i\in \E$, that 
\begin{align}
\begin{aligned}
&\lim_{r\downarrow0}\IE\Big[Z(t+r)\frac{Y(r)-Y(0)}{r}\Big|\MC(0)=e_i\Big]\\
&=\IE\Big[\lim_{r\downarrow0}Z(t+r)\frac{Y(r)-Y(0)}{r})\Big|\MC(0)=e_i\Big]\\
&=\IE[Z(t)\Big|\MC(0)=e_i]\Xi(T-t,e_i).
\end{aligned}
\label{ConCorrXi}
\end{align}
Alternatively to assuming (\ref{ConCorrXi}) one can require that $\Xi(t,e_i)$ and $\frac{\partial}{\partial t}\Xi(t,e_i)$ are continuous in $t$ for all $e_i\in\E$.\\
Then $\xi$ satisfies the following system of ODEs for all $t\in[0,T]$:
\begin{align*}
&\frac{\partial}{\partial t}\xi(t,e_i)-\xi(t,e_i)\Xi(t,e_i)=-\sum_{j=1}^lq_{ij}\xi(t,e_j), \ \ \xi(T,e_i)=1, \forall i\in E.
\end{align*}
\label{CorFKLS}
\end{corollary}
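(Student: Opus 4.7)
The strategy is to reduce the statement directly to the backwards Feynman--Kac Theorem with Markov switching (Corollary \ref{CorFK}), by embedding the purely-chain-driven expectation $\xi$ into the Feynman--Kac framework with a degenerate stochastic factor. Concretely, I would apply Corollary \ref{CorFK} with an It\^o diffusion $X$ chosen trivially (e.g.\ $\mu_X\equiv 0$, $\sigma_X\equiv 0$, any $D_X\subseteq\IR$) and with $H(s,x,e_i):=\Xi(s,e_i)$, which does not depend on $x$. The resulting function
$$h(t,x,e_i)=\IE\Big[\exp\Big\{-\int_t^T\Xi\big(s,\MC(s)\big)\text{d}s\Big\}\Big|\MC(t)=e_i\Big]=\xi(t,e_i)$$
is independent of $x$, so that $\partial_x h\equiv 0\equiv \partial_{xx} h$, and the PDE delivered by Corollary \ref{CorFK} collapses directly to the desired ODE system without any further time manipulation.

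Next I would verify the hypotheses of Theorem \ref{ThFeynmKac} (equivalently, those of Corollary \ref{CorFK}) in this degenerate setting, with $K(t,x,e_i):=\Xi(T-t,e_i)$ and $k(t,x,e_i):=\xi(T-t,e_i)$. Condition (i) is immediate. The alternative conditions (iii)' and (iv)' become, respectively, $0<\infty$ and $\int_0^r\xi(T-t,e_j)^2\text{d}s<\infty$, both trivial thanks to the finiteness of $\xi$ and the finite time horizon. Condition (ii) simplifies to the convergence in mean of
$$N(r)=\frac{1}{r}\int_0^r\sum_{j=1}^l q_{\MC(s)j}\,\xi(T-t,e_j)\,\text{d}s$$
to its almost sure limit $\sum_j q_{\MC(0)j}\xi(T-t,e_j)$, which follows from bounded convergence since $\E$ is finite and the rates $q_{ij}$ together with the values of $\xi$ are finite. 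Condition (v) (respectively (v)') is precisely the assumption (\ref{ConCorrXi}) imposed in the statement (respectively the continuity assumption on $\Xi$ and $\partial_t\Xi$, which transfers to $K$ since it merely reverses time and ignores the $x$-dependence).

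With all hypotheses in place, Corollary \ref{CorFK} delivers
$$\frac{\partial}{\partial t}\xi(t,e_i)-\xi(t,e_i)\Xi(t,e_i)=-\sum_{j=1}^l q_{ij}\xi(t,e_j),\qquad \xi(T,e_i)=1,\quad \forall\,i\in E,$$
which is exactly the claim. The only genuinely non-trivial step is recognising that condition (v)/(v)' of Theorem \ref{ThFeynmKac} is precisely what the two alternative hypotheses of the corollary were designed to supply; all remaining assumptions collapse to triviality thanks to the vanishing of the $x$-derivatives of $h$, so no regularity on any stochastic factor needs to be verified.
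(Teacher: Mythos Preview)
Your proposal is correct and follows exactly the route the paper intends: the corollary is stated as ``an application of the backwards Feynman--Kac result'' (Corollary~\ref{CorFK}), and your degenerate choice $\mu_X\equiv 0$, $\sigma_X\equiv 0$, $H(s,x,e_i)=\Xi(s,e_i)$ is the natural way to make that precise. The only point you could sharpen is the alternative hypothesis: condition~(v)' in Theorem~\ref{ThFeynmKac} still requires the limit exchange in addition to continuity, so you should note that continuity of $\Xi$ on $[0,T]\times\E$ (finite state space) forces $\Xi$ to be bounded, whence $Z(t+r)\frac{Y(r)-Y(0)}{r}$ is uniformly bounded for small $r$ and dominated convergence supplies the missing interchange.
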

%--------------------------------
%----------------------------------------------------------------------
%----------------------------------------------------------------------
%----------------------------------------------------------------------
%-------------------------------------------------
\section{Appendix for Section \ref{SecSol}}
\label{AppSecSol}
\numberwithin{equation}{section}
%-------------------------------------------------
\begin{proof}[Proof of Theorem \ref{ThVerifTimeDGen}]$\mbox{}$\\
Consider an arbitrary but fixed point $(t,v)\in[0,T]\times\IR_{\geq0}$ and an admissible strategy $\pi\in\Lambda^m(t,v)$ and apply It\^o's rule stepwise to $\Phi^m\big(T,V^{m,\pi}(T),X^m(T)\big)$:
\begin{align*}
&\Phi^m\big(T,V^{m,\pi}(T),X^m(T)\big)=\Phi^m\big(t_K,V^{m,\pi}(t_K),X^m(t_K)\big)+\int_{t_K}^{T}\pazocal{L}^m(m_K,\pi)\Phi^m\big(s,V^{m,\pi}(s),X^m(s)\big)\text{d}s\displaybreak[0]\\
&+\int_{t_K}^{T}\Phi^m_v\big(s,V^{m,\pi}(s),X^m(s)\big)\sigma_V\big(V^{m,\pi}(s),X^m(s),m(s),\pi(s)\big)\text{d}W_P(s)\displaybreak[0]\\
&+\int_{t_K}^{T}\Phi^m_x\big(s,V^{m,\pi}(s),X^m(s)\big)\sigma_X\big(X^m(s),m(s)\big)\text{d}W_X(s)\displaybreak[0]\\
&=\Phi^m\big(t_{K-1},V^{m,\pi}(t_{K-1}),X^m(t_{K-1})\big)+\sum_{i=K-1}^{K}\int_{t_{i}}^{t_{i+1}}\pazocal{L}^m(m_i,\pi)\Phi^m\big(s,V^{m,\pi}(s),X^m(s)\big)\text{d}s\displaybreak[0]\\
&+\int_{t_{K-1}}^{T}\Phi^m_v\big(s,V^{m,\pi}(s),X^m(s)\big)\sigma_V\big(V^{m,\pi}(s),X^m(s),m(s),\pi(s)\big)\text{d}W_P(s)\displaybreak[0]\\
&+\int_{t_{K-1}}^{T}\Phi^m_x\big(s,V^{m,\pi}(s),X^m(s)\big)\sigma_X\big(X^m(s),m(s)\big)\text{d}W_X(s)\\
&=\ldots\displaybreak[0]\\
&=\Phi^m\big(t,V^{m,\pi}(t),X^m(t)\big)+\sum_{i=0}^{K}\int_{t_i}^{t_{i+1}}\underbrace{\pazocal{L}^m(m_i,\pi)\Phi^m\big(s,V^{m,\pi}(s),X^m(s)\big)}_{\leq 0}\text{d}s\displaybreak[0]\\
&+\int_{t}^{T}\Phi^m_v\big(s,V^{m,\pi}(s),X^m(s)\big)\sigma_V\big(V^{m,\pi}(s),X^m(s),m(s),\pi(s)\big)\text{d}W_P(s)\displaybreak[0]\\
&+\int_{t}^{T}\Phi^m_x\big(s,V^{m,\pi}(s),X^m(s)\big)\sigma_X\big(X^m(s),m(s)\big)\text{d}W_X(s)\displaybreak[0]\\
&\leq \Phi^m\big(t,V^{m,\pi}(t),X^m(t)\big)+\int_{t}^{T}\Phi^m_v\big(s,V^{m,\pi}(s),X^m(s)\big)\sigma_V\big(V^{m,\pi}(s),X^m(s),m(s),\pi(s)\big)\text{d}W_P(s)\displaybreak[0]\\
&+\int_{t}^{T}\Phi^m_x\big(s,V^{m,\pi}(s),X^m(s)\big)\sigma_X\big(X^m(s),m(s)\big)\text{d}W_X(s)=:Y^m(T).
\end{align*}
Note that here we have used the continuity and the piecewise differentiability of function $\Phi^m$. One can show the statement analogously for an arbitrary end-point $\tau\in[t,T]$: 
\begin{align}
\begin{aligned}
&\Phi^m\big(\tau,V^{m,\pi}(\tau),X^m(\tau)\big)\leq \Phi^m\big(t,V^{m,\pi}(t),X^m(t)\big)\\
&+\int_{t}^{\tau}\Phi^m_v\big(s,V^{m,\pi}(s),X^m(s)\big)\sigma_V\big(V^{m,\pi}(s),X^m(s),m(s),\pi(s)\big)\text{d}W_P(s)\\
&+\int_{t}^{\tau}\Phi^m_x\big(s,V^{m,\pi}(s),X^m(s)\big)\sigma_X\big(X^m(s),m(s)\big)\text{d}W_X(s)=:Y^m(\tau).
\end{aligned}
\label{helpSupMart}
\end{align}
Now assume that $\Phi^m\big(\tau, v, x\big)\geq0$ for all $(\tau, v,x)\in[0,T]\times\IR_{\geq0}\times D_X$. It follows that $Y^m(\tau)\geq0$. Furthermore, $Y^m$ is a local martingale, as all involved functions are at least piecewise continuous, so it is a supermartingale. Then it holds that:
\begin{align*}
&\IE\Big[U\big(V^{m,\pi}(T)\big)\Big|\F_t\Big]=\IE\Big[\frac{\big(V^{m,\pi}(T)\big)^{\delta}}{\delta}\Big|\F_t\Big]=\IE\Big[\Phi^m\big(T,V^{m,\pi}(T),X^m(T)\big)\Big|\F_t\Big]\\
&\leq \IE[Y^m(T)|\F_t] \leq Y^m(t) = \Phi^m\big(t,V^{m,\pi}(t),X^m(t)\big),
\end{align*}
which proves the first statement of the theorem.\\
We continue with the proof for the second statement. As $\Phi^m$ is a martingale, it follows directly:
\begin{align*}
&\IE\Big[\frac{\big(V^{m,\bar{\pi}^m}(T)\big)^{\delta}}{\delta}\Big|\F_t\Big]=\IE\Big[\Phi^m\big(T,V^{m,\bar{\pi}^m}(T),X^m(T)\big)\Big|\F_t\Big]=\Phi^m\big(t,V^{m,\bar{\pi}^m}(t),X^m(t)\big).
\end{align*}
Our proof is complete.
$\mbox{}$\halmos
\end{proof}
%--------------------------------------
\begin{remark}
Observe that we have not used the exponential structure of our model for this proof. Thus, the result holds for general time-dependent models with a stochastic factor.
\end{remark}
%----------------------------------
\begin{lemma}[Exponential affine martingales]$\mbox{}$\\
Let $Z=(Z_1,\ldots,Z_n)^{\prime}$ be an n-dim continuous semimartingale with affine differential characteristics\footnote{For a definition of differential characteristics for semimartingales see \cite{Kallsen2010}.} $(\mu_Z,\gamma_Z)\in\IR^n\times \IR^{n\times n}$, i.e.
\begin{align*}
\mu_Z(t)&=\left(\begin{array}{c}\mu_Z^1(t)\\\vdots\\\mu_Z^n(t)\end{array}\right)=\left(\begin{array}{c}\alpha_0^1(t)\\\vdots\\\alpha_0^n(t)\end{array}\right)+\sum_{j=1}^{n}Z_j\left(\begin{array}{c}\alpha_j^1(t)\\\vdots\\\alpha_j^n(t)\end{array}\right)\\
\gamma_Z(t)&=\left(\begin{array}{ccc}\gamma_Z^{11}(t)&\ldots&\gamma_Z^{1n}(t)\\\vdots&\vdots&\vdots\\\gamma_Z^{n1}(t)&\ldots&\gamma_Z^{nn}(t)\end{array}\right)=\left(\begin{array}{ccc}\beta_0^{11}(t)&\ldots&\beta_0^{1n}(t)\\\vdots&\vdots&\vdots\\\beta_0^{n1}(t)&\ldots&\beta_0^{nn}(t)\end{array}\right)+\sum_{j=1}^{n}Z_j\left(\begin{array}{ccc}\beta_j^{11}(t)&\ldots&\beta_j^{1n}(t)\\\vdots&\vdots&\vdots\\\beta_j^{n1}(t)&\ldots&\beta_j^{nn}(t)\end{array}\right),
\end{align*}
for some deterministic functions $\alpha_j^{k},\beta_j^{k,l}:[0,\infty]\rightarrow\IR$, $j\in\{0,\ldots,n\}$, $k,l\in\{1,\ldots,n\}$. Further assume that there exists a number $p\in\IN$, $p\leq n$ such that for all $t\in\IR_{\geq0}$:
\begin{enumerate}
\renewcommand{\labelenumi}{\roman{enumi})}
\item $\beta_{j}^{kl}(t)=0$ if $0\leq j\leq p$, $1\leq k,l\leq p$ unless $k=l=j$;
\item $\alpha_{j}^{k}(t)=0$ if $j\geq p+1$, $1\leq k\leq p$;
\item $\beta_{j}^{kl}(t)=0$ if $j\geq p+1$, $1\leq k,l\leq n$.
\end{enumerate}
If additionally $\alpha_j(t)$ and $\beta_j(t)$ are continuous in $t\in\IR_{\geq0}$ for all $0\leq j\leq n$ and the following condition holds for some $1\leq i\leq n$:
\begin{align}
\alpha_j^{i}(t)+\frac{1}{2}\beta^{ii}_j(t)=0, \forall\ 0\leq j\leq n,
\label{condLocMart}
\end{align}
then $\big\{\exp\{Z_i(t)\}\big\}_{t\in[0,\infty)}$ is a martingale. 
\label{LemmaKallsen}
\end{lemma}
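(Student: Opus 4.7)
The plan is two-fold: first show that $\exp\{Z_i\}$ is a continuous local martingale, then upgrade this to a true martingale using the structural assumptions (i)--(iii).

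For the local-martingale part, I would split $Z_i$ into its bounded-variation drift $A_i(t) = \int_0^t \mu_Z^i(s)\,\text{d}s$ and its continuous local-martingale part $N_i$, whose quadratic variation is $\langle N_i\rangle_t = \int_0^t \gamma_Z^{ii}(s)\,\text{d}s$. It\^o's formula (equivalently, the stochastic-exponential representation) yields
\begin{align*}
\exp\{Z_i(t)\} = \exp\{Z_i(0)\}\,\mathcal{E}(N_i)_t\,\exp\Big\{A_i(t)+\tfrac{1}{2}\langle N_i\rangle_t\Big\}.
\end{align*}
Inserting the affine characteristics, the exponent of the trailing factor equals
\begin{align*}
\int_0^t\!\Big[\alpha_0^i(s)+\tfrac{1}{2}\beta_0^{ii}(s)+\sum_{j=1}^{n} Z_j(s)\!\left(\alpha_j^i(s)+\tfrac{1}{2}\beta_j^{ii}(s)\right)\!\Big]\text{d}s,
\end{align*}
and every bracket vanishes identically by hypothesis (\ref{condLocMart}). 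Hence $\exp\{Z_i\} = \exp\{Z_i(0)\}\,\mathcal{E}(N_i)$ is a nonnegative continuous local martingale, and therefore (via Fatou) a supermartingale satisfying $\IE[\exp\{Z_i(t)\}] \leq \exp\{Z_i(0)\}$.

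The main obstacle is the promotion to a true martingale, which is equivalent to showing equality $\IE[\exp\{Z_i(t)\}] = \exp\{Z_i(0)\}$ for every $t \geq 0$. Here the structural conditions (i)--(iii) become indispensable. Assumption (i) forces the diffusion matrix of the first block $(Z_1,\ldots,Z_p)$ to be diagonal with $Z_j$ driving only its own variance $\beta_j^{jj}$; (ii) and (iii) make this block autonomous and affine, and leave $(Z_{p+1},\ldots,Z_n)$ as drift-coupled but non-feedback components. This is exactly the canonical block structure of an affine Markov process in the Duffie--Filipovi\'c--Schachermayer sense, under which the Laplace transform $\IE[\exp\{u\cdot Z(t)\}\,|\,Z(0)=z]$ admits the exponentially affine form $\exp\{\phi(t)+\psi(t)\cdot z\}$ with $(\phi,\psi)$ determined by a generalized Riccati system built from the coefficients $\alpha_j,\beta_j$.

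I would then evaluate this Laplace transform at the coordinate vector $u = (0,\ldots,0,1,0,\ldots,0)$ with the $1$ in slot $i$. Condition (\ref{condLocMart}) is precisely the statement that $(\phi,\psi) \equiv (0,u)$ is a constant solution of the associated Riccati system, so plugging back gives $\IE[\exp\{Z_i(t)\}] = \exp\{Z_i(0)\}$, which is the missing identity. The delicate technical point---and the hardest part of the argument---is verifying that this constant branch really is the Riccati solution on the entire interval $[0,T]$, with no explosion: continuity of the $\alpha_j,\beta_j$ in $t$ provides local existence, and conditions (i)--(iii) are exactly what is needed to bound the quadratic Riccati vector field along the constant curve, ruling out blow-up. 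Once the expectation identity is in hand, a nonnegative supermartingale with constant expectation is automatically a martingale, which completes the proof.
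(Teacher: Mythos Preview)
The paper does not actually prove this lemma: it is stated in the appendix as a summary of Corollary~3.4 in \cite{Kallsen2010} (see the sentence introducing it in Section~\ref{SubsubsSpecTime}), and no independent argument is supplied. Your sketch is in fact a reasonable outline of how that result is obtained in the cited reference: condition~(\ref{condLocMart}) collapses $\exp\{Z_i\}$ to a Dol\'eans--Dade exponential of a continuous local martingale, and the canonical affine structure encoded in (i)--(iii) is then used, via the associated Riccati system, to certify the constant-expectation identity and hence true martingality.

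One comment on the step you yourself flag as ``delicate''. It is correct that $(\phi,\psi)\equiv(0,e_i)$ is a stationary solution of the generalized Riccati equations, but this alone does not yet give $\IE[\exp\{Z_i(t)\}]=\exp\{\phi(t)+\psi(t)\cdot Z(0)\}$: the exponentially-affine transform formula is a priori only available for purely imaginary arguments (characteristic function), and extending it to the real point $u=e_i$ is exactly the issue. In \cite{Kallsen2010} this is handled not by first establishing the transform and then plugging in, but by a direct localization argument that exploits the specific block structure (i)--(iii) to control the stopped process; continuity of the coefficients is used to patch the pieces together. So your skeleton is right, but the order of the last two steps should be reversed, and filling in the localization rigorously is essentially the content of the cited corollary.
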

%----------------------------------------------------------------------
%----------------------------------
\begin{proof}[Proof for Theorem \ref{ThVerifTimeD}]$\mbox{}$\\
 As function $\Phi^m$ given by Equation (\ref{EqSolTimeD}) is obviously continuous and piecewise sufficiently differentiable, by Theorem \ref{ThVerifTimeDGen} we only need to show that $\big\{\Phi^{m}\big(t,V^{m,\bar{\pi}^{m}}(t),X^m(t)\big)\}_{t\in[0,T]}$ is a martingale. We start by writing down the solution of the SDE for $V^{m,\bar{\pi}^{m}}$:
\begin{align*}
V^{m,\bar{\pi}^{m}}(t)=&v_0\exp\Big\{\int_0^tr\big(m(s)\big)+\lambda\big(X^m(s),m(s)\big)\bar{\pi}^{m}(s)\\
&-\frac{1}{2}\big(\bar{\pi}^{m}(s)\big)^2\sigma_P^2\big(X^m(s),m(s)\big)\text{d}s+\int_0^t\bar{\pi}^{m}(s)\sigma_P\big(X^m(s),m(s)\big)\text{d}W_P(s)\Big\},
\end{align*}
for all $0\leq t\leq T$. Then we insert it in the expression for $\Phi^m$:
\begin{align*}
&\Phi^{m}\big(t,V^{m,\bar{\pi}^{m}}(t),X^m(t)\big)=\frac{\big(V^{m,\bar{\pi}^{m}}(t)\big)^{\delta}}{\delta}\exp\Big\{\int_t^T\delta r\big(m(s)\big)\text{d}s\Big\}\exp\big\{\vartheta A^m(t)\\
&+\vartheta B^m(t)X^m(t)\big\}=\underbrace{\frac{v_0^{\delta}}{\delta}\exp\Big\{\int_0^T\delta r\big(m(s)\big)\text{d}s\Big\}\exp\{\vartheta A^m(0)+\vartheta B^m(0)x_0\}}_{=\Phi^{m}(0,v_0,x_0)}\\
&\cdot\exp\Big\{\int_0^t\delta \lambda\big(X^m(s),m(s)\big)\bar{\pi}^{m}(s)-\frac{1}{2}\delta \big(\bar{\pi}^{m}(s)\big)^2\sigma_P^2\big(X^m(s),m(s)\big)\\
&+\vartheta A_t^m(s)+\vartheta B^m_t(s)X^m(s)+\vartheta B^m(s)\mu_X\big(X^m(s),m(s)\big)\text{d}s\\
&+\int_0^t\underbrace{\vartheta B^m(s)\sigma_X\big(X^m(s),m(s)\big)}_{=:\sigma_L^{(1)}(s,X^m(s))}\text{d}W_X(s)\Big\}+\int_0^t\underbrace{\delta \bar{\pi}^{m}(s)\sigma_P\big(X^m(s),m(s)\big)}_{=:\sigma_L^{(2)}(s,X^m(s))}\text{d}W_P(s)\\
&=:\Phi^{m}(0,v_0,x_0)\exp\Big\{\int_0^t\mu_L\big(s,X^m(s)\big)\text{d}s+\int_0^t\sigma_L^{(1)}\big(s,X^m(s)\big)\text{d}W_X(s)\\
&+\int_0^t\sigma_L^{(2)}\big(s,X^m(s)\big)\text{d}W_P(s)\Big\}=:\Phi^{m}(0,v_0,x_0)\exp\{L^m(t)\},
\end{align*}
Now we recognize easily the differential semimartingale characteristics $\mu_Z(t)=\left(\begin{array}{c}\mu_Z^{(1)}\big(t,X^m(t)\big)\\\mu_Z^{(2)}\big(t,X^m(t)\big)\end{array}\right)$ and $\Gamma_Z(t)=\left(\begin{array}{cc}\Gamma_Z^{(11)}\big(t,X^m(t)\big)&\Gamma_Z^{(12)}\big(t,X^m(t)\big)\\\Gamma_Z^{(21)}\big(t,X^m(t)\big)&\Gamma_Z^{(22)}\big(t,X^m(t)\big)\end{array}\right)$ of the two dimensional process $Z:=(X^m,L^m,)^{\prime}$:
\begin{align*}
%&\mu_Z(t,x)=\left(\begin{array}{c}\mu_X(x,m(t))\\\mu_L(t,x)\end{array}\right)\\
%------------------------------------------------------------------------------------
%&=\left(\begin{array}{c}\mu^{(1)}_X(m(t))+\mu^{(1)}_X(m(t))x\\\delta \lambda\big(x,m(t)\big)\bar{\pi}^{m}(t)-\frac{1}{2}\delta \big(\bar{\pi}^{m}(t)\big)^2\sigma_P^2\big(x,m(t)\big)+\vartheta A_t^m+\vartheta B^m_tx+\vartheta B^m\mu_X\big(x,m(t)\big)\end{array}\right)\\
%------------------------------------------------------------------------------------
%&=\left(\begin{array}{c}\mu_X^{(1)}\\\vartheta A^m_t+\vartheta B^m\mu_X^{(1)}\end{array}\right)\\
%&+x\left(\begin{array}{c}\mu_X^{(2)}\\\vartheta B^m_t+\vartheta B^m\mu_X^{(2)}+\frac{\delta}{1-\delta}\Gamma^{(2)}\big(1-\frac{1}{2(1-\delta)}\big)-\vartheta B^m\zeta^{(2)}\frac{\delta^2}{(1-\delta)^2}-\frac{1}{2}\frac{\delta}{(1-\delta)^2}\vartheta^2(B^m)^2\rho^2\Sigma_X^{(2)}\end{array}\right)\displaybreak[0]\\
%&+L^m\left(\begin{array}{c}0\\0\end{array}\right)\displaybreak[0]\\
%------------------------------------
\mu_Z^{(1)}\big(t,x\big)=&\mu_X(x,m(t))=\mu^{(1)}_X\big(m(t)\big)+x\mu^{(2)}_X\big(m(t)\big)\\
\mu_Z^{(2)}\big(t,x\big)=&\mu_L(t,x)=\delta \lambda\big(x,m(t)\big)\bar{\pi}^{m}(t)-\frac{1}{2}\delta \big(\bar{\pi}^{m}(t)\big)^2\sigma_P^2\big(x,m(t)\big)\\
&+\vartheta A_t^m+\vartheta B^m_tx+\vartheta B^m\mu_X\big(x,m(t)\big)\\
=&\vartheta A^m_t+\vartheta B^m\mu_X^{(1)}+x\Big\{\vartheta B^m_t+\vartheta B^m\mu_X^{(2)}+\frac{\delta}{1-\delta}\Gamma^{(2)}\big(1-\frac{1}{2(1-\delta)}\big)\\
&-\vartheta B^m\zeta^{(2)}\frac{\delta^2}{(1-\delta)^2}-\frac{1}{2}\frac{\delta}{(1-\delta)^2}\vartheta^2(B^m)^2\rho^2\Sigma_X^{(2)}\Big\}\\
%------------------------------------
\Gamma_Z^{(11)}(t,x)=&\sigma_X^2(x,m(t))=x\Sigma_X^{(2)}\big(m(t)\big)\\
\Gamma_Z^{(12)}(t,x)=&\Gamma_Z^{(21)}(t,x)=\sigma_X(x,m(t))\big(\sigma_L^{(1)}(t,x)+\rho\sigma_L^{(2)}(t,x)\big)\\
&=x\Big\{B^m(t) \Sigma_X^{(2)}\big(m(t)\big)+\frac{\delta}{1-\delta}\zeta^{(2)}\big(m(t)\big)\Big\}\\
\Gamma_Z^{(22)}(t,x)=&\big(\sigma_L^{(1)}(t,x)\big)^2+\big(\sigma_L^{(2)}(t,x)\big)^2+2\rho\sigma_L^{(1)}(t,x)\sigma_L^{(2)}(t,x)\\
=&x\Big\{\vartheta^2(B^m(t))^2\Sigma_X^{(2)}\big(m(t)\big)\Big(1+\frac{\rho^2\delta^2}{(1-\delta)^2}+2\frac{\rho^2\delta}{1-\delta}\Big)\\
&+2\vartheta B^m(t)\zeta^{(2)}\big(m(t)\big)\frac{\delta}{(1-\delta)^2}+\frac{\delta^2}{(1-\delta)^2}\Gamma^{(2)}\big(m(t)\big)\Big\},\\
%------------------------------------
%&\gamma_Z(t,x)=\left(\begin{array}{cc}\sigma_X^2(x,m(t))&\sigma_X(x,m(t))\big(\sigma_L^{(1)}(t,x)+\rho\sigma_L^{(2)}(t,x)\big)\\\sigma_X(x,m(t))\big(\sigma_L^{(1)}(t,x)+\rho\sigma_L^{(2)}(t,x)\big)&\big(\sigma_L^{(1)}(t,x)\big)^2+\big(\sigma_L^{(2)}(t,x)\big)^2+2\rho\sigma_L^{(1)}(t,x)\sigma_L^{(2)}(t,x)\end{array}\right)\\
%&=%\left(\begin{array}{cc}0&0\\0&0\end{array}\right)\\&+
%x\left(\begin{array}{cc}\Sigma_X^{(2)}&B^m \Sigma_X^{(2)}+\frac{\delta}{1-\delta}\zeta^{(2)}\\B^m \Sigma_X^{(2)}+\frac{\delta}{1-\delta}\zeta^{(2)}&\vartheta^2(B^m)^2\Sigma_X^{(2)}\Big(1+\frac{\rho^2\delta^2}{(1-\delta)^2}+2\frac{\rho^2\delta}{1-\delta}\Big)+2\vartheta B^m\zeta^{(2)}\frac{\delta}{(1-\delta)^2}+\frac{\delta^2}{(1-\delta)^2}\Gamma^{(2)}\end{array}\right),
%\\&+L^m\left(\begin{array}{cc}0&0\\0&0\end{array}\right),
\end{align*}
where we have omitted the dependence on $m$, $t$ and $x$ for reasons of better readability and we have substituted the following equalities:
\begin{align*}
\bar{\pi}^{m}(t)&=\frac{1}{1-\delta}\Big\{\frac{\lambda\big(x,m(t)\big)}{\sigma^2_P\big(x,m(t)\big)}+\rho\frac{\sigma_X\big(x,m(t)\big)}{\sigma_P\big(x,m(t)\big)}\vartheta B^m(t)\Big\}\\
\sigma^2_X\big(x,m(t)\big)&=\Sigma_X^{(2)}\big(m(t)\big)x\\
\mu_X\big(x,m(t)\big)&=\mu_X^{(1)}\big(m(t)\big)+\mu_X^{(2)}\big(m(t)\big)x\\
\rho\lambda\big(x,m(t)\big)\frac{\sigma_X\big(x,m(t)\big)}{\sigma_P\big(x,m(t)\big)}&=\zeta^{(2)}\big(m(t)\big)x\\
\left(\frac{\lambda\big(x,m(t)\big)}{\sigma_P\big(x,m(t)\big)}\right)^2&=\Gamma^{(2)}\big(m(t)\big)x.
\end{align*}
Observe that $\mu_Z$ and $\Gamma_Z$ are piecewise continuous and fulfill conditions i),ii),iii) from Lemma \ref{LemmaKallsen} with $p=1$. Next we show that $\mu^{(2)}_Z(t,x)+\frac{1}{2}\Gamma^{(22)}_Z(t,x)=0$, which by comparison of coefficients is equivalent to Equation (\ref{condLocMart}) for $i=2$:
\begin{align*}
\mu^{(2)}_Z(t,x)+\frac{1}{2}\Gamma^{(22)}_Z(t,x)
%\delta\lambda\bar{\pi}^{m}-\frac{1}{2}\delta\big(\bar{\pi}^{m}\big)^2\sigma_P^2+\vartheta A^m_t+\vartheta B^m_t x+\vartheta B^m\mu_X+\frac{1}{2}\vartheta^2(B^m)^2\sigma_X^2\\
%&+\frac{1}{2}\delta^2\big(\bar{\pi}^{m}\big)^2\sigma_P^2+\vartheta B^m\sigma_X\delta\rho\bar{\pi}^{m}\sigma_P=\frac{\delta}{1-\delta}\lambda\Big\{\frac{\lambda}{\sigma^2_P}+\rho\frac{\sigma_X}{\sigma_P}\vartheta B^m\Big\}+\vartheta A^m_t+\vartheta B^m_t x\\
%&-\frac{1}{2}\frac{\delta}{1-\delta}\sigma_P^2\Big\{\frac{\lambda}{\sigma^2_P}+\rho\frac{\sigma_X}{\sigma_P}\vartheta B^m\Big\}^2+\vartheta B^m\mu_X+\frac{1}{2}\vartheta^2(B^m)^2\sigma_X^2\\
%&+\vartheta B^m\sigma_X\delta\rho\sigma_P\frac{1}{1-\delta}\Big\{\frac{\lambda}{\sigma^2_P}+\rho\frac{\sigma_X}{\sigma_P}\vartheta B^m\Big\}=\frac{\delta}{1-\delta}\Gamma^{(2)}x+\frac{\delta}{1-\delta}\vartheta B^m\zeta^{(2)}x+\vartheta A^m_t+\vartheta B^m_t x\\
%&-\frac{1}{2}\frac{\delta}{1-\delta}\Gamma^{(2)}x-\frac{1}{2}\frac{\delta}{1-\delta}\rho^2\vartheta^2(B^m)^2\Sigma_X^{(2)}x-\frac{1}{2}\frac{\delta}{1-\delta}2\vartheta B^m\zeta^{(2)}x+\vartheta B^m\mu_X^{(2)}x\\
%&+\frac{1}{2}\vartheta^2 (B^m)^2\Sigma_X^{(2)}x+\frac{\delta}{1-\delta}\vartheta B^m\zeta^{(2)}x+\frac{\delta}{1-\delta}\vartheta^2 (B^m)^2\rho^2\Sigma_X^{(2)}x
=&\vartheta\Big[A^m_t+\mu_X^{(1)}B^m\Big]+\vartheta x \Big[B^m_t+\frac{1}{2}\Sigma_X^{(2)}\big(B^m\big)^2\\
&+\big\{\frac{\delta}{1-\delta}\zeta^{(2)}+\mu_X^{(2)}\big\}B^m+\frac{1}{2}\frac{1}{\vartheta}\frac{\delta}{1-\delta}\Gamma^{(2)}\Big]=0,
\end{align*}
where the last equality holds because of Equations (\ref{Riccati}) and (\ref{Integr}). It follows from Lemma \ref{LemmaKallsen} that process $\{\exp\{L^m(t)\}\}_{t\in[0,T]}$ and thus also process $\{\Phi^{m}\big(t,V^{m,\bar{\pi}^{m}}(t),X^m(t)\big)\}_{t\in[0,T]}$ are martingales. Application of Theorem \ref{ThVerifTimeDGen} completes the proof.\\
$\mbox{}$\halmos
\end{proof}
%----------------------------------------------------------------------
\addcontentsline{toc}{section}{Bibliography}
%\nocite{*}
\bibliographystyle{abbrvnat}
\bibliography{PortfOptimAffineMS}
\end{document}